%% file: efficient_separation.tex
\documentclass[11pt,letterpaper]{article}

\usepackage[margin=1in]{geometry}
\usepackage{setspace}
\usepackage{amsmath}
\usepackage{mathtools}
\usepackage{tikz}
\usepackage{amsfonts}
\usepackage{amsthm}
\usepackage{graphicx}
\usepackage{hyperref}
\usepackage[linesnumbered,ruled,vlined]{algorithm2e}
\usepackage{booktabs}
\usepackage{bm}
\usepackage{authblk}
\usepackage[numbers,sort&compress]{natbib}
\usepackage{amssymb}
\usepackage[T1]{fontenc}
\usepackage{xcolor}

\usepackage{lmodern}
\usepackage{microtype}
\usepackage{enumitem}

\newcommand{\mindisagree}{\textsc{MinDisagree}}
\newcommand{\1}{\mbox{1}\hspace{-0.25em}\mbox{l}}
\newcommand{\cost}{\operatorname{cost}}
\newcommand{\poly}{\operatorname{poly}}
\newcommand{\one}{\mathbf 1}
\newcommand{\OPT}{\operatorname{OPT}}

\newcommand{\CCMinRatio}{\textsc{CC-MinRatio}}

\newcommand{\ratioconst}{1.3865}    

\newtheorem{theorem}{Theorem}[section]
\newtheorem{lemma}[theorem]{Lemma}
\newtheorem{corollary}[theorem]{Corollary}
\newtheorem{proposition}[theorem]{Proposition}

\theoremstyle{definition}
\newtheorem{definition}[theorem]{Definition}
\theoremstyle{plain}

\theoremstyle{remark}
\newtheorem{remark}[theorem]{Remark}

\theoremstyle{plain}
\newtheorem{problem}[theorem]{Problem}

\newcommand{\mycomment}[1]{}

\title{Approximate Dual Separation for the Cluster LP: a $1.387$ approximation for Correlation Clustering}

\author[1]{David Garc\'ia-Soriano\thanks{david.garcia.soriano@upc.edu}}
\author[2]{Antoine Schohn\thanks{antoine.schohn@polytechnique.edu}}

\affil[1]{Universitat Polit\`ecnica de Catalunya, Barcelona, Spain}
\affil[2]{\'Ecole Polytechnique, Palaiseau, France}

\date{\today}

\begin{document}
\maketitle

\begin{abstract}
\input{abstract}
\end{abstract}
\thispagestyle{empty}

\clearpage
\pagenumbering{arabic}

\section{Introduction}

Correlation Clustering is perhaps the most natural formulation of clustering, given general pairwise similarity information between objects.
On a complete signed graph, with positive (resp., negative) edges representing similar (resp., dissimilar) pairs, it asks for a partition of the
vertices that minimizes the number of disagreements (\mindisagree\ objective): negative pairs placed in
the same cluster or positive pairs placed in different clusters.

The computational complexity of the problem and its applications to machine learning have been thoroughly investigated; see~\cite{Bonchi+22,cc_tutorial} and the references therein.
In its contemporary form, it was introduced by~\citet*{Bansal+04}, although early formulations date at least as far back
as the 1960s~\cite{zahn1964approximating,regnier1965quelques}. The problem is NP-hard~\cite{kvrivanek1986np,wakabayashi1986aggregation,Bansal+04}, motivating the search
 for approximation algorithms. The first polynomial-time constant-factor guarantee was given by~\citet*{Bansal+04}; \citet*{Charikar+05} later obtained a $4$-approximation. \citet*{Ailon+08}
 achieved ratio $3$ via a simple combinatorial algorithm,
 and gave an LP-based refinement with ratio $2.5$. Subsequently, \citet*{Chawla+15} achieved a $2.06$-approximation, near the factor-$2$ integrality gap of the metric LP
 relaxation of~\citep{Charikar+05}.

Very recently,
a breakthrough of
\citet*{Cohen-Addad+22} broke this apparent barrier, giving an
$(1.994+\varepsilon)$-approximation
via the Sherali-Adams hierarchy. This spurred a sequence of improvements:
\citet*{Cohen-Addad+23} reduced the ratio to
$1.73+\varepsilon$ and introduced the concept of preclustering, rooted in the technique of~\cite{cohen2021correlation}.
Preclustering identifies certain vertex pairs that can be assumed to be together (or separated) in a nearly-optimal solution, and sparsifies the graph.
Later \citet*{Cao+24}
introduced
the \emph{cluster LP}, an exponential-size relaxation with one variable for every possible cluster. Remarkably, the latter work succeeded in 
approximating the cluster LP,
and
derived the state-of-the-art
$(1.485+\varepsilon)$-approximation algorithm for \mindisagree\ (see also~\cite{Cao+24arXiv}).
Subsequent work gave, for fixed $\varepsilon$, $\widetilde O(n)$-time algorithms: a combinatorial $(1.847+\varepsilon)$-approximation~\cite{Cohen-Addad+24}, and a cluster-LP
implementation achieving $(1.485+\varepsilon)$~\cite{Cao+25}; these algorithms are sublinear in the size of the signed graph, $\Theta(n^2)$.
The proven guarantees of both cluster-LP solvers~\cite{Cao+24,Cao+25} compare the value of the computed primal solution with the integral (not fractional) correlation-clustering
optimum.

\subsection{Our contributions}\label{sec:contrib}
We give
the first efficient
weak separation algorithm for the cluster-LP dual.
It either finds a violated constraint or reports feasibility after
scaling the dual vector by
$1+\varepsilon$.
This
makes the cluster LP algorithmically accessible as a relaxation in its own right, so
we can approximate the \emph{fractional} cluster-LP optimum, produce a feasible dual solution certifying this approximation, and recover
a sparse feasible primal solution.
This differs from the proven guarantees of Cao et al.~\cite{Cao+24,Cao+25}, whose algorithms first precluster the instance and construct a feasible {primal} solution of value
at most $(1+\varepsilon)$ times the \emph{integral} optimum. They remark that their primal analysis should extend to the fractional optimum, but do not provide this extension;
separately, they state that they do not know how to solve the cluster LP by considering its dual~\cite[Section~1.1 and footnote~1]{Cao+24arXiv}. Our result supplies an explicit self-contained dual-based proof needing no preclustering.
\begin{theorem}[Approximate dual separation]\label{thm:dual-intro}
For every $\varepsilon>0$, there is a deterministic weak separation algorithm for the cluster-LP dual that either returns a violated constraint or
correctly reports that the given dual vector becomes feasible after division by $1+\varepsilon$. Its running time is $2^{\poly(1/\varepsilon)}(n+\langle q\rangle)^{O(1)}$. Thus, in time $2^{\poly(1/\varepsilon)}n^{O(1)}$ one can approximate the fractional cluster-LP optimum within a factor $1+\varepsilon$ and produce a feasible dual solution together with a feasible primal solution of polynomial support certifying this approximation.
\end{theorem}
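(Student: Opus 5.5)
The plan is to write the cluster LP and its dual explicitly and reduce dual separation to a structured optimization over single clusters. The primal has variables $z_S\ge 0$ for nonempty $S\subseteq V$, with constraints $\sum_{S\ni v} z_S=1$ for every $v$. The objective is $\sum_S z_S\,\cost(S)$, where $\cost(S)$ counts the negative pairs inside $S$ plus half of the positive pairs crossing $\partial S$. The induced $x_{uv}=1-\sum_{S\supseteq\{u,v\}}z_S$ linearizes this into per-pair costs. The dual has a variable $y_v$ per vertex, or more generally $q$-dependent multipliers; $\langle q\rangle$ is the encoding length of the queried dual point. Its constraints read $\sum_{v\in S} y_v\le \cost(S)$ for all $S$. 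Separation is therefore exactly the problem of deciding whether $\min_{S}\bigl(\cost(S)-y(S)\bigr)<0$. This is a quadratic pseudo-Boolean minimization over subsets, so it is NP-hard in general and we can only aim for weak separation. Concretely, we either exhibit $S$ with $\cost(S)<y(S)$, or certify $\cost(S)\ge y(S)/(1+\varepsilon)$ for every $S$.

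\textbf{Step 1 (localization).} First we normalize. We may assume $y_v\le \cost(\{v\})=\tfrac12\deg^+(v)$, since otherwise the singleton is already violated. A cluster $S$ can only be violated if it is "dense": most positive neighbours of its vertices lie inside $S$, and it has few internal negative pairs. We fix a pivot $r\in S$ (guessing over all $n$ choices). Every near-violated $S$ then has $|S|=\Theta(\deg^+(r))$ up to $\poly(1/\varepsilon)$ factors, and $S$ is essentially contained in $N^+(r)$ together with vertices having many positive neighbours in $N^+(r)$. This restricts the search to a candidate set $C_r$ of size $O(\deg^+(r)/\varepsilon^{O(1)})$ in which every vertex has degree comparable to $|S|$. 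The resulting subproblem is a dense quadratic problem.

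\textbf{Step 2 (dense-instance approximation).} On $C_r$, the objective $\cost(S)-y(S)$ is a quadratic form whose quadratic part has magnitude $\Theta(|S|^2)$, and we need additive error $\varepsilon\, y(S)=\Omega(\varepsilon|S|^2)$. This is the regime of dense MAX-CSP approximation. I would use the deterministic sampling/exhaustive-guessing scheme of Arora--Karger--Karpinski and Fernandez de la Vega, or a weak-regularity partition in the style of Frieze--Kannan. The procedure is to take a sample (derandomized by enumeration, or via a regularity partition into $2^{\poly(1/\varepsilon)}$ parts) and guess the intersection pattern of $S$ with it. That guess lets us estimate, for each vertex, its marginal contribution $\partial_v$ to the cost. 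We then include $v$ iff its estimated contribution $\cost$-marginal minus $y_v$ is negative, and compute the exact value of the resulting set. If some guess yields a truly violated set, we return it. Otherwise, the error analysis shows that the minimum is at least $-\varepsilon y(S)$ for every $S$. Rearranged, this says $y/(1+\varepsilon)$ satisfies every constraint up to the normalization. The guesses cost $2^{\poly(1/\varepsilon)}$, and everything else is polynomial in $n$ and in the bit length of $y$.

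\textbf{Step 3 (solving the LP).} Given the weak separation oracle, we run the ellipsoid method on the dual (maximize $\sum_v y_v$) in the standard Grötschel--Lovász--Schrijver fashion. Each query either returns a cut or certifies approximate feasibility, so we obtain $y^*$ whose scaling $y^*/(1+\varepsilon)$ is feasible and whose value is at least the LP optimum. The polynomially many violated constraints generated during the run define a restricted primal LP with polynomially many columns $z_S$. By LP duality, that restricted primal has value at most $\sum_v y^*_v\le(1+\varepsilon)\OPT$, which gives a feasible primal with polynomial support. Putting the pieces together, the running time is $2^{\poly(1/\varepsilon)}n^{O(1)}$.

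The main obstacle is Step 2 combined with Step 1. The additive sampling error must be charged against $\varepsilon\,y(S)$ rather than $\varepsilon n^2$. That requires the localization to produce an instance that is dense \emph{relative to $|S|$}, and it must also correctly handle vertices with large $y_v$ but few positive neighbours inside $S$. My first attempt would be to bucket vertices by degree into $\poly(1/\varepsilon)$-ratio classes and argue that a near-violated cluster uses only $O(\log(1/\varepsilon))$ consecutive classes. Failing that, I would use a Frieze--Kannan weak-regularity decomposition restricted to $C_r$, with error $\varepsilon|C_r|^2$.
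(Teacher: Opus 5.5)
There is a genuine gap in Step~2. Your claim that the available slack is $\varepsilon y(S)=\Omega(\varepsilon|S|^2)$ is false, and without it the dense-CSP or weak-regularity error $\varepsilon|C_r|^2\ge\varepsilon|S|^2$ can exceed the margin.

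For example, let $K$ be a positive clique on $s$ vertices with a single positive edge leaving it, and put $y_v=1/s$ on $K$ and $y_v=0$ elsewhere. No singleton is violated, and $\cost(K)=\tfrac12<y(K)/(1+\varepsilon)$ for $\varepsilon<1$, so the oracle must return a violated set. Yet the whole slack is $y(K)-\cost(K)=\tfrac12$, while Step~2 only guarantees a value within $\varepsilon s^2$ of the minimum.

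In the gap case the margin is only $\varepsilon\cost(S^*)$. So an additive error of order $\poly(\varepsilon)|S^*|^2$ is affordable only when $\cost(S^*)\ge\poly(\varepsilon)|S^*|^2$, which is exactly the paper's intermediate regime. The near-clique regime $\cost(S^*)\le\tau|S^*|^2$ needs a separate argument that your plan lacks:
\begin{itemize}
\item By $\sum_{v\in S^*}|\Gamma^+(v)\mathbin\triangle S^*|=2\cost(S^*)$, some neighbourhood $\Gamma^+(v)$ is within $2c/s$ of $S^*$.
\item The set $U=\{u:d_{\Gamma^+(v)}(u)\ge s/3\}$ then contains $S^*$, with $|U\setminus S^*|\le 8c/s$.
\item Over $s$-subsets of $U$, the Lagrangian $\cost-y$ is linear except for the term $-2e^+(D)$ of the deleted set $D$. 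Keeping the $s$ best-scoring vertices therefore loses at most $|U\setminus S^*|^2\le64c^2/s^2\le64\tau c<\varepsilon c$.
\end{itemize}

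Step~1 is also asserted rather than proved, and for arbitrary near-violated sets it is false. Take two positive cliques of sizes $2s/3$ and $s/3$ forming separate components, with $y_v=d(v)/2$. Their union is violated by a factor close to $5/4$, yet every pivot neighbourhood, together with its threshold neighbours, misses at least $s/3$ of its vertices.

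``Essentially contained'' does not rescue this. Discarding $S\setminus X$ can raise the cost by up to $|X|\,|S\setminus X|$, far more than the margin. Moreover, a single neighbourhood-plus-threshold step can miss vertices even of a cut-dense $S$: half density only controls the average number of neighbours that $S\setminus N^+(r)$ has in $N^+(r)$. And letting high-degree vertices of $N^+(r)$ vote can make $|C_r|$ unbounded in terms of $s$.

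The missing idea is to localize a \emph{minimizer} $S^*$ of $\cost(S)/y(S)$. Its optimality, together with $\cost(A)+\cost(B)=\cost(A\cup B)+2e^+(A,B)-|A||B|$, gives $e^+(A,B)\ge|A||B|/2$ for every split $S^*=A\mathbin{\dot\cup}B$. This cut density lets two threshold closures capture all of $S^*$ in a universe of size $O_\varepsilon(s)$. The closures are propagated only through low-degree eligible vertices, so double counting bounds the growth.

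The rest of your plan matches the paper. Your singleton normalization plays the role of the large-cost regime: it forces $\varepsilon\cost(S)<\binom{|S|}{2}$ for every $(1+\varepsilon)$-violated $S$. Step~3 is the paper's ellipsoid argument, with two corrections. The restricted primal must include all singleton columns. Its value bound comes from the same ellipsoid run remaining valid for the restricted dual, not from LP duality alone.
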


Moreover, we show that the use of weak separation is necessary unless $\mathrm{P}=\mathrm{NP}$.
   To the best of our
   knowledge, no hardness result for solving the fractional cluster LP has appeared in the literature.
\begin{theorem}[Hardness of exact separation and optimization]\label{thm:hardness-intro}
Given a complete signed graph $G=(V,E^+)$ and a threshold $T$,
deciding whether its fractional cluster-LP optimum is at most $T$ is NP-complete.
Furthermore, exact dual feasibility of a vector $q\in\mathbb Q^V$ is coNP-complete.
\end{theorem}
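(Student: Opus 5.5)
The plan is to prove the separation statement first and derive the optimization statement from it. Recall that for an instance $G=(V,E^+)$ the cluster LP minimizes $\sum_S \cost(S) z_S$ subject to $\sum_{S\ni v} z_S = 1$ for every $v$ and $z\ge 0$, where $\cost(S)=|E^-(S)|+\tfrac12|E^+(S,V\setminus S)|$. Its dual is therefore $\max \sum_v q_v$ subject to $q(S)\le \cost(S)$ for all $S\subseteq V$.

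\emph{Membership.} Dual infeasibility of $q$ has a short certificate, namely a set $S$ with $q(S)>\cost(S)$, which is checkable in polynomial time. So dual feasibility lies in coNP. For the optimization question ``fractional optimum $\le T$'', the certificate is a primal solution $z$ with $\sum_S\cost(S)z_S\le T$. By Carath\'eodory applied to the $n$ equality constraints plus the objective, we may take $z$ to be a vertex of the face $\{z:\sum_S \cost(S) z_S\le T\}$. Such a $z$ has support at most $n+1$ and rational entries of polynomial bit size, so the problem is in NP.

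\emph{coNP-hardness of dual feasibility.} I take $q_v\equiv c$ for a suitable rational $c$ and make the positive graph $d$-regular, padding with disjoint regular components if necessary. Then for $|S|=s$,
\[
\cost(S)=\tbinom{s}{2}-e^+(S)+\tfrac12\bigl(ds-2e^+(S)\bigr)=\tbinom{s}{2}+\tfrac{ds}{2}-2e^+(S).
\]
Hence $q$ is violated iff some $S$ satisfies
\[
2e^+(S)-\tbinom{s}{2}>\Bigl(\tfrac d2-c\Bigr)s .
\]
This asks whether the single-cluster gain, with $+1$ per positive pair inside $S$ and $-1$ per negative pair, minus a linear penalty $\lambda s$, can be positive. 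A clique of size $k$ achieves $\binom k2-\lambda k$, which is positive iff $k>2\lambda+1$.

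I will reduce from \textsc{Clique} on a graph $H$ with target $k$. The main obstacle is ensuring that large, dense non-cliques cannot beat the clique threshold. To handle this, I first amplify the penalty for missing edges by blowing each vertex of $H$ up into a positive clique of size $t$. A non-edge of $H$ between two included blobs then contributes about $-t^2$, while the linear penalty also scales with $t$. The analysis then has three steps. First, show that an optimal violating $S$ may be taken to be a union of whole blobs, by a local exchange argument: the marginal gain of a vertex is affine in how much of its blob is included. Second, show that for $t$ polynomially large, a union of blobs whose underlying vertex set contains a non-edge is dominated by removing one endpoint. Third, conclude that a violation exists iff $H$ has a $k$-clique for an appropriately tuned $c=c(k,t,d)$. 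If the blow-up exchange argument fails, the fallback is a reduction from the known NP-hard single-cluster correlation-clustering problem with a size penalty.

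\emph{NP-hardness of optimization.} The dual polyhedron $P=\{q: q(S)\le\cost(S)\}$ is full-dimensional and contains a ball around $-\mathbf 1$. By the Gr\"otschel--Lov\'asz--Schrijver equivalence, membership/separation in $P$ reduces in polynomial time to optimizing linear objectives $w\ge 0$ over $P$. An objective $w$ with $w\in\mathbb Z_{\ge0}^V$ corresponds to the primal LP with right-hand sides $w_v$. The remaining step is to show that these primal LPs are reducible to an ordinary cluster-LP instance. I would do this by replicating vertex $v$ into $w_v$ positive twins, so that the fractional optimum of the blown-up instance encodes the weighted one. The step to check is that twins can be assumed clustered together, which follows by symmetrization/averaging over the twin class. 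With that, the coNP-hardness of dual feasibility yields NP-hardness of deciding whether the fractional optimum is at most $T$.
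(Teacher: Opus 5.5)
Your membership arguments are fine: a violating set certifies dual infeasibility, and a vertex solution with support at most $n+1$ certifies the primal bound. Both hardness arguments, however, have genuine gaps.

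\textbf{The coNP-hardness reduction fails at its core step.} Blowing each vertex of $H$ up into a positive clique of size $t$ amplifies nothing, because edges and non-edges of $H$ are both scaled by $t^2$. For $S=\bigcup_{u\in T}B_u$ with $|T|=m$,
\[
e^+(S)-e^-(S)-\lambda|S|=m\binom t2+t^2\bigl(e_H(T)-\bar e_H(T)\bigr)-\lambda tm .
\]
Once $\lambda$ is tuned so that $k$-cliques pass, the test reads $\bigl(e_H(T)-\bar e_H(T)\bigr)/m>\lambda'$ for some $\lambda'<(k-1)/2$. That is the same question on $H$ itself, and large sets of edge density above $1/2$ pass it. For example, take the Tur\'an graph $T(N,k-1)$ with $k\ge4$ and $N$ a multiple of $k-1$ exceeding $3(k-1)$. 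It has no $k$-clique, yet $T=U$ gives $(e_H-\bar e_H)/N>N/6>(k-1)/2$, so a no-instance maps to a violated $q$. In this example, deleting an endpoint of a non-edge strictly decreases the objective. So your second step (``dominated by removing one endpoint'') is false. Your fallback ``known'' single-cluster problem is essentially the statement to be proved.

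The missing idea is a way to realize \emph{absent} (zero-weight) pairs inside a complete unweighted instance. The paper first reduces Vertex Cover to an incomplete star instance: $r$ is positive to all of $U$, edges of $J$ are negative, and all other pairs are absent. There, $q^{(k)}(\{r\}\cup I)-\cost(\{r\}\cup I)=k+\frac12-(n-|I|+e_J(I))$. It then replaces every absent pair by a Hadamard-based $\pm1$ pattern between positive-clique blocks of size $L\ge2(N-1)^2$. The pattern has zero row and column sums and spectral norm $\sqrt{2L}$. On unions of whole blocks it contributes only a linear term, cancelled by the shift $a_u/4$ in $\widehat q$. For partial blocks, the intra-block penalty $\frac{L^2}{2}x_u(1-x_u)$ dominates the discrepancy. (Regularity was never needed, since you may take $q_v=d(v)/2-\lambda$. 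In any case, adding disjoint components cannot make a non-regular graph regular.)

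\textbf{The route to NP-hardness of optimization also breaks, at the twin step.} The Gr\"otschel--Lov\'asz--Schrijver equivalence needs optimization of arbitrary $w\ge0$ over the \emph{same} dual polyhedron. That is the primal with right-hand sides $\sum_{S\ni v}z_S=w_v$. Replicating $v$ into $w_v$ twins does not produce this LP: it turns pair costs into products $w_uw_v$ instead of changing the right-hand side. Already for a single positive edge $uv$ with $w=(1,2)$, the weighted LP has value $\frac12$, since the constraints force $z_{\{v\}}\ge1$. The blown-up positive triangle has cluster-LP value $0$.

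Moreover, even a repaired version would only give a polynomial-time Turing reduction (the equivalence calls the oracle adaptively), not the many-one reduction behind NP-completeness. The paper avoids this detour by computing the LP value of the hard family exactly. The dual vector $\bar q_u=\frac12$, $\bar q_r=\tau(J)-\frac n2$ certifies $\OPT_{\mathrm{LP}}(H(J))=\tau(J)$. The balanced completion satisfies $\OPT_{\mathrm{LP}}(\widehat H)=L^2\bigl(\OPT_{\mathrm{LP}}(H)+|Z|/2\bigr)$. Hence $\tau(J)\le k$ if and only if $\OPT_{\mathrm{LP}}(\widehat H(J))\le L^2\bigl(k+\frac12(\binom n2-m)\bigr)$.
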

The situation is drastically different for incomplete graphs.
A polynomial-time constant-factor approximation to the cluster LP is ruled out under the unique games conjecture by~\cite{hardness_multicut}, because its integrality gap
is at most 2 (the analysis of cluster-based rounding carries over to incomplete graphs).
    On the other hand, an $O(\log n)$-factor approximation follows from the rounding algorithm of~\cite{Charikar+05,Demaine+06} for the metric LP, the value of which is no larger
    than that of
    the cluster LP.

Our last main result is a rounding algorithm for every feasible cluster-LP solution.

\begin{theorem}[Rounding the cluster LP]\label{thm:rounding-intro}
There is a polynomial-time deterministic rounding algorithm for the cluster LP with approximation factor $\ratioconst$.
\end{theorem}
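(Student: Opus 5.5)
The rounding will be a convex combination of two cluster-based pivot procedures, in the spirit of Cao et al.~\cite{Cao+24}, applied to an arbitrary feasible cluster-LP solution $(y_S)_S$. Since Theorem~\ref{thm:dual-intro} gives a primal solution of polynomial support, we may assume the support is polynomial. For every pair $uv$ write $x_{uv}=1-\sum_{S\ni u,v}y_S$. The LP cost is $\sum_{uv\in E^+}x_{uv}+\sum_{uv\in E^-}(1-x_{uv})$.

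The algorithm mixes the following procedures.
\begin{enumerate}[label=(\roman*)]
\item \emph{Cluster-based rounding.} Repeatedly pick a cluster $S$ with probability proportional to $y_S$, and output $S\cap V_{\mathrm{rem}}$ as a cluster.
\item \emph{Correlated pivot rounding.} Pick a random pivot $w$ and put each remaining $v$ into $w$'s cluster with probability $f^\pm(x_{wv})$, for suitable functions $f^+,f^-$. The joint inclusions are correlated through a shared threshold or through the cluster variables.
\end{enumerate}
We take procedure (i) with probability $p$ and procedure (ii) with probability $1-p$. The parameters $p$, $f^\pm$ and the thresholds are chosen numerically so that the factor $\ratioconst$ comes out.

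The analysis follows the standard triangle-based charging of Ailon et al.~\cite{Ailon+08} and Chawla et al.~\cite{Chawla+15}. Conditioned on a pivot step, a triple $\{u,v,w\}$ with pivot $w$ has an expected cost for edge $uv$ and an expected "deletion" of the edge. It then suffices to prove, for every triangle type (sign pattern) and every feasible triple of values $(x_{uw},x_{vw},x_{uv})$,
\[
\mathrm{cost}_w(uv)+\mathrm{cost}_u(vw)+\mathrm{cost}_v(uw)\le \ratioconst\,\bigl(\mathrm{lp}_w(uv)+\mathrm{lp}_u(vw)+\mathrm{lp}_v(uw)\bigr).
\]
The cluster LP gives more than the triangle inequality on these triples. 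The joint distribution of $(\one[u\in S],\one[v\in S],\one[w\in S])$ under $y$ is a genuine distribution over the $8$ patterns, so the feasible region is a polytope of "triangle distributions", parametrized by the masses on $\{uvw\},\{uv\},\{uw\},\{vw\}$ and singletons. I would reduce the verification to checking a finitely parametrized inequality over this polytope, via a case analysis on the sign pattern ($+++,++-,+--,---$). A fine grid plus Lipschitz bounds would certify each case rigorously, as in prior computer-assisted analyses. Derandomization is routine: with polynomial support, take the method of conditional expectations over pivot choices, and use a finite set of thresholds for the correlated step.

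The main obstacle is the $++-$ triangle. That is where pivot-based analyses lose and where the $2$ integrality gap of the metric LP lives. The key step is to exploit the cluster-LP correlation: for a $++-$ triangle, the mass of clusters containing $u,w$ but not $v$ is tied to $x_{uv}$. I would first try to choose $f^+$ piecewise linear with thresholds near $0.2$ and $0.6$, mixed with cluster rounding. Cluster rounding charges the $++-$ triangle essentially its LP cost, so the mix can be balanced. I would then optimize the mixing weight numerically to locate the worst case and confirm the constant $\ratioconst$. If pure triangle charging does not reach the constant, the fallback is a budget-based analysis per pair instead of per triangle, as in Cao et al.
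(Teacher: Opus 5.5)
Your plan has a genuine gap at its central step. The per-triangle inequality you display cannot hold at ratio $\ratioconst$ for any rule of the kind you describe. The same is true of your per-pair-budget fallback, as long as triangles and pairs are certified separately.

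Take $uv,uw\in E^+$ and $vw\in E^-$, with LP clusters $\{u,v\},\{u,w\},\{v\},\{w\}$ of weight $\tfrac12$ each. Then $y_{uv}=y_{uw}=\tfrac12$ and $y_{vw}=y_{uvw}=0$; in your notation $x_{uv}=x_{uw}=\tfrac12$ and $x_{vw}=1$. This is even an optimal LP solution for the triangle alone.
\begin{itemize}
\item With pivot $v$, the vertex $w$ must join with probability $0$, since the negative pair $vw$ has LP cost $0$.
\item Hence the positive pair $uw$ is cut exactly when it is decided, namely when $u$ joins, which happens with probability $p=f^+(\tfrac12)$.
\item Pivot $w$ is symmetric, and with pivot $u$ the opposite pair has zero LP charge.
\end{itemize}
Your inequality therefore demands $2p\le\ratioconst\cdot p$, which forces $p=0$. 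But then a positive pair of value $\tfrac12$ incident to the pivot is always cut, at ratio $2$.

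Charging instead against residual budgets $B^+(y)=(\alpha\ell^+(y)-\omega\kappa^+(y))/(1-\omega)$, left after cluster rounding of weight $\omega$, does not rescue a local certificate. The same computation forces $B^+(\tfrac12)\ge1$: from the triangle if $p>0$, and from the pair $uv$ if $p=0$. That is, $\alpha\ge2-\tfrac23\omega$. Positive pairs with $y\to1$, where $\kappa^+/\ell^+\to2$, force $\alpha\ge2\omega$. Together these give $\alpha\ge\tfrac32$. Correlating the joins through $S$ or a shared threshold does not help here either. The value $y_{uvw}=0$ is forced, and the deficit arises at pivots $v$ and $w$, where one of the two vertices never joins. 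So the local cluster-LP information you hope to exploit in $++-$ triangles is simply not there.

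The missing idea is a global aggregation step. The paper takes $\omega=\alpha_0/2$ with the residual budgets above. It proves only that the total one-round surplus $\sum_T D(T)+\sum_e D(e)$ over the alive set is nonnegative. It does this through pointwise bounds $D(T)\ge V(T)$ and $D(e)\ge V(e)$, whose right-hand sides sum to $\sum_{u\in W}\mathbb E_u[Z_u^2]\ge0$, where $Z_u=\sqrt{\rho}\sum_{v:\,uv\in E^+}y_{uv}(1-y_{uv})^2(\one[v\in S]-y_{uv})$.

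In the triangle above, pivot $u$ sees covariance $y_{uvw}-y_{uv}y_{uw}=-\tfrac14$. This makes $V(T)$ negative enough to absorb $D(T)=\lambda/3-1<0$, with $\lambda=\alpha_0/(2-\alpha_0)$, and the pair surpluses $D(e)\ge V(e)>0$ pay for it. Positive semidefiniteness of the membership covariances at each pivot is what stops one vertex from carrying many such triangles. For instance, the events $\{v_i\in S\}$ cannot be pairwise disjoint with probability $\tfrac12$ each for more than two $v_i$.

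Your proposal also leaves the rule and the constant to an unspecified numerical search. The paper's constant comes from three specific ingredients:
\begin{itemize}
\item a concrete conditional-affine rule: along positive pairs, members of $S$ always join and outsiders are rescued with probability $[\tfrac{13}{6}(y-\tfrac12)]_{[0,1]}$; negative pairs join independently with probability $\min\{\tfrac32y,2y-y^2\}$;
\item the explicit weights $\Theta_+(y)=\sqrt{\rho}\,y(1-y)^2$ and $\Theta_-=0$, with $\rho=\tfrac{1012}{25}$;
\item an exact Bernstein-basis verification of the finitely many resulting local inequalities.
\end{itemize}
Given these, derandomization is indeed routine: conditional expectations on the potential equal to the cost already incurred plus $\sum_{e\in\binom W2}B_e$, fixing in turn the pivot, the sampled cluster, and the coins.
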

As a consequence, the integrality gap of the cluster LP is at most $\ratioconst$; combined with the known lower bound of $4/3$~\cite{Cao+24}, it places the gap in the narrow
interval $(1.3333,\ratioconst]$.

\begin{corollary}
For every $\varepsilon>0$, there is a deterministic $(\ratioconst+\varepsilon)$-factor approximation algorithm for correlation clustering in time $2^{\poly(1/\varepsilon)}\poly(n)$.
\end{corollary}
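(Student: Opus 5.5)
The plan is to combine the two preceding results; no new machinery is needed. First I fix an $\varepsilon' = \Theta(\varepsilon)$. For instance $\varepsilon'=\varepsilon/3$, chosen so that $(1+\varepsilon')^2 \le 1+\varepsilon$ for $\varepsilon\le 1$.

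I run the $(1+\varepsilon')$-approximate cluster-LP solver of Theorem~\ref{thm:dual-intro}. In time $2^{\poly(1/\varepsilon')}n^{O(1)}=2^{\poly(1/\varepsilon)}n^{O(1)}$ it outputs a feasible primal solution $x$ of polynomial support with $\cost(x)\le(1+\varepsilon')\,\mathrm{LP}^\ast$. Here $\mathrm{LP}^\ast$ is the fractional cluster-LP optimum. Since the cluster LP relaxes correlation clustering, $\mathrm{LP}^\ast\le\OPT$.

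Next I feed $x$ into the deterministic rounding algorithm of Theorem~\ref{thm:rounding-intro}. I need to check one point. That theorem is stated as a polynomial-time procedure for every feasible solution, and "polynomial time" must be measured in the size of the explicit support of $x$. This is exactly why the polynomial-support guarantee in Theorem~\ref{thm:dual-intro} matters. The support has $n^{O(1)}$ clusters, each with rational weights of polynomial bit-length, so the rounding runs in $\poly(n)$ time.

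The rounding yields an integral clustering of cost at most $\ratioconst\cdot\cost(x)$. This is at most $\ratioconst(1+\varepsilon')\OPT \le (\ratioconst+\varepsilon)\OPT$, since $\ratioconst\,\varepsilon/3<\varepsilon$. Both stages are deterministic, and the total running time is $2^{\poly(1/\varepsilon)}\poly(n)$.

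The only potential obstacle is this interface issue. The rounding theorem must accept the sparse primal as input, and its factor $\ratioconst$ must be against the cost of that particular $x$, not against some other LP value. I take both as given by the statement of Theorem~\ref{thm:rounding-intro}, which rounds "every feasible cluster-LP solution". If it only applied to optimal solutions, I would instead argue that its analysis is per-solution, bounding the expected cost termwise by $\ratioconst\cdot\cost(x)$.
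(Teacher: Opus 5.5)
Your proposal is correct and is essentially the argument the paper intends: take the $(1+\varepsilon')$-approximate, polynomial-support primal solution $z$ of Corollary~\ref{cor:lp-solutions}, feed it to the rounding of Theorem~\ref{thm:rounding} derandomized as in Proposition~\ref{prop:rounding-derandomization}, and chain the inequalities. That derandomized guarantee, $\cost(\mathcal A)\le\ratioconst\sum_S\cost(S)z_S$, is stated for the particular explicit feasible $z$, so $\cost(\mathcal A)\le\ratioconst(1+\varepsilon/3)\OPT_{\mathrm{LP}}\le(\ratioconst+\varepsilon)\OPT$ exactly as you wrote. (Your remark that $(1+\varepsilon')^2\le1+\varepsilon$ is never actually used, but it is harmless.)
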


Our rounding can be combined with either cluster-LP solver, with different benefits. Together with
    our first result, 
    it gives an instance-dependent certificate: the ratio between the cost of the clustering returned
    and the value of the feasible dual solution is $\le \ratioconst+\varepsilon$.
Alternatively, the primal solution produced by Cao et al.~\cite{Cao+25} 
can be fed to our rounding algorithm, giving a $(\ratioconst+\varepsilon)$-approximation against the integral optimum in time $\widetilde O(2^{\poly(1/\varepsilon)}n)$ with adjacency-list queries.
This faster composition cannot provide the per-instance primal--dual certificate:
weak separation requires $\Omega(n^2)$ adjacency-list queries, even for non-negative dual vectors $q$ (see Appendix~\ref{sec:query_lb}).\
\footnote{This does not contradict the sublinear covering-MWU primitive discussed by Cao et al.~\cite{Cao+25}, 
designed for a shifted dual, and
whose guarantee is relative to an integral clustering.}


\paragraph{Additional applications.} As our first application, we formulate a natural semi-supervised clustering method with seed labels. In \emph{seeded correlation clustering}, 
the input contains the signed graph and an arbitrary set $R$ of seeds. 
Each seed acts as the representative of a distinct cluster, but we do not assume that all ground-truth clusters have an associated seed.
We require that every cluster output contain at most one seed; our objective is
still to minimize the number of disagreements. 
We compute a $(1+\varepsilon)$-approximate primal solution to the analogous cluster LP in $2^{\poly(1/\varepsilon)}n^{O(1)}$ time.  Combined with the rounding theorem
of~\citet{towards2} for constrained correlation clustering, this gives a randomized $(1.92+\varepsilon)$-approximation for the seeded problem. 

Our second application implements the cluster-insertion search from the 
combinatorial local-search framework of~\citet{Cohen-Addad+24}, via a direct reduction to weak separation.  
Although
the approximation guarantee for correlation clustering obtained through this route is worse ($\approx 1.847$), 
    this illustrates the power of localization to optimize an exponentially large search. 

\subsection{Organization of the paper}
Section~\ref{sec:dual} defines the cluster LP and its dual, isolates the
problem \CCMinRatio, and states the separation theorem
(Theorem~\ref{thm:main}, the formal version of Theorem~\ref{thm:dual-intro}) and its consequences. 
Section~\ref{sec:overview} gives a technical overview of our algorithmic results. It sketches the proof of Theorem~\ref{thm:main} (the reduction to detecting a negative
        Lagrangian value, the structural properties of an optimal set, and the three-regime case analysis), describes our conditional-affine rounding rule and its variance
certificate, and sketches the hardness proof and our two applications.

The two extreme regimes in the separation oracle are handled in
Section~\ref{sec:extreme-regimes};  Section~\ref{sec:intermediate} treats
the remaining one. Section~\ref{sec:together} combines the three regimes into the proof of
Theorem~\ref{thm:main}.

Section~\ref{sec:rounding} presents our rounding scheme: the two rounding
procedures, our new pivot rule and its closed form,
    and its analysis via the  simple
variance certificate underlying
the $\ratioconst$ approximation factor  in Theorem~\ref{thm:rounding} (the formal version of Theorem~\ref{thm:rounding-intro}).

Finally, Appendix~\ref{app:hardness} proves the hardness result (Theorem~\ref{thm:hardness-intro}) and a query lower bound for weak separation, while Appendices~\ref{app:seeded-cc}
and~\ref{app:cluster-insertion} detail our two further applications of the separator.
Appendix~\ref{app:rounding-derandomization} makes the derandomization of cluster and pivot rounding explicit, and Appendix~\ref{app:covering-gap} shows an example where the cluster LP differs from its covering relaxation.

\subsection{Further related work}
   The aforementioned work of~\citet{Cao+24} showed NP-hardness of approximation for \mindisagree{} up to a factor of $24/23$ under randomized reductions.
   The weighted version, in which pairs carry arbitrary
   nonnegative weights, is equivalent to minimum multicut and is $O(\log n)$-factor approximable, as observed by~\citet*{Charikar+05} and~\citet*{Demaine+06}.  Assuming the Unique Games Conjecture, no polynomial-time $O(1)$-factor approximation algorithm exists in this case~\cite{hardness_multicut}.

~\citet{active_queries} gave a query-efficient algorithm where pairwise similarities can be consulted; their running time is exponential in $n$.
A sublinear-time approximation algorithm with a nearly optimal mixed multiplicative-additive guarantee was designed by~\citet{local_corr} and simplified in~\cite{Garcia-Soriano+20}; see also~\cite{Kuroki+24,cold_start}.
The work of~\cite{Cao+25,Cohen-Addad+24} gave sublinear-time algorithms with multiplicative guarantees, assuming the stronger adjacency-list oracle.

Many  other variants have been investigated, including fixed numbers of clusters~\cite{Giotis+06_j}, maximizing
disagreements~\cite{Charikar+05,Swamy04}, constrained cluster sizes~\cite{Puleo+15},
    overlapping clusters~\cite{Bonchi+13}, edge-labeled graphs~\cite{Bonchi+15}, multilayer networks~\cite{Miyauchi+26}, hypergraphs~\cite{Kim+11,Li+17},
    and streaming settings~\cite{Ahn+15,Chierichetti+14}.

    Finally, we note that equivalent problems have been researched under names such as \emph{cluster editing}~\citep{bocker13cluster} and \emph{clique
        partitioning}~\cite{cutting_plane,grotschel1990facets}.

\section{The cluster LP and its separation problem}\label{sec:dual}
                                                                  We encode
the positive pairs by an unweighted graph $G=(V,E^+)$ and write
$E^-=\binom{V}{2}\setminus E^+$ for the complementary set of negative pairs.
For $\sigma\in\{+,-\}$, let $e^\sigma(S)$ denote the number of pairs of
$E^\sigma$ with both endpoints in $S$, and, for disjoint $A,B\subseteq V$,
let $e^\sigma(A,B)$ denote the number with one endpoint in each set.
All degrees refer to the positive-edge graph $G=(V,E^+)$:
$d_A(u)=e^+(\{u\},A\setminus \{u\})$ denotes the number of positive neighbors of $u$ in $A$, and
$d(u)=d_V(u)$ is its positive degree.
For $u\in V$, its \emph{closed positive neighborhood} is
$\Gamma^+(u)=\{u\}\cup\{v\in V:uv\in E^+\}$.
Define

\begin{equation}\label{eq:pseudo-boolean}
  \cost(S)=e^-(S)+\frac12e^+(S,V\setminus S)
  =\binom{|S|}{2}+\frac12\sum_{u\in S}d(u)-2e^+(S).
\end{equation}
This is precisely the contribution of a cluster $S$ to the correlation clustering objective (when minimizing disagreements):
an internal negative pair is counted once, while the cost of an external positive pair
is split between the two clusters it crosses.
The cluster LP and its dual are
{\small
\begin{center}
\fbox{\begin{tabular}{@{}p{0.46\textwidth}@{\hspace{0.6em}\vrule\hspace{0.6em}}p{0.46\textwidth}@{}}
    \vspace{-0.5cm}
\begin{minipage}[t]{\linewidth}
\begin{equation}\label{eq:cluster-primal}
\begin{aligned}
  \text{minimize}\quad
  &\sum_{S\subseteq V}\cost(S)z_S,\\
  \text{subject to}\quad
  &\sum_{S\ni u}z_S=1 &&\forall u\in V,\\
  &z_S\ge0 &&\forall S\subseteq V.
\end{aligned}
\end{equation}
\end{minipage}
&
    \vspace{-0.5cm}
\begin{minipage}[t]{\linewidth}
\begin{equation}\label{eq:cluster-dual}
\begin{aligned}
  \text{maximize}\quad &\sum_{u\in V}q_u,\\
  \text{subject to}\quad &\sum_{u \in S} q_u \le\cost(S) &&\forall S\subseteq V,\\
  &q_u\in \mathbb{R}&&\forall u\in V.
\end{aligned}
\end{equation}
\end{minipage}
\end{tabular}}
\end{center}
}
These are not a covering/packing pair of problems; the coordinates of $q$ are unrestricted because the corresponding constraints in \eqref{eq:cluster-primal} are equalities.
This distinction is important: Appendix~\ref{app:covering-gap} gives an eight-vertex instance 
where every optimal dual solution has a negative coordinate.

Our main result is a weak separation oracle,
    in
the form needed for weak optimization using the ellipsoid algorithm for linear  programming~\cite{ellipsoid}.

\begin{theorem}[Weak separation]\label{thm:main}
There is a deterministic algorithm which, given an
$n$-vertex graph $G=(V,E^+)$ and $q\in\mathbb Q^V$, $\varepsilon > 0$,
either returns a set $T\subseteq V$ with
    $
  q(T)>\cost(T),
  $
or reports that $q/(1+\varepsilon)$ is feasible for the dual
\eqref{eq:cluster-dual}.
It
runs in time
$
  2^{\poly(1/\varepsilon)}(n+\langle q\rangle)^{O(1)}.
  $
\end{theorem}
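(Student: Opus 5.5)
We reduce the separation problem to minimizing a Lagrangian. The dual constraint for $S$ fails exactly when $q(S)>\cost(S)$. Deciding whether $q/(1+\varepsilon)$ is feasible amounts to checking, up to a factor $1+\varepsilon$, whether the minimum over nonempty $S$ of $\cost(S)-\lambda q(S)$ is nonnegative, for $\lambda$ in a suitable range. The target quantity is the ratio problem \CCMinRatio: decide whether $\min_{S:\,q(S)>0}\cost(S)/q(S)<1$. If the minimum is below $1$, we must output a witness $T$ with $q(T)>\cost(T)$. If it is at least $1/(1+\varepsilon)$, we report feasibility of $q/(1+\varepsilon)$. Sets with $q(S)\le 0$ never violate a constraint, since $\cost(S)\ge 0$.

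First we preprocess. If some singleton has $q_u>\cost(\{u\})=\tfrac12 d(u)$, we return it. For sets containing $u$, the contribution $q_u$ is compared with the marginal cost of adding $u$. Using \eqref{eq:pseudo-boolean}, this marginal cost is $|S|+\tfrac12 d(u)-2d_S(u)$. Vertices with very negative $q_u$ can therefore be excluded from any optimal $S$ once $|S|$ is bounded.

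Next we establish structural properties of an optimal (minimum-ratio, or minimum Lagrangian) set $S^*$, via local exchange arguments. Every $u\in S^*$ has $d_{S^*}(u)$ at least roughly $(|S^*|-O(q_u+\cdot))/2$, because otherwise removing $u$ improves the ratio. Symmetrically, vertices outside $S^*$ have few neighbors inside it. It follows that $S^*$ is dense: its members have positive degree comparable to $|S^*|$. This turns $S^*$ into a near-clique-like object, which we locate by sampling, approximately as in the regularity and PTAS approaches to dense max-cut.

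We then split into three regimes according to $k=|S^*|$ relative to the degrees. In the first regime $k\le \poly(1/\varepsilon)$, or more generally $S^*$ is tiny relative to the relevant neighborhoods, and we enumerate directly. Combined with the degree bound, $S^*$ lies inside $\Gamma^+(u)$ for some anchor $u$ up to $O(\varepsilon k)$ vertices, so we can guess a seed of $\poly(1/\varepsilon)$ vertices. In the opposite regime, where $k$ is a constant fraction of $n$ or of the degrees of its members, we use the standard dense-instance technique. We guess a random-like seed $R\subseteq S^*$ of size $\poly(1/\varepsilon)$, over $n^{\poly(1/\varepsilon)}$ choices, or derandomized to $2^{\poly(1/\varepsilon)}\poly(n)$ by guessing degree profiles. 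We estimate $d_{S^*}(v)$ for every $v$ from $R$, and take the greedy set of vertices whose estimated marginal value $q_v-(k+\tfrac12 d(v)-2d_{S^*}(v))$ is positive. Because the objective is quadratic with error $\varepsilon k^2$, and $\cost(S^*)\gtrsim \varepsilon$-fraction of $k^2$ or else the ratio is far below $1$, this additive error becomes multiplicative. To get running time $2^{\poly(1/\varepsilon)}$ instead of $n^{\poly(1/\varepsilon)}$, we anchor on a single vertex $u\in S^*$ and guess the seed's pattern only within $\Gamma^+(u)$, localized to a set of size $O(k)$. We then apply a weak-regularity partition of that local set with $2^{\poly(1/\varepsilon)}$ parts, and guess the densities of $S^*$ within each part. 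The ratio is scale-invariant in a rational parameter, so we binary search over $\lambda$ in $\poly(\langle q\rangle)$ steps. This accounts for the $\langle q\rangle$ term in the running time.

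The main obstacle is the intermediate regime: $S^*$ is too large to enumerate, but is a vanishing fraction of its members' neighborhoods, so neither seed sampling within $\Gamma^+(u)$ nor enumeration gives error $\varepsilon\cost(S^*)$ directly. There $\cost(S^*)$ may be much smaller than $k^2$, because $S^*$ is nearly a clique with few outgoing edges. Additive $\varepsilon k^2$ errors then do not suffice. I would first try to charge errors to $\tfrac12 e^+(S^*,V\setminus S^*)$ plus $e^-(S^*)$ separately. If $\cost(S^*)\ll\varepsilon k^2$, then $S^*$ is an almost-clique, nearly identical to $\Gamma^+(u)$ for typical $u\in S^*$. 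We could then recover it by iterated local cleanup starting from $\Gamma^+(u)$: repeatedly remove vertices with fewer than $(1-\varepsilon)$-fraction degree inside, and add outside vertices with many neighbors. The goal would be to show, via the exchange inequalities, that this cleanup loses only $O(\varepsilon)\cost(S^*)$. Combining the three regimes, we output the best $T$ found. If some $T$ has $q(T)>\cost(T)$, we return it. Otherwise every guess is within $1+\varepsilon$ of optimal in its regime, which certifies that $q/(1+\varepsilon)$ is feasible.
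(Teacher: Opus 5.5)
Your reduction to \CCMinRatio{} and the final bookkeeping match the paper: whenever $\rho^*<1/(1+\varepsilon)$, some candidate must have $q(T)>\cost(T)$. The regime analysis, however, has genuine gaps. The root cause is that you split on $|S^*|$ versus $n$ or the degrees, whereas the required accuracy is $\varepsilon\cost(S^*)$, so the relevant parameter is $\cost(S^*)/|S^*|^2$.

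\textbf{The regime you call the main obstacle is internally inconsistent, and the easy regime is missed.} If $S^*$ is a vanishing fraction of its members' neighborhoods, then $\sum_{v\in S^*}|d(v)-(|S^*|-1)|\le2\cost(S^*)$ forces $\cost(S^*)\gg|S^*|^2$. So $S^*$ cannot at the same time be a near-clique with few outgoing edges. This high-cost case is actually the easy one. Applying optimality to positive-weight singletons gives $\rho^*q^+(S^*)\le\binom{|S^*|}{2}+\cost(S^*)$, so some singleton has ratio at most $\rho^*\bigl(1+|S^*|^2/(2\cost(S^*))\bigr)$.

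\textbf{Small-cost sets are not handled.} Your claim that additive error $\varepsilon k^2$ becomes multiplicative because ``$\cost(S^*)\gtrsim\varepsilon k^2$ or else the ratio is far below $1$'' is false. Take an isolated clique on $k\ge4$ vertices minus one edge, with $q\equiv2/k$ on it and $0$ elsewhere. Then $\rho^*=1/2$, $\cost(S^*)=1$, and the margin is $|L(S^*)|=1\ll\varepsilon k^2$. In this regime the witness must be recovered almost exactly, since the margin can be $O(\varepsilon)$ in absolute terms. Your degree-based cleanup is only stated as a goal. It also ignores $q$, and a member of $S^*$ may have only $(|S^*|-1)/2$ inner neighbors, so a $(1-\varepsilon)$ threshold would discard it. The paper proceeds as follows:
\begin{itemize}
\item pick a pivot $v$ with $|\Gamma^+(v)\mathbin\triangle S^*|\le2\cost(S^*)/s$, which exists by the identity $\sum_{v\in S^*}|\Gamma^+(v)\mathbin\triangle S^*|=2\cost(S^*)$;
\item take $U=\{u:d_{\Gamma^+(v)}(u)\ge s/3\}\supseteq S^*$, which satisfies $|U\setminus S^*|\le8\cost(S^*)/s$;
\item minimize exactly the linear part of the fixed-cardinality Lagrangian; the neglected term costs at most $2e^+(U\setminus S^*)\le|U\setminus S^*|^2=O(\cost(S^*)^2/s^2)$.
\end{itemize}

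\textbf{The proposal never builds a universe of size $O_\varepsilon(|S^*|)$ containing $S^*$.} This is what makes weak regularity usable, and it is the most important gap. ``Anchor on $u\in S^*$ and localize within $\Gamma^+(u)$ to a set of size $O(k)$'' fails as stated. Half-density only gives $|\Gamma^+(u)\cap S^*|\ge(|S^*|+1)/2$, and $|\Gamma^+(u)|$ need not be $O(k)$. Seed-guessing over all of $V$ costs $n^{\poly(1/\varepsilon)}$, and your ``degree-profile'' derandomization is unspecified. Global regularity has error $\varepsilon n^2$, which is useless when $n\gg k$. Two ideas are missing:
\begin{itemize}
\item The exact cut density $e^+(A,B)\ge|A||B|/2$ for every partition $S^*=A\mathbin{\dot\cup}B$. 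It follows by applying $\cost(\cdot)\ge\rho^*q(\cdot)$ to both $A$ and $B$ together with the partition identity. A per-vertex bound depending on $q_u$, as you propose, does not suffice.
\item Threshold localization. When $\cost(S^*)=O(s^2/\varepsilon)$, all but $s/100$ vertices of $S^*$ lie in the low-degree set $L_s$ of vertices with degree $O(s/\varepsilon)$. Start from $\Gamma^+(v)$ for $v\in S^*\cap L_s$ and apply two closures $U\mapsto U\cup\{u:d_{U\cap L_s}(u)\ge s/20\}$. Cut density makes the first closure shrink the hidden remainder of $S^*$ by a constant factor, after which every remaining vertex of $S^*$ crosses the threshold in the second. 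Double counting edge incidences from low-degree centers keeps $|U|=O_\varepsilon(s)$.
\end{itemize}

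\textbf{Two smaller points.} Guessing densities of $S^*$ in each of the $2^{\poly(1/\varepsilon)}$ regularity parts costs $2^{2^{\poly(1/\varepsilon)}}$. The paper avoids this by enumerating boxes for the $\poly(1/\xi)$ cut-set counts and solving a parameter-sized LP for the linear term in each box. Also, ``vertices outside $S^*$ have few neighbors inside it'' is false for individual vertices: a vertex with very negative $q$ can be adjacent to all of $S^*$. Only the aggregate bound $e^+(S^*,V\setminus S^*)\le2\cost(S^*)$ holds.
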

Notice that, were $\cost()$ submodular, minimization of $\cost(S) - q(S)$ would yield an exact dual separation oracle. However, this does not hold due to the
supermodular term $e^-(S)$ in~\eqref{eq:pseudo-boolean} (the remaining term being a submodular cut function).

It is convenient to consider the underlying cost ratio problem.  For a
signed vector $q\in\mathbb Q^V$, put
$q^+(A)=\sum_{u\in A} \max\{q_u,0\}.$
Assume that $q^+(V)>0$ (i.e., some vertex has positive $q$-weight).

\begin{problem}[\CCMinRatio]
\label{prob:cc-minratio}
Given $G=(V,E^+)$ and a signed vector $q\in\mathbb Q^V$ with $q^+(V)>0$, find a
set $S\subseteq V$ with $q(S)>0$ minimizing
    $
  \rho(S)=\frac{\cost(S)}{q(S)}.
  $
\end{problem}

We denote the optimum value of \CCMinRatio{} by
$
  \rho^*=\min_{S:\,q(S)>0}\rho(S).
  $

\begin{corollary}[EPTAS for \CCMinRatio]\label{cor:cc-minratio-eptas}
For every $\varepsilon>0$, there is an algorithm which, given an
$n$-vertex graph $G=(V,E^+)$ and $q\in\mathbb Q^V$ with $q^+(V)>0$, returns a set
$T\subseteq V$ with $q(T)>0$ and
$
  \rho(T)\le(1+\varepsilon)\rho^*
  $
in time
$2^{\poly(1/\varepsilon)}(n+\langle q\rangle)^{O(1)}$.
\end{corollary}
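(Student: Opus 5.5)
We derive the corollary from Theorem~\ref{thm:main} by a search over the ratio $\lambda$, using the weak separation oracle as a decision procedure. The key observation is that the constraints of \eqref{eq:cluster-dual} are homogeneous in $q$. So for $\lambda>0$, the vector $\lambda q$ is dual feasible if and only if $\lambda q(S)\le\cost(S)$ for all $S$. This is equivalent to $\lambda\le\rho(S)$ for every $S$ with $q(S)>0$, since sets with $q(S)\le 0$ impose nothing because $\cost(S)\ge0$. Hence $\lambda q$ is feasible exactly when $\lambda\le\rho^*$.

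\textbf{Oracle behaviour.} Run the oracle of Theorem~\ref{thm:main} on $\lambda q$ with accuracy $\varepsilon'=\Theta(\varepsilon)$.
\begin{itemize}
\item If it returns $T$ with $\lambda q(T)>\cost(T)$, then $q(T)>0$, because $\cost(T)\ge 0$ and $\lambda>0$. This certifies $\rho(T)<\lambda$.
\item If it reports that $\lambda q/(1+\varepsilon')$ is feasible, then $\rho^*\ge\lambda/(1+\varepsilon')$.
\end{itemize}

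\textbf{Search range.} Take $S=\{u\}$ for a vertex with $q_u>0$. Then $\rho(S)=\cost(\{u\})/q_u=d(u)/(2q_u)$, which gives an upper bound $\rho^*\le d(u)/(2q_u)$. One candidate set is kept from the start:
\begin{itemize}
\item if some such $u$ has $d(u)=0$, then $\rho^*=0$ and we return $\{u\}$ immediately;
\item otherwise $\cost(S)\ge 1/2$ for every nonempty $S$, because $\cost$ is a half-integer. Since $q(S)\le q^+(V)$, we get $\rho^*\ge 1/(2q^+(V))>0$.
\end{itemize}
So $\rho^*$ lies in an interval $[L,U]$ with $\log(U/L)=O(\log n+\langle q\rangle)$.

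\textbf{Search procedure.} Binary search over the geometric grid $\lambda_i=L(1+\varepsilon')^i$. This takes $O(\log(\log(U/L)/\varepsilon'))$ oracle calls, which is polynomial in $n$, $\langle q\rangle$ and $1/\varepsilon$. Maintain the invariant that we hold a set $T$ with $\rho(T)\le\lambda_{hi}$ and that $\lambda_{lo}\le(1+\varepsilon')\rho^*$.
\begin{itemize}
\item A returned violated set lowers $\lambda_{hi}$, and we keep that set.
\item A feasibility report raises $\lambda_{lo}$.
\end{itemize}
The search stops once $\lambda_{hi}\le(1+\varepsilon')\lambda_{lo}$. The set in hand then satisfies
\[
\rho(T)\le(1+\varepsilon')^2\rho^*\le(1+\varepsilon)\rho^*
\]
for $\varepsilon'=\varepsilon/3$. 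The initial singleton supplies $T$ at $\lambda_{hi}=U$.

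\textbf{Running time and bit lengths.} Each grid point $\lambda_i$ must have polynomial encoding length for the oracle's running-time bound to stay polynomial. To ensure this, round $\lambda$ to $O(\log(1/\varepsilon')+\log(U/L))$ bits, which costs only a further $(1+\varepsilon')$ factor. Alternatively, search directly over the exponent $i$ and represent $(1+\varepsilon')^i$ by a nearby dyadic rational. The total running time is then $2^{\poly(1/\varepsilon)}(n+\langle q\rangle)^{O(1)}$.

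The only real subtlety is this encoding-length bookkeeping. The rest is the homogeneity argument above.
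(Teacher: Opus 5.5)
Your route is the paper's: apply Theorem~\ref{thm:main} to $\lambda q$ and run a multiplicative binary search on $\lambda$. Your invariant-based bookkeeping is sound once a valid positive lower bound on $\rho^*$ is in hand, and it needs no monotonicity of the weak oracle.

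\textbf{The gap is the zero-optimum case.} You claim that if no positive-weight vertex is isolated, then $\cost(S)\ge\tfrac12$ for every nonempty $S$. This is false. Half-integrality only gives $\cost(S)\in\{0,\tfrac12,1,\dots\}$, and $\cost(S)=0$ holds exactly when $S$ has no internal negative pair and no positive edge leaving it, i.e., when $S$ is a connected component of $G$ that is a clique.

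For example, let $ab$ be an isolated positive edge with $q_a,q_b>0$. Then $\rho^*=\rho(\{a,b\})=0$ although $d(a)=d(b)=1$. Your test does not fire, and $L=1/(2q^+(V))$ is not a lower bound.

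This breaks the corollary. When $\rho^*=0$, the guarantee $\rho(T)\le(1+\varepsilon)\rho^*$ requires a set of cost exactly $0$. With a black-box weak separator, every query at $\lambda>0$ returns a violation, so $\lambda_{lo}$ stays at $L$. Your search then stops with $\lambda_{hi}\le(1+\varepsilon')L$, possibly holding a set of positive ratio. Concretely, add a component $c\,d\,e$ (a positive path) with $q_c=q_d=1$, $q_e<0$, and take $q_a=q_b<\varepsilon'$. Then $T'=\{c,d\}$ has $\cost(T')=\tfrac12$ and $\rho(T')<(1+\varepsilon')L$. The oracle may legitimately return $T'$ at every grid point above $L$, and you output $T'$ with $\rho(T')>0=\rho^*$.

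\textbf{The fix} is what the paper means by recognizing the zero-optimum case directly. Scan the connected components of $G$; if some component $K$ is a clique with $q(K)>0$, return $K$. Otherwise every $S$ with $q(S)>0$ has $\cost(S)\ge\tfrac12$ and $q(S)\le q^+(V)$. Hence $\rho^*\ge1/(2q^+(V))$, and the rest of your argument goes through.

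A purely black-box alternative is to make the first query at $\lambda=1/(2q^+(V))$:
\begin{itemize}
\item A returned violated $T$ satisfies $\cost(T)<q(T)/(2q^+(V))\le\tfrac12$, hence $\cost(T)=0$.
\item A feasibility report gives the valid bound $\rho^*\ge1/(2(1+\varepsilon')q^+(V))$.
\end{itemize}

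A minor point: with the extra rounding factor, your total is $(1+\varepsilon')^3$, which exceeds $1+\varepsilon$ for $\varepsilon'=\varepsilon/3$. Choose, e.g., $\varepsilon'=\min\{\varepsilon,1\}/10$.
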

Corollary~\ref{cor:cc-minratio-eptas} follows by applying Theorem~\ref{thm:main} to the rescaled vector $\lambda q$ and performing multiplicative binary search on $\lambda$; the
zero-optimum case (a union of cliques) is recognized directly.  

\begin{corollary}[Approximate primal and dual solutions]\label{cor:lp-solutions}
For every $\varepsilon>0$, there is an algorithm running in time
$2^{\poly(1/\varepsilon)}n^{O(1)}$ that
computes both a feasible vector $q$ for~\eqref{eq:cluster-dual} and a
feasible solution $z$ of~\eqref{eq:cluster-primal} of polynomial support such
that
\[
  \sum_{u\in V}q_u\ge\frac{\OPT_{\mathrm{LP}}}{1+\varepsilon}
  \qquad\text{and}\qquad
  \sum_S\cost(S)z_S\le(1+\varepsilon)\OPT_{\mathrm{LP}}.
\]
Here $\OPT_{\mathrm{LP}}$ is the common optimum value of
\eqref{eq:cluster-primal} and~\eqref{eq:cluster-dual}.
\end{corollary}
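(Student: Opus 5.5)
The plan is to run the ellipsoid method on the dual \eqref{eq:cluster-dual}, using Theorem~\ref{thm:main} as a weak separation oracle, and then recover a sparse primal from the constraints that the run generated.

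\emph{Step 1: bound the search region.} The dual objective is at most $\OPT_{\mathrm{LP}}\le\binom n2$, since the all-singletons clustering has cost at most $\binom n2$. Each coordinate satisfies $q_u\le\cost(\{u\})=d(u)/2\le n$. The coordinates are not bounded below a priori, but this can be handled. Any optimal vertex of the dual is determined by $n$ tight constraints with $0/1$ coefficient rows and integer (or half-integer) right-hand sides. By Cramer's rule its entries have encoding length $\poly(n)$, so we may add the box $|q_u|\le 2^{\poly(n)}$ without changing the optimum.

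\emph{Step 2: weak optimization.} Binary search over a target value $t$. Run the ellipsoid method on the polytope $\{q: q(V)\ge t,\ q(S)\le\cost(S)\ \forall S\}$ within the box. At each query point $q$ we call Theorem~\ref{thm:main}; $\langle q\rangle=\poly(n)$ throughout, by the standard rounding in the Grötschel--Lovász--Schrijver framework. There are two outcomes:
\begin{itemize}[nosep]
\item The oracle returns a violated set $T$, and we cut with $q(T)\le\cost(T)$.
\item The oracle certifies that $q/(1+\varepsilon)$ is feasible. We then stop with a feasible dual whose value is $q(V)/(1+\varepsilon)\ge t/(1+\varepsilon)$.
\end{itemize}
If the ellipsoid instead shrinks below the volume threshold, the strict version of the target is infeasible, so $\OPT_{\mathrm{LP}}<t$, up to the usual slightly-enlarged/slightly-shrunk polytope technicalities. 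Those technicalities I expect to absorb by rescaling $\varepsilon$ and using the integrality of $\cost$. Binary search on $t$ then gives a feasible $q$ with $\sum_u q_u\ge \OPT_{\mathrm{LP}}/(1+\varepsilon)$. The running time is $\poly(n)$ oracle calls, each costing $2^{\poly(1/\varepsilon)}n^{O(1)}$.

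\emph{Step 3: sparse primal.} Consider the final unsuccessful run at target $t^*$, with $t^*\le(1+\varepsilon)\OPT_{\mathrm{LP}}$ after fixing the granularity. It produced only polynomially many cut sets $\mathcal T$, plus the box and objective constraints. The certificate of infeasibility says the restricted dual
\[
\max\{q(V): q(S)\le\cost(S)\ \forall S\in\mathcal T\}
\]
has value below $t^*$, up to the small box slack. Solve the restricted primal (equation \eqref{eq:cluster-primal} with $z$ supported on $\mathcal T$) as an explicit polynomial-size LP. Its value equals the restricted dual value, so it is at most $t^*\le(1+\varepsilon)\OPT_{\mathrm{LP}}$. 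For this restricted primal to be feasible, we always add all singletons $\{u\}$ to $\mathcal T$. This guarantees $\sum_{S\ni u}z_S=1$ is achievable and keeps the restricted dual bounded, which removes the need for the box constraints in this step.

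The main obstacle is the rigorous bookkeeping in Step 3. The ellipsoid's infeasibility certificate involves the box constraints, which could shift the restricted optimum. To handle this, I would argue directly: the restricted dual with singletons is a polyhedron whose optimum vertex has polynomial encoding size and lies inside the box. So the box constraints are inactive, and LP duality on the explicit restricted LP gives the bound. Finally, rescale $\varepsilon$ by a constant to combine the two $(1+\varepsilon)$ losses.
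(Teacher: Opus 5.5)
Your proposal is correct and follows essentially the same route as the paper. Both run the ellipsoid method on the dual with Theorem~\ref{thm:main} as a weak separation oracle, rescale an accepted vector by $1+\varepsilon$, and recover a polynomial-support primal by keeping only the returned constraints plus all singletons; the same ellipsoid run is then valid for this relaxation, and LP duality is applied to the explicit restricted LP. The only differences are cosmetic: you extract the constraint family from the final failed feasibility run, where no acceptance occurs, and dispose of the box constraints with the vertex-encoding argument, whereas the paper takes all constraints from the whole execution and notes that acceptance certificates remain valid for the relaxation after scaling.
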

The proof of Corollary~\ref{cor:lp-solutions} (assuming Theorem~\ref{thm:main}) is a standard application of the weak separation--optimization equivalence~\cite{ellipsoid}, with sufficiently smaller internal accuracy.  Dividing an accepted dual vector by the oracle's approximation factor makes it exactly feasible.
To recover the primal solution, let $\mathcal S$ contain all constraints returned during the ellipsoid execution together with all singletons, and consider the dual relaxation
containing only the constraints in $\mathcal S$.  The same ellipsoid execution is valid for this relaxation: every separating constraint returned belongs to $\mathcal S$, while
every acceptance certificate gives, after scaling, feasibility for the full dual and hence for the relaxation.  
The correctness of the ellipsoid algorithm
therefore bounds the optimum of the
relaxation by $(1+\varepsilon)\OPT_{\mathrm{LP}}$, after adjusting the internal accuracy.  Solving the corresponding restricted primal gives a polynomial-support feasible solution of
the same value.


\section{Technical overview}\label{sec:overview}
\subsection{Approximate dual separation}\label{sec:separation-overview}
We sketch here the main ideas of the proof of Theorem~\ref{thm:main}.
Fix an optimal solution $S^*$ of \CCMinRatio, and let
$
  \rho^*=\rho(S^*)=\frac{\cost(S^*)}{q(S^*)}
  $
denote its objective value.
If $\rho^*\ge1/(1+\varepsilon)$, then $q/(1+\varepsilon)$ is dual feasible. 
Thus, the oracle needs only find a violated constraint
in the gap case
\begin{equation}\label{eq:gap-case}
  \rho^*<\frac{1}{1+\varepsilon}.
\end{equation}
In this case, $q(S^*)>(1+\varepsilon)\cost(S^*)$, and hence the Lagrangian
$
  L(T)=\cost(T)-q(T)
    $
satisfies
$
  L(S^*)<-\varepsilon \cost(S^*).
  $
To find a violated dual constraint,
it suffices to find an approximate minimizer of $L$ with error below the margin cost $\varepsilon \cost(S^*)$, yielding a set with negative $L$-value.

Our starting point is
to build
a constant-sized ``summary'' of the graph
approximating the number of edges between two arbitrary subsets
using the weak regularity lemma of
Frieze and
Kannan~\cite{frieze_kannan_quick}.
The quadratic minimization problem for $L$ can be solved within the summary, and its solution can then be lifted to
the original graph. However, the additive error
incurred
is
proportional to $|V|^2$,
and can therefore
exceed the $\varepsilon\cost(S^*)$ margin. Thus, we introduce a \emph{localization} technique whose
purpose is precisely to reduce the universe to $O_\varepsilon(|S^*|)$ vertices without losing the negative witness. 
This idea works well when $\cost(S^*)/|S^*|^2$ is neither too small nor too high.

Let
$
  s=|S^*|,\ c=\cost(S^*).
    $
The
following three regimes apply:
\begin{center}
\begin{tabular}{@{}lll@{}}
\toprule
cost of $S^*$ & useful structure & algorithmic step \\
\midrule
$c \gg s^2/\varepsilon$
& a singleton is nearly optimal & test positive-weight singletons \\
$c \ll \varepsilon s^2$
& $S^*$
is close to one neighborhood & neighborhood + linearization \\
$ c \in [\Omega(\varepsilon s^2), O(s^2/\varepsilon)]$
& additive error is affordable & localization + weak regularity \\
\bottomrule
\end{tabular}
\end{center}

\paragraph{Parameters.}
We set the following internal parameters for our algorithm:
\begin{equation}\label{eq:parameters}
  \beta=\frac{\varepsilon}{10},\qquad
  \tau=\frac{\varepsilon}{12800}.
\end{equation}

\paragraph{Large cost.}
Localization is inefficient in this case, so we handle it separately. 
The cost of $S^*$ is dominated by the positive edges outgoing from $S^*$, so $c$ is roughly equal to the sum of all degrees in $S^*$.
Moreover, the negative part of $q$ is negligible, otherwise deleting it can only increase the cost by $O(s^2)$, but the overall ratio would be improved. Quantitatively, by
Lemma~\ref{lem:structure}\ref{it:positive-mass},
$
  \rho^*q^+(S^*)
  \le\binom{s}{2}+c.
  $
When $c\ge s(s-1)/(2\beta)$, some positive-weight singleton $u \in S^*$ has ratio at most
\[
    \rho(\{u\}) \le
  \frac{\binom{s}{2}+c}{q^+(S^*)}
  \le\rho^*\left(1+\frac{s(s-1)}{2c}\right)
  \le(1+\beta)\rho^*.
\]
As $\beta < \varepsilon$,
testing every
positive-weight singleton finds a violated dual constraint  in the gap case.

\paragraph{Small cost.}
    Here even the error from the localized universe would overwhelm the margin, so we take a different route.
The exact identity (Lemma~\ref{lem:structure}\ref{it:pivots})
\[
  \sum_{v\in S^*}|\Gamma^+(v)\mathbin\triangle S^*|=2\cost(S^*)
\]
shows that, when $c\le\tau s^2$, some positive neighborhood is
$O(\varepsilon s)$-close to $S^*$.  A degree threshold around that neighborhood produces
a superset $U\supseteq S^*$ with $|U\setminus S^*|=O(c/s)$.  Among its $s$-subsets,
a linear
approximation to $L$ can be minimized greedily.  The discarded quadratic correction costs only
$O(|U\setminus S^*|^2)=O(\tau c)$, below the $\varepsilon c$ gap
margin.

\paragraph{Intermediate cost.}
Perhaps the most interesting case is the remaining one.
Optimality of $S^*$ implies that in any partition $S^*=A\mathbin{\dot\cup}B$, there will be many positive edges between $A$ and $B$. In particular, the following key
half-density bound holds (Lemma~\ref{lem:structure}\ref{it:half-density}):
\[
  e^+(A,B)\ge\frac{|A||B|}{2}.
\]
Our \emph{localization} procedure constructs,
   without knowing $S^*$, a small
vertex set $U$ containing $S^*$.
The set $U$ itself need not have negative $L$-value, but our search for a negative witness can be restricted to a subset of $U$.
Keeping $U$ small is essential
to control an
additive error proportional to $|U|^2$.

In a certain sense, localization is a set-cover problem with hidden ground set $S^*$, whose covering sets are the positive neighborhoods of its vertices.  To motivate the construction, imagine an ideal adaptive process that starts from a pivot in $S^*$ and repeatedly chooses another pivot from the part of $S^*$ already exposed.  Half density makes the first neighborhood cover at least half of $S^*$ and suggests that further well-chosen neighborhoods should quickly cover the remainder.  This process cannot be implemented directly, however: the algorithm does not know which exposed vertices belong to $S^*$, while treating every exposed vertex as a new center can allow incidental neighbors outside $S^*$ to make the visible universe grow uncontrollably.

We replace the choice of a new center by a deterministic propagation rule.  Starting from the positive neighborhood of each possible low-degree pivot, we repeatedly expose every
vertex having at least $s/20$ positive neighbors among the exposed low-degree vertices.  This is loosely analogous to replacing a greedy set-cover choice by threshold rounding.  Half density guarantees progress despite not knowing $S^*$: the first threshold closure captures a constant fraction of the still-hidden vertices of $S^*$, after which every remaining vertex of $S^*$ crosses the threshold and a second closure captures it.  At the same time, every newly exposed vertex accounts for at least $s/20$ edge incidences from low-degree centers, so double counting bounds the size increase in each round.  Trying every possible initial pivot therefore produces a family containing some $U\supseteq S^*$ with $|U|=O_\varepsilon(s)$.

It remains to recover a negative-$L$ subset from $U$.  For $X\subseteq U$, let $x\in\{0,1\}^U$ be its incidence vector.  Expanding the definition of $\cost(X)$ gives
\begin{align}\label{eq:Lx}
  L(X)=\sum_{\{u,v\}\in\binom{U}{2}}(1-2\cdot\one[uv\in E^+])x_ux_v
       +\sum_{u\in U}\left(\frac{d(u)}2-q_u\right)x_u.
\end{align}
Thus the restriction of $L$ to $U$ is a quadratic pseudo-Boolean function.  Write $b_u=d(u)/2-q_u$.  If some $b_u<0$, the singleton
$\{u\}$ itself is a violation.  Otherwise $b\ge0$, and the weak regularity lemma of
Frieze and Kannan can be used to approximate the number of edges between any two subsets.
A nearly-optimal minimizer of $L$ on $U$ will then be the union of weakly-regular classes, and can be found by solving
the problem
on a ``summary'' graph whose size depends only on $\varepsilon$. 
This approximation
has
an additive error $O(\epsilon |U|^2)$, smaller than the negative margin of $L(R)$, thereby returning a set
of negative $L$-value.

Trying all $s$ (guesses for $|S^*|$) and pivots covers the three possibilities.  The only
exceptional case is $\rho^*=0$, which is easily recognizable by finding a
positive-weight connected component forming a clique.
Sections~\ref{sec:extreme-regimes} to~\ref{sec:together} formalize the sketch above, completing the
proof of Theorem~\ref{thm:main}.

\subsection{Rounding: conditional-affine pivots and a variance certificate}\label{sec:rounding-overview}
Similarly to the rounding algorithm of Cao et al.~\cite{Cao+24arXiv}, our method combines cluster-based rounding (which is particularly effective on negative edges)
    with a pivot-based rounding procedure
(the details of which differ).
In a pivot round, after choosing a pivot $u$, we sample an LP cluster $S\ni u$. Write $y_{uv}=\Pr[v\in S]$ for the LP co-membership value of a pair $uv$ of sign $\sigma$; see
~\eqref{eq:marginals}. Our new ingredient is a conditional-affine joining rule under which $v$ joins the pivot with conditional probability
\[
  \mu_{\sigma(uv)}(y_{uv})+\gamma_{\sigma(uv)}(y_{uv})\bigl(\mathbf 1[v\in S]-y_{uv}\bigr)
\]
for judiciously chosen functions $\mu_\sigma, \gamma_\sigma$ that depend on the edge sign $\sigma(uv)$ and its LP value $y_{uv}$.

For positive pairs, our specific instantiation of $\mu$ and $\gamma$, detailed in ~\eqref{eq:analytic-rules},
always adds members of $S$ and independently ``rescues'' outsiders with a probability that grows 
linearly from $0$ to $1$ as $y_{uv}$ ranges from $1/2$
up to a certain value; negative pairs are rounded independently of $S$. Thus the positive rule continuously interpolates between following the sampled cluster and ignoring it,
   strictly generalizing the pure dependent/independent modes of~\cite[Algorithm~2]{Cao+24arXiv}. 
   The final algorithm combines the cluster-based rounding of~\cite{Cao+24arXiv} with our conditional-affine pivot rounding from
   Algorithm~\ref{alg:conditional-pivot}, and returns the cheaper output; its cost
   is analyzed
   via a convex combination.
\begin{algorithm}[H]
\caption{Conditional-affine pivot rounding $\mathcal P$}\label{alg:conditional-pivot}
\KwIn{Signed graph $G$; a feasible cluster-LP solution $z$, with marginals $y$ given by~\eqref{eq:marginals}.}
$W\gets V$; $\mathcal A\gets\emptyset$\;
\While{$W\ne\emptyset$}{
  choose a pivot $u\in W$ uniformly at random\;
  sample $S\ni u$ with probability $z_S$\;
  $C\gets\{u\}$\;
  \ForEach{$v\in W\setminus\{u\}$}{
    $A_v\gets\mathbf 1[v\in S],\quad \sigma\gets\sigma(uv)$\;
    independently add $v$ to $C$ with probability $\mu_\sigma(y_{uv})+\gamma_\sigma(y_{uv})(A_v-y_{uv})$\;
  }
  $\mathcal A\gets\mathcal A\cup\{C\}$; $W\gets W\setminus C$\;
}
\Return{$\mathcal A$}\;
\end{algorithm}



The starting point of the analysis is the per-round accounting framework of Cao et al.~\cite[Section~2.2]{Cao+24arXiv}, based on edge budgets that are released when an endpoint is removed.
Their framework reduces the desired approximation guarantee to showing that an aggregate one-round surplus (i.e., the allowance prescribed by the target ratio minus the expected
        disagreement cost) is nonnegative. This surplus decomposes into local contributions indexed by signed triangles and pairs.
The contribution of some feasible triangles is negative,
so the task is to show that these deficits cannot occur in an arbitrary combination; this is
where our analysis departs from that of~\cite{Cao+24arXiv}.

Conditioned on a fixed pivot $u$, let $A_v=\mathbf 1[v\in S]$ and write $y_{uvw}=\mathbb E[A_vA_w]$. Then $\mathbb E[A_v]=y_{uv}$ and $y_{uvw}-y_{uv}y_{uw}$ is the covariance of $A_v$ and $A_w$. Writing $W$ for the alive vertices, we choose explicit closed-form weights $\Theta_\sigma(y)$ and consider
\[
  Z_u:=\sum_{v\in W\setminus\{u\}}\Theta_{\sigma(uv)}(y_{uv})(A_v-y_{uv}).
\]
The cross terms in $\sum_{u\in W}\mathbb E[Z_u^2]$ are indexed by triangles, and the diagonal terms are indexed by pairs. We prove that every local surplus contribution dominates
the corresponding term in this expansion. Summing shows that the aggregate surplus is at least $\sum_{u\in W}\mathbb E[Z_u^2]\ge0$. The weights $\Theta$ are given by explicit
formulas.
The remaining local polynomial inequalities are verified over every signed triangle and pair type using exact arithmetic. 
In contrast to the approach
of Cao et al., this argument uses no global triangle-frequency variables, cellwise surplus bounds, or a factor-revealing SDP: it gives a closed-form sum-of-variances certificate for our more flexible pivot rule.


\mycomment{
    The analysis is broadly based on the framework of Cao et al.~\cite[Sections~2.2 and~6]{Cao+24arXiv}.
    In both approaches, one runs cluster-based rounding and a pivot-based rounding (the details of which differ), analyzes the cheaper output through a convex combination, and studies the pivot procedure by assigning edge budgets that are released when an endpoint is removed.
    The improved approximation stems from the flexibility of our affine pivot rounding rule~\eqref{eq:conditional-affine}: rather than switching among a few fixed regimes with pure dependent/independent rounding, we let the marginal joining probability and the impact on membership in the sampled cluster vary continuously with the LP value.

    Both analyses exploit the positive semidefiniteness of the covariance matrices induced by an LP cluster, but they differ in key respects.
    Cao et al. discretize the pair and triple values, replace the surplus of every triangle in a discretization cell by a common lower bound, and minimize a cellwise lower bound on the total surplus over global frequencies of discretized triangle types, subject to aggregated covariance and frequency PSD constraints~\cite{Cao+24arXiv}.  Their certificate remains a large numerical SDP solution.
    Instead, our analysis specifies an explicit vector for each covariance matrix given by the closed-form expression in~\eqref{eq:theta-closed-form}, showing that every triangle and pair surplus dominates pointwise the corresponding terms in the resulting quadratic form.
    Summing these terms gives a nonnegative sum of variances directly, without triangle-frequency variables, cellwise surplus bounds, or the need to formulate a factor-revealing SDP.
    A short computer-assisted verifier proves the required polynomial inequalities over the locally feasible region using exact rational arithmetic.

}

\subsection{NP-hardness through balanced completion}\label{sec:hardness-overview}
We sketch  here the hardness proof for the cluster LP; the same construction works for hardness of dual separation.

It is convenient to establish hardness for
incomplete graphs first, using a reduction from \textsc{Vertex Cover}.
Given a graph
$J=(U,E)$, we add a root vertex $r$, declare every pair $ru$ positive and
every edge of $J$ negative, and leave all other pairs absent.  A clustering
may be assumed to consist of a root cluster $\{r\}\cup I$ and singleton
clusters outside it, yielding cost
$
|U|-|I|+e_J(I).
$
The minimum of this expression is exactly the vertex-cover number
$\tau(J)$;
the fractional cluster-LP
optimum can be shown to be also $\tau(J)$.

We transfer the hardness result to complete unweighted instances by removing the absent pairs via a balanced completion gadget. We replace every vertex by a positive
clique.  Specified positive and negative pairs become uniformly
signed complete bipartite graphs between the corresponding blocks, whereas
an absent pair is replaced by a balanced pattern having as many positive
as negative pairs in every row and column, using Hadamard matrices.  On unions of entire blocks, these
patterns contribute only an explicit quantity.
Their small spectral norm ensures that selecting
a block partially
cannot overcome the positive-clique penalty inside the
block.  Consequently, the completion preserves
the cluster-LP optimum up to a known additive term. 

\subsection{Applications}\label{sec:seeded-overview}
\paragraph{Seeded CC.}
The seeded problem introduced in Section~\ref{sec:contrib} is the
special case of constrained correlation clustering~\cite{vanZuylen+09} in which the hostile pairs are $\binom R2$ and there are no friendly pairs.  The constrained problem
has a $3$-approximation~\cite{vanZuylen+09}, recently improved to $\approx 2.29$~\cite{azizeddin2026constrained}.  \citet{towards2} showed that a polynomial-support
approximately optimal solution to its cluster LP would yield a $(1.92+\varepsilon)$-approximation. 
Recent work has shown that
    the general constrained problem is UGC-hard to approximate below $2$~\cite{azizeddin2026constrained,CaoXu26}. 
Hence, carrying out this program in full generality for constrained correlation clustering would disprove the UGC.
    Our result supplies the missing LP approximation for the seeded subclass, thereby obtaining a better-than-$2$ guarantee in this restricted setting.

The separation proof extends cleanly because the admissible clusters form a downwards-closed family.
Hence, if $S^*$ is an optimum legal ratio set, both sides of every partition $S^*=A\mathbin{\dot\cup}B$ remain legal, and the optimality argument giving
half density continues to hold.  The localization construction itself is unchanged. Although the universe $U$ returned may contain several seeds, only the negative-$L$ subset ultimately recovered from $U$ must be legal.

Only the recovery step from each candidate universe must enforce the seed constraint.  In the small-cost regime, we compare the best fixed-cardinality set containing no seed with the best one containing a single seed.  In the intermediate regime, we refine every weak-regularity class according to seed membership and add one aggregate packing constraint requiring the recovered set to contain at most one seed.  The large-cost argument uses only singletons and needs no change.  
The ellipsoid method (with the legal constraints) then gives an approximately optimal seeded cluster-LP solution of polynomial support. Applying the rounding theorem of~\citet{towards2} yields the $(1.92+\varepsilon)$-approximation stated above.
(Our own rounding does not apply directly here, as its pivot clusters need not satisfy the seeded constraint.)
Appendix~\ref{app:seeded-cc} gives the formal details.

\paragraph{Cluster insertion.}
Our second application is to implement the search for a cluster insertion improving the correlation clustering cost, as in the combinatorial local-search framework
of~\citet{Cohen-Addad+24} for standard correlation clustering.  Given a current clustering $\mathcal P$, cluster insertion extracts a set $S$ from its current clusters and makes $S$ a new cluster.  Because the change in cost is a sum of pairwise contributions, it can be regrouped as
\[
  \cost_w(\mathcal P+S)-\cost_w(\mathcal P)=\widehat\cost(S)-q(S),
\]
where $\widehat\cost$ is the cost of an auxiliary weighted instance and $q$ distributes the cost of the disagreements of $\mathcal P$ among their endpoints.  Thus, $S$ is an improving insertion exactly when $q(S)>\widehat\cost(S)$, that is, when $S$ violates a cluster-dual constraint of the auxiliary instance.  The auxiliary pair weights remain within a fixed constant range in the iterated-flipping framework, so a bounded-weight extension of our separator gives an approximate insertion oracle.  
Our separator gives a deterministic, preclustering-free, polynomial-time implementation of this primitive.

\section{The large and small cost regimes}\label{sec:extreme-regimes}
In this section we handle the two extreme cost regimes, which are comparatively  simpler. The small-cost case is related to the intuition underlying the
preclustering procedure of Cohen-Addad, Lee, Li, and Newman~\cite{Cohen-Addad+23}: a cluster whose cost is small relative to its squared size is nearly a positive clique, so positive neighborhoods of its vertices approximate the cluster. We use this observation directly for the unknown optimum, without constructing a global preclustering. The weighted singleton argument and the reconstruction by linearization are specific to our ratio problem.

We collect below some key identities and consequences of the optimality of $S^*$.
\begin{lemma}\label{lem:structure}
The following statements hold.
\begin{enumerate}[label=(\alph*)]
  \item\label{it:global-support}
  For every $T\subseteq V$,
  \begin{equation}\label{eq:global-support}
    \cost(T)\ge\rho^* q(T).
  \end{equation}
  \item  For every pair of disjoint sets $A, B\subseteq V$,
    \begin{equation}\label{eq:partition-identity}
      \cost(A)+\cost(B)=\cost(A \cup B)+2e^+(A,B)-|A||B|.
    \end{equation}

  \item\label{it:half-density}
  For every partition $S^*=A\mathbin{\dot\cup}B$,
  \[
    e^+(A,B)\ge\frac{|A||B|}{2}.
  \]
  In particular, $d_{S^*}(u)\ge(|S^*|-1)/2$ for every $u\in S^*$ (take $A=\{u\}, B = S^*\setminus \{u\}$).

  \item\label{it:deletion}
  For every $T\subseteq V$ and $X\subseteq T$,
  \[
    \cost(X)\le\cost(T)+|X|\,|T\setminus X|.
  \]

  \item\label{it:pivots}
  For every $T\subseteq V$, 
  \begin{equation}\label{eq:pivot-sum}
    \sum_{v\in T}|\Gamma^+(v)\mathbin\triangle T|=2e^-(T)+e^+(T,V\setminus T)=
    2\cost(T),
  \end{equation}
  \begin{equation}\label{eq:dvs}
    \sum_{v\in T}|d(v)-(|T|-1)|\le2\cost(T),\qquad e^+(T,V\setminus T)\le2\cost(T).
   \end{equation}

  \item\label{it:positive-mass}
  For every $T\subseteq V$, its positive $q$-weight satisfies
  \begin{equation}\label{eq:positive-mass}
    \rho^* q^+(T)
    \le \frac12\sum_{\substack{u\in T\\q_u>0}}d(u)
    \le \binom{|T|}{2}+\cost(T).
  \end{equation}
  Consequently, whenever $\cost(S^*)\ge\tau|S^*|^2$,
  \begin{equation}
    \frac{q^+(S^*)}{q(S^*)}\le1+\frac{|S^*|(|S^*|-1)}{2\cost(S^*)}\le1+\frac{1}{2\tau}.
  \end{equation}
\end{enumerate}
\end{lemma}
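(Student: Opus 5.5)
The plan is to verify each item directly from the closed form $\cost(S)=\binom{|S|}{2}+\frac12\sum_{u\in S}d(u)-2e^+(S)$ and the minimality of $\rho^*$.

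For (a), if $q(T)>0$ the claim is the definition of $\rho^*$. If $q(T)\le0$, it holds because $\cost\ge0$ and $\rho^*\ge0$.

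For (b), I will expand both sides with the closed form. Three facts suffice: $\binom{|A|+|B|}{2}=\binom{|A|}{2}+\binom{|B|}{2}+|A||B|$, degrees are additive over the disjoint union, and $e^+(A\cup B)=e^+(A)+e^+(B)+e^+(A,B)$. Substituting gives $\cost(A\cup B)=\cost(A)+\cost(B)+|A||B|-2e^+(A,B)$, which is the identity.

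For (c), take a partition $S^*=A\dot\cup B$ with both parts nonempty. By (a), $\cost(A)+\cost(B)\ge\rho^*(q(A)+q(B))=\rho^*q(S^*)=\cost(S^*)$. Combining this with (b) gives $2e^+(A,B)-|A||B|\ge0$. The degree bound follows from the singleton case.

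For (d), apply (b) with $A=X$ and $B=T\setminus X$. This gives $\cost(X)=\cost(T)+2e^+(X,B)-|X||B|-\cost(B)$. Then use $\cost(B)\ge0$ and $e^+(X,B)\le|X||B|$.

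For (e), I will count the contribution of each element $w$ to $|\Gamma^+(v)\triangle T|$ for $v\in T$.
\begin{itemize}
\item If $w\in T\setminus\Gamma^+(v)$, then $vw\in E^-$ inside $T$. Each internal negative pair is counted twice, once from each endpoint, giving $2e^-(T)$.
\item If $w\in\Gamma^+(v)\setminus T$, the element is counted once per crossing positive edge, giving $e^+(T,V\setminus T)$.
\end{itemize}
Since $v\in\Gamma^+(v)\cap T$, the vertex $v$ itself never contributes. The total equals $2\cost(T)$ by the definition $\cost(T)=e^-(T)+\frac12e^+(T,V\setminus T)$.

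For \eqref{eq:dvs}, observe that $|d(v)-(|T|-1)|=\bigl||\Gamma^+(v)|-|T|\bigr|\le|\Gamma^+(v)\triangle T|$, and sum over $v\in T$. The second inequality is immediate from $e^+(T,V\setminus T)=2\cost(T)-2e^-(T)$.

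For (f), let $u$ be a vertex with $q_u>0$. By (a) applied to the singleton, $\rho^*q_u\le\cost(\{u\})=d(u)/2$. Summing over such $u\in T$ gives the first inequality. For the second, note $\frac12\sum_{u\in T}d(u)=e^+(T)+\frac12e^+(T,V\setminus T)\le\binom{|T|}{2}+\cost(T)$, using $e^+(T)\le\binom{|T|}{2}$ and $\cost(T)\ge\frac12 e^+(T,V\setminus T)$.

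For the consequence, take $T=S^*$ and divide by $\rho^*q(S^*)=\cost(S^*)>0$. This gives $q^+(S^*)/q(S^*)\le\bigl(\binom{s}{2}+c\bigr)/c=1+s(s-1)/(2c)$. Then $c\ge\tau s^2$ bounds this by $1+\frac1{2\tau}$.

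The only delicate point is (c), which needs $A,B$ nonempty; empty parts make the claim trivial. It also requires (a) for sets with $q\le0$, which holds since $\cost\ge0$ and $\rho^*\ge0$. Everything else is bookkeeping.
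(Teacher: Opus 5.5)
Your proposal is correct and follows the paper's proof essentially item by item: (a) from the definition of $\rho^*$ together with nonnegativity of $\cost$, (b) by expanding the closed form, (c) and (d) from (a) and (b), (e) by counting each pair's contributions to the symmetric differences, and (f) by summing the singleton constraints. The one cosmetic difference is in (e), where you use $\bigl|\,|\Gamma^+(v)|-|T|\,\bigr|\le|\Gamma^+(v)\mathbin\triangle T|$ in place of the paper's explicit split $|b_v-a_v|\le a_v+b_v$; this is the same inequality.
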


\begin{proof}
\phantom{ nothing }
\begin{enumerate}[label=(\alph*)]
\item  This is essentially the definition of $\rho^*$ in Problem~\ref{prob:cc-minratio}, except that the inequality in~\eqref{eq:global-support} is also required to hold when $q(T) \le
0$. But in that case,
$\rho^*q(T)\le0\le\cost(T)$.
\item This is a consequence of \eqref{eq:pseudo-boolean}. We have
    \begin{align*}
        \cost(A \cup B)     &= \binom{|A \cup B|}{2} + \frac12\sum_{u \in A \cup B} d(u) - 2e^+(A \cup B).\\
        \cost(A) + \cost(B) &= \binom{|A|}{2} + \binom{|B|}{2} + \frac12\sum_{u \in A \cup B} d(u) -2e^+(A)-2e^+(B),
    \end{align*}
    Using  $\binom{a + b}{2} = \binom{a}{2} + \binom{b}{2} + ab$, we get
    \begin{align*}
        \cost(A) + \cost(B) - \cost(A \cup B) &= -|A| |B| + 2(e^+(A \cup B) - e^+(A) - e^+(B)) \\&= -|A||B| + 2e^+(A,B).
    \end{align*}

\item For a partition $S^*=A\mathbin{\dot\cup}B$, \eqref{eq:global-support} and \eqref{eq:partition-identity} give
\[
  \cost(S^*)+2e^+(A,B)-|A||B|=\cost(A)+\cost(B)\ge\rho^*(q(A)+q(B))=\cost(S^*).
\]

\item Write $D=T\setminus X$.  By \eqref{eq:partition-identity}, nonnegativity of $\cost(D)$, and $e^+(D,X)\le|D||X|$,
\[
  \cost(X)=\cost(T)-\cost(D)+2e^+(D,X)-|D||X|\le\cost(T)+|D||X|.
\]

\item 
For $v\in T$, 
write
$\delta_T(v):=|\Gamma^+(v)\mathbin\triangle T|=a_v+b_v$, where
\[
  a_v=|T\setminus\Gamma^+(v)|=|T|-1-d_T(v),
  \qquad
  b_v=|\Gamma^+(v)\setminus T|=d_{V\setminus T}(v).
\]
Every internal negative pair contributes to $a_v$ at both endpoints, while every positive pair crossing from $T$ to $V\setminus T$ contributes to $b_v$ at its endpoint in $T$.  Therefore
\[
  \sum_{v\in T}\delta_T(v)
  =2e^-(T)+e^+(T,V\setminus T)
  =2\cost(T),
\]
which proves \eqref{eq:pivot-sum} and the last bound in~\eqref{eq:dvs}. Since $d(v)=d_T(v)+d_{V\setminus T}(v) =(|T|-1-a_v)+b_v$, 
\[
  |d(v)-(|T|-1)|=|b_v-a_v|\le a_v+b_v=\delta_T(v).
\]
Summing the last inequality proves the first bound in~\eqref{eq:dvs}.

\item 
For every $u\in T$ with $q_u>0$, part~\ref{it:global-support} applied to $\{u\}$ gives
\[
  \rho^*q_u\le\cost(\{u\})=\frac{d(u)}{2}.
\]
Summing only over positive coordinates gives
\[
  \rho^*q^+(T)\le\frac12\sum_{\substack{u\in T\\q_u>0}}d(u)\le\frac12\sum_{u\in T}d(u)=e^+(T)+\frac12e^+(T,V\setminus T)\le\binom{|T|}{2}+\cost(T).
\]
Taking $T=S^*$ and dividing by $\cost(S^*)=\rho^*q(S^*)>0$ proves the final claim.
\end{enumerate}
\end{proof}
\subsection{Large-cost: singleton sets}\label{sec:structure}

\begin{corollary}\label{cor:singleton}
Some $u\in S^*$ with $q_u>0$ satisfies
\begin{equation}
  \rho(\{u\})\le\rho^*\left(1+\frac{|S^*|(|S^*|-1)}{2\cost(S^*)}\right).
\end{equation}
Consequently, if $\cost(S^*)\ge|S^*|(|S^*|-1)/(2\varepsilon)$, then some singleton is a $(1+\varepsilon)$-approximation.
\end{corollary}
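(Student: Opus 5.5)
The plan is an averaging argument over the positive-weight vertices of $S^*$, fed by Lemma~\ref{lem:structure}\ref{it:positive-mass} applied to $T=S^*$. Let $P=\{u\in S^*:q_u>0\}$. Since $q(S^*)>0$ (as $S^*$ is feasible for \CCMinRatio), we have $q^+(S^*)\ge q(S^*)>0$, so $P$ is nonempty.

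First I would use the middle inequality of~\eqref{eq:positive-mass}, namely $\rho^*q^+(S^*)\le\frac12\sum_{u\in P}d(u)$, together with $\cost(\{u\})=d(u)/2$. This gives
\[
\rho^*\sum_{u\in P}q_u\le\sum_{u\in P}\cost(\{u\}).
\]
In the other direction, the final inequality of~\eqref{eq:positive-mass} gives
\[
\sum_{u\in P}\cost(\{u\})\le\binom{|S^*|}{2}+\cost(S^*).
\]
Averaging (a mediant argument) yields some $u\in P$ with
\[
\frac{\cost(\{u\})}{q_u}\le\frac{\sum_{u\in P}\cost(\{u\})}{\sum_{u\in P}q_u}\le\frac{\binom{|S^*|}{2}+\cost(S^*)}{q^+(S^*)}.
\]
This holds because a ratio of sums with positive denominators is at least the smallest individual ratio.

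Next I would bound the right-hand side. Since $q^+(S^*)\ge q(S^*)=\cost(S^*)/\rho^*$, it is at most
\[
\rho^*\,\frac{\binom{|S^*|}{2}+\cost(S^*)}{\cost(S^*)}=\rho^*\Bigl(1+\frac{|S^*|(|S^*|-1)}{2\cost(S^*)}\Bigr),
\]
which is the claimed bound. This step needs $\cost(S^*)>0$, i.e.\ $\rho^*>0$; the case $\rho^*=0$ is excluded (it is treated separately, as noted in the overview). Alternatively, if $\rho^*=0$ the inequality is read with the convention that the right side is $0$ and must be checked directly; I would simply assume $\rho^*>0$ here.

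The consequence follows by substitution. If $\cost(S^*)\ge|S^*|(|S^*|-1)/(2\varepsilon)$, then the factor $\frac{|S^*|(|S^*|-1)}{2\cost(S^*)}$ is at most $\varepsilon$, so $\rho(\{u\})\le(1+\varepsilon)\rho^*$. Also $q(\{u\})=q_u>0$, so $\{u\}$ is feasible for \CCMinRatio.

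There is no real obstacle. The only care needed is to restrict the averaging to the positive-weight coordinates, so that every denominator is positive, and to note $q^+(S^*)\ge q(S^*)$.
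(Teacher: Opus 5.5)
Your proposal is correct and follows essentially the same route as the paper: averaging $d(u)/(2q_u)$ over the positive-weight vertices $P\subseteq S^*$, bounding $\frac12\sum_{u\in P}d(u)$ by the second inequality of \eqref{eq:positive-mass}, and then using $q^+(S^*)\ge q(S^*)=\cost(S^*)/\rho^*$. The assumption $\cost(S^*)>0$ that you flag is equally implicit in the paper's final equality and in the statement itself, and your first inequality $\rho^*q^+(S^*)\le\frac12\sum_{u\in P}d(u)$ is true but not needed.
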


\begin{proof}
Let $P=\{u\in S^*:q_u>0\}$.  Weighted averaging, $q^+(S^*)\ge q(S^*)>0$, and \eqref{eq:positive-mass} give
\[
  \min_{u\in P}\rho(\{u\})=\min_{u\in P}\frac{d(u)}{2q_u}
  \le\frac{\frac12\sum_{u\in P}d(u)}{q^+(S^*)}
  \le\frac{\binom{|S^*|}{2}+\cost(S^*)}{q(S^*)}
  =\rho^*\left(1+\frac{|S^*|(|S^*|-1)}{2\cost(S^*)}\right).
\]
\end{proof}

\subsection{Small cost: near-clique reconstruction by linearization}\label{sec:near}

We use the fixed objective $L(T)=\cost(T)-q(T)$.  The candidate construction
comes from a simple identity for sets of a fixed cardinality.

\begin{lemma}
Fix a target cardinality $t$, a set $U\subseteq V$, and a set
$D\subseteq U$ of size $|U|-t$.
Then
\begin{equation}\label{eq:fixed-cardinality}
  L(U\setminus D)=K_{U,t}+
  \sum_{u\in D}\left(2d_U(u)-\frac12d(u)+q_u\right)-2e^+(D),
\end{equation}
where the quantity $K_{U,t}$ 
is independent of $D$.
\end{lemma}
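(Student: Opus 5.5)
The identity is a direct expansion of the closed form~\eqref{eq:pseudo-boolean} for $\cost$. I will write $T=U\setminus D$, note that $|T|=t$ is fixed, and then express every term of $L(T)=\cost(T)-q(T)$ as a quantity depending only on $U$ and $t$, plus a correction indexed by $D$.

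\textbf{Step 1: expand $L(T)$.} By~\eqref{eq:pseudo-boolean},
\[
L(T)=\binom{t}{2}+\frac12\sum_{u\in T}d(u)-2e^+(T)-q(T).
\]
The binomial term depends only on $t$.

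\textbf{Step 2: the linear terms.} Since $T=U\setminus D$ with $D\subseteq U$, we have
\[
\frac12\sum_{u\in T}d(u)=\frac12\sum_{u\in U}d(u)-\frac12\sum_{u\in D}d(u),\qquad q(T)=q(U)-q(D).
\]
These contribute $-\tfrac12 d(u)+q_u$ for each $u\in D$, together with a $D$-independent part.

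\textbf{Step 3: the internal positive edges.} Each positive pair inside $U$ lies inside $T$, inside $D$, or between $D$ and $T$, so
\[
e^+(T)=e^+(U)-e^+(D)-e^+(D,T).
\]
Counting positive neighbours in $U$ of the vertices of $D$ counts each pair inside $D$ twice and each $D$--$T$ pair once. Hence
\[
\sum_{u\in D}d_U(u)=2e^+(D)+e^+(D,T),
\]
which gives
\[
e^+(T)=e^+(U)-\sum_{u\in D}d_U(u)+e^+(D).
\]
Multiplying by $-2$ produces the terms $+2d_U(u)$ for $u\in D$ and $-2e^+(D)$, plus the $D$-independent term $-2e^+(U)$.

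\textbf{Step 4: collect terms.} Adding the pieces gives~\eqref{eq:fixed-cardinality} with
\[
K_{U,t}=\binom{t}{2}+\frac12\sum_{u\in U}d(u)-2e^+(U)-q(U).
\]
This depends only on $U$ and $t$, as the statement requires.

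The only point needing care is the double-counting identity in Step 3. Its purpose is to turn the quadratic term $e^+(T)$ into a linear term in the vertices of $D$ plus the single quadratic correction $-2e^+(D)$. That correction is small when $|D|=|U\setminus S^*|$ is small, which is exactly how the linearization argument uses the lemma. No result beyond the definition~\eqref{eq:pseudo-boolean} is needed.
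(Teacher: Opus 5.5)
Your proposal is correct and follows the paper's proof essentially verbatim: you expand $L(T)$ via~\eqref{eq:pseudo-boolean}, split the degree sum and $q(T)$ over $U$ and $D$, and use the double-counting identity $e^+(U\setminus D)=e^+(U)-\sum_{u\in D}d_U(u)+e^+(D)$. This yields the same constant $K_{U,t}=\binom t2+\frac12\sum_{u\in U}d(u)-2e^+(U)-q(U)$ as in the paper.
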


\begin{proof}
Put $X=U\setminus D$, so $|X|=t$.  Equation~\eqref{eq:pseudo-boolean} gives
\[
  L(X)=\binom t2+\frac12\sum_{u\in X}d(u)-2e^+(X)-q(X).
\]
Moreover,
\[
  \sum_{u\in X}d(u)=\sum_{u\in U}d(u)-\sum_{u\in D}d(u),
  \qquad
  q(X)=q(U)-q(D),
\]
and
\[
  e^+(X)=e^+(U)-\sum_{u\in D}d_U(u)+e^+(D).
\]

For the last identity, edges between $D$ and $X$ are counted once by
$\sum_{u\in D}d_U(u)$,
    while edges inside $D$ are counted twice.  Substituting these three
identities proves \eqref{eq:fixed-cardinality}, with
\[ K_{U,t}=\binom t2+\frac12\sum_{u\in U}d(u)-2e^+(U)-q(U). \]
\end{proof}

Thus, among fixed-cardinality deletions, the linear effect of deleting $u$ is
\begin{equation}
  \sigma_u=2d_U(u)-\frac12d(u)+q_u;
\end{equation}
the only nonlinear correction comes from the term $-2e^+(D)$.

For each guessed cardinality $t\in[n]$, each $v\in V$, let $A=\Gamma^+(v)$ and
define the
neighborhood
\begin{equation}\label{eq:near-U}
  U=\{u\in V:d_A(u)\ge t/3\};
\end{equation}
If $|U|\ge t$,
   our procedure
   outputs the set $T$ of the $t$ vertices of $U$ with the highest score $\sigma_u$.

The intuition is as follows. The set $A$ itself may miss vertices of the unknown optimum, but since $A$ is close to $S^*$, every vertex of $S^*$ has many positive neighbors in $A$.
Thus, $S^*$ is entirely contained in the neighborhood $U$. Conversely, every vertex in $U$ from outside $S^*$ has many positive edges into $S^*$, which allows us to upper-bound the
number of such extra vertices in the low-cost regime.
This, in turn, bounds the error we make by discarding the $2 e^+(D)$ term during minimization of~\eqref{eq:fixed-cardinality}.

\begin{lemma}\label{lem:near}
Suppose
$
  0<\cost(S^*)\le\frac{|S^*|^2}{200}.
  $
Then, for some pivot $v$, the candidate constructed with $t=|S^*|$ satisfies
\[
  L(T)\le L(S^*)+\frac{64\cost(S^*)^2}{|S^*|^2}.
\]
\end{lemma}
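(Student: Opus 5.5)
Write $s=|S^*|$ and $c=\cost(S^*)$, so $c\le s^2/200$.

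\emph{Choice of pivot.} By \eqref{eq:pivot-sum}, $\sum_{v\in S^*}|\Gamma^+(v)\mathbin\triangle S^*|=2c$. Hence some $v\in S^*$ has $A=\Gamma^+(v)$ with $|A\mathbin\triangle S^*|\le 2c/s\le s/100$. We fix this pivot and $t=s$.

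\emph{Step 1: $S^*\subseteq U$.} By Lemma~\ref{lem:structure}\ref{it:half-density}, every $u\in S^*$ has $d_{S^*}(u)\ge (s-1)/2$. Therefore
\[
d_A(u)\ge d_{S^*}(u)-|S^*\setminus A|\ge (s-1)/2-s/100\ge s/3
\]
for $s$ not tiny. The tiny cases are vacuous, since $0<c\le s^2/200$ forces $s\ge 15$. So $U\supseteq S^*$ and $|U|\ge t$.

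\emph{Step 2: bound on $D^*:=U\setminus S^*$.} Each $u\in D^*$ satisfies
\[
d_{S^*}(u)\ge d_A(u)-|A\setminus S^*|\ge s/3-s/100\ge s/4.
\]
Summing over $D^*$ and using \eqref{eq:dvs}, $e^+(S^*,V\setminus S^*)\le 2c$, we get $|D^*|\cdot s/4\le 2c$, so $|D^*|\le 8c/s$.

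\emph{Step 3: linearization.} Apply \eqref{eq:fixed-cardinality} with $t=s$. Every deletion set $D\subseteq U$ with $|D|=|U|-s$ has the same size $m:=|D^*|\le 8c/s$. Define $F(D)=\sum_{u\in D}\sigma_u$. The procedure keeps the $t$ vertices of highest score $\sigma_u$, which means it deletes the $m$ vertices of lowest score; its deletion set $\hat D$ therefore minimizes $F$ exactly. Since $e^+(\cdot)\ge0$ and $e^+(D^*)\le\binom m2$,
\[
L(T)=K+F(\hat D)-2e^+(\hat D)\le K+F(D^*)\le L(S^*)+2e^+(D^*)\le L(S^*)+m^2\le L(S^*)+\frac{64c^2}{s^2}.
\]
Here the last equality-direction step uses $L(S^*)=K+F(D^*)-2e^+(D^*)$.

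\emph{Main obstacle.} The delicate part is the constants in Steps 1–2. The thresholds $t/3$ and $s/4$ must survive the loss of $|A\mathbin\triangle S^*|\le s/100$ together with the $-1/2$ in the half-density bound. I expect $s\ge15$ (implied by $c\ge1$ and $c\le s^2/200$) to suffice.
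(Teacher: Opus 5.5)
Your proof is correct and follows the paper's argument step for step: you pick a pivot whose closed positive neighborhood has small symmetric difference with $S^*$, get $S^*\subseteq U$ from half density, bound $|U\setminus S^*|\le 8c/s$ through the cut bound $e^+(S^*,V\setminus S^*)\le 2c$, and finally drop $-2e^+(\hat D)$ and bound $2e^+(D^*)\le m^2$. One small correction: $\cost(S^*)$ is only a half-integer, so $c>0$ gives $c\ge 1/2$ and hence $s\ge 10$ rather than $s\ge 15$; this still suffices, because $(s-1)/2-s/100\ge s/3$ only needs $s\ge 4$.
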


\begin{proof}
By Lemma~\ref{lem:structure}\ref{it:pivots}, choose $v\in S^*$
such that
\[
  |\Gamma^+(v)\mathbin\triangle S^*|
  \le\frac{2\cost(S^*)}{|S^*|}
  \le\frac{|S^*|}{100}.
\]
Put $A=\Gamma^+(v)$ and form $U$ using \eqref{eq:near-U} with $t=|S^*|$.

For every $u\in S^*$, Lemma~\ref{lem:structure}\ref{it:half-density} gives
\[
  d_A(u)\ge d_{S^*}(u)-|S^*\setminus A|
  \ge\frac{|S^*|-1}{2}-\frac{|S^*|}{100}
  \ge\frac{|S^*|}{3}.
\]
Thus $S^*\subseteq U$.  Conversely, every $u\in U\setminus S^*$ satisfies, by \eqref{eq:near-U},
\[
  d_{S^*}(u)\ge d_A(u)-|A\setminus S^*|
  \ge\frac{|S^*|}{3}-\frac{|S^*|}{100}
  \ge\frac{|S^*|}{4}.
\]
These positive edges cross the cut of $S^*$, so
\[
  |U\setminus S^*|\,\frac{|S^*|}{4}
  \le e^+(S^*,V\setminus S^*)
  \le2\cost(S^*).
\]
In particular,
\begin{equation}\label{eq:near-extra-vertices}
  |U\setminus S^*|\le\frac{8\cost(S^*)}{|S^*|}.
\end{equation}

Recall that both $S^*$ and $T$ have the same size and, by design, $T$ minimizes the linear part of \eqref{eq:fixed-cardinality} among all subsets of $U$.
Therefore, by~\eqref{eq:near-extra-vertices},
\[
  L(T)\le L(S^*)+2e^+(U\setminus S^*)
      \le L(S^*)+|U\setminus S^*|^2
      \le L(S^*)+\frac{64\cost(S^*)^2}{|S^*|^2}.
\]
\end{proof}

\section{The intermediate regime}\label{sec:intermediate}
We now assume the remaining regime applies:
\begin{equation}\label{eq:intermediate-regime}
  \tau |S^*|^2< \cost(S^*)<\frac{1}{\beta}\binom{|S^*|}{2}.
\end{equation}

\input{localization_threshold}

\input{dense_quadratic_linear}

\section{Weak separation: putting it all together}\label{sec:together}

\begin{proof}[Proof of Theorem~\ref{thm:main}]
We may assume that $0<\varepsilon\le1/4$ and $q^+(V)>0$.
The algorithm first tests whether a connected component $K$ is a clique
and has $q(K)>0$.  If so, it returns $K$, which satisfies
$q(K)>\cost(K)=0$.  Otherwise it generates the following candidates:
\begin{enumerate}[label=(\roman*)]
  \item every singleton $\{u\}$ with $q_u>0$;
  \item for every $s\in[n]$ and $v\in V$, the near-clique candidate of
        Section~\ref{sec:near}, whenever it is defined;
  \item for every $s\in[n]$ and every $v\in L_s$, the output of the recovery procedure from Section~\ref{sec:lagrangian} applied to the threshold-localized universe $U(v)$ in~\eqref{eq:threshold-universe}.
\end{enumerate}
It evaluates $L(T)=\cost(T)-q(T)$ exactly for every candidate, returning one
with $L(T)<0$, if found.  If none has negative value, it returns
$q/(1+\varepsilon)$.

To prove correctness, let $S^*$ be an optimum, with
$s=|S^*|$, $c=\cost(S^*)$, and objective value $\rho^*$.  The initial test
handles the case $\rho^*=0$, so suppose now that
$\rho^*>0$.  If $\rho^*\ge1/(1+\varepsilon)$, then
Lemma~\ref{lem:structure}\ref{it:global-support} shows that the returned
certificate is feasible.

It remains to consider the gap case \eqref{eq:gap-case}, which gives
$L(S^*)<-\varepsilon c$.  If
$c\ge s(s-1)/(2\beta)$, Corollary~\ref{cor:singleton} gives a
positive-weight singleton of ratio at most $(1+\beta)\rho^*<1$, and family
(i) contains a negative-$L$ candidate.  If $0<c\le\tau s^2$, then \eqref{eq:parameters} gives
\(\tau\le1/200\) and \(64\tau<\varepsilon\).  Lemma~\ref{lem:near} returns a candidate $T$ with
\[
  L(T)\le L(S^*)+\frac{64c^2}{s^2}
       <-\varepsilon c+64\tau c<0,
\]
so family (ii) contains a negative-$L$ candidate.  In the remaining case,
\eqref{eq:intermediate-regime} holds.  For the correct guess $s=|S^*|$, the discussion following Lemma~\ref{lem:threshold-localization} shows that some $v\in S^*\cap L_s$ produces
a universe $U(v)$ satisfying~\eqref{eq:deterministic-localization-size}.  If some coefficient $b_u$ in~\eqref{eq:F-quadratic} is negative, recovery returns the violated singleton
$\{u\}$.  Otherwise the deterministic implementation of Lemma~\ref{lem:dense-quadratic}, applied with accuracy $\xi_1$ from~\eqref{eq:localized-accuracy}, returns a set of negative $L$-value by the argument in Section~\ref{sec:lagrangian}.  Hence family (iii) also contains a negative-$L$ candidate.

There are polynomially many loops over $s$ and pivots, and every threshold closure can be constructed in polynomial time.  Moreover, $1/\xi_1=\poly(1/\varepsilon)$ and $\langle b\rangle\le\langle q\rangle+O(n\log n)$.  Remark~\ref{rem:bit-complexity} therefore bounds every recovery call by $2^{\poly(1/\varepsilon)}(n+\langle q\rangle)^{O(1)}$ bit operations, while exact evaluation of $L(T)$ also takes polynomial time.  The total running time is
$2^{\poly(1/\varepsilon)}(n+\langle q\rangle)^{O(1)}$.
\end{proof}

\input{rounding_simplified}

\section*{Acknowledgments}
The first author is supported by the grant PID2024-155946NB-I00 funded by Ministerio de Ciencia, Innovación y Universidades (MICIU), Agencia Estatal de Investigación (AEI/10.13039/501100011033) and the European Social Fund Plus (ESF+).

\bibliographystyle{abbrvnat}
\bibliography{main}

\appendix

\input{hardness}

\input{query_lb}
\input{seeded_cc}
\input{cluster_insertion}
\input{rounding_derandomization}
\input{covering_gap}


\end{document}

%% file: abstract.tex
We give a deterministic $(\ratioconst+\varepsilon)$-approximation for
correlation clustering on complete graphs, improving the previous best
factor of $1.485+\varepsilon$ of Cao et al.\ (STOC'24).

Our first main contribution is an efficient weak separation oracle for the
cluster-LP dual. Given signed vertex weights $q$, it either finds a set
$S$ with $q(S)>\cost(S)$ or certifies that $q/(1+\varepsilon)$ is dual
feasible, 
where $\cost(S)$ measures the correlation clustering disagreements attributed to $S$ in any clustering
in which $S$ is a cluster. 
Via the ellipsoid method, this yields $(1+\varepsilon)$-approximate
primal and dual solutions for the fractional cluster LP in deterministic time $2^{\poly(1/\varepsilon)}n^{O(1)}$ . The separator works directly on the original instance,
without global preclustering: a localization argument restricts the search
to a small universe, where weak regularity handles the resulting dense
quadratic minimization. 
Complementing this result, we prove that exact dual separation and cluster-LP optimization are NP-hard, even for complete unweighted instances. 

Our second main contribution is a new rounding of the cluster LP. It
combines cluster-based rounding with a continuous conditional pivot rule,
and its analysis relies on single variance inequality with explicit
weights. This bounds the cluster-LP integrality gap by
$\ratioconst$, near its known lower bound of $4/3$, and gives a
per-instance primal--dual certificate of approximation.

Finally, we extend our separator to obtain a $(1.92+\varepsilon)$-approximation for seeded correlation clustering (where each cluster may contain at most one seed from a prescribed
        seed
        set), and to
bounded-weight instances, yielding a deterministic polynomial-time
approximate implementation of the cluster-insertion primitive used in
combinatorial correlation clustering.

%% file: localization_threshold.tex
The goal of this section is to
generate a family of candidate universes, one of
which has size $O_\beta(s)$ 
and contains
the unknown optimum $S^*$.

First, we abstract away the key half-density property of $S^*$ expressed by Lemma~\ref{lem:structure}\ref{it:half-density}:
\begin{definition}[Cut-dense set]\label{def:cut-dense}
For $\delta\in(0,1]$, a set $S\subseteq V$ is \emph{$\delta$-cut-dense in $H$} if
\[
  e_H(A,S\setminus A)\ge \delta |A|\,|S\setminus A|
  \qquad\text{for every }A\subseteq S,
\]
i.e., if every cut of $H[S]$ contains at least a $\delta$ fraction of all possible edges across that cut.  
\end{definition}

The algorithm makes a guess $s$ for the size of $S^*$ (which is $1/2$-cut-dense); one of them will be correct ($s = |S^*|$).
For each $s$, define the low-degree set of eligible vertices
\begin{equation*}
  L_s=\left\{u\in V:
  d(u)\le s-1+\frac{100s}{\beta}\right\}.
\end{equation*}
By Lemma~\ref{lem:structure}\ref{it:pivots},
   low-degree vertices comprise the majority of $S^*$ (and, in particular, $L_s \neq \emptyset$):
\begin{equation}
  |S^*\setminus L_s|\le\frac{2\beta\cost(S^*)}{100s}<\frac{s}{100}.
\end{equation}
Moreover, every $w\in L_s$ satisfies
\[
  |\Gamma^+(w)|=d(w)+1\le s+\frac{100s}{\beta}\le\frac{101s}{\beta}.
\]



\subsection{Localization}\label{sec:localization}
The next lemma treats $S$ as an unknown hidden set.  The algorithm knows its size $s$
and a set $P$ of \emph{eligible pivots}.  It never needs to recognize which elements of $P$ belong to $S$.

\begin{lemma}[Localization of a hidden cut-dense set]\label{lem:threshold-localization}
Let $H=(V,E)$ be a graph, let $S\subseteq V$ have size $s\ge2$, let $P\subseteq V$, and let $D\ge1$.  Suppose that, for some $\delta\in(0,1]$ and $\eta\in[0,\delta^2)$,
\begin{enumerate}[label=(\roman*),ref=(\roman*)]
  \item\label{it:threshold-cut-dense} $S$ is $\delta$-cut-dense in $H$;
  \item\label{it:threshold-exceptions} $|S\setminus P|\le\eta s$;
  \item\label{it:threshold-degree-cap} $|\Gamma_H(w)|\le Ds$ for every $w\in P$.
\end{enumerate}
Choose a threshold $\theta$ with $0<\theta<\delta^2-\eta$, and define
\[
  \Phi_{s,\theta}(U)
  =U\cup\{u\in V:d^H_{U\cap P}(u)\ge\theta s\},
  \qquad
  \kappa=\frac{1-\delta^2+\eta}{1-\theta}<1.
\]
Let
\begin{equation}\label{eq:threshold-round-count}
  j=\min\left\{r\in\mathbb Z_{\ge0}:(1-\delta)\kappa^r\le\delta-\eta-\theta\right\}.
\end{equation}
For every $v\in P$, set
\[
  U_0(v)=\Gamma_H(v),
  \qquad
  U_{i+1}(v)=\Phi_{s,\theta}(U_i(v))\quad(0\le i\le j),
  \qquad
  U(v)=U_{j+1}(v).
\]
Then, for every $v\in S\cap P$,
\begin{equation}\label{eq:threshold-localization-conclusion}
  S\subseteq U(v)
  \qquad\text{and}\qquad
  |U(v)|\le D\left(1+\frac D\theta\right)^{j+1}s.
\end{equation}
Consequently, the family $\{U(v):v\in P\}$ contains a universe satisfying~\eqref{eq:threshold-localization-conclusion} for the unknown set $S$.
\end{lemma}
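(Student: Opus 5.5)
The plan is to prove the two parts of \eqref{eq:threshold-localization-conclusion} separately. The size bound is a double count. The containment $S\subseteq U(v)$ comes from an invariant on the hidden part $R_i:=S\setminus U_i(v)$: it shrinks geometrically for $j$ rounds, and one final round then captures everything that remains. Fix $v\in S\cap P$ and write $U_i=U_i(v)$ throughout.

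\emph{Size.} By~\ref{it:threshold-degree-cap}, $|U_0|=|\Gamma_H(v)|\le Ds$. Every vertex of $U_{i+1}\setminus U_i$ has at least $\theta s$ neighbors in $U_i\cap P$. By~\ref{it:threshold-degree-cap} again, the number of incidences emanating from $U_i\cap P$ is at most $|U_i|\,Ds$. Hence $|U_{i+1}\setminus U_i|\le |U_i|D/\theta$, that is, $|U_{i+1}|\le(1+D/\theta)|U_i|$. Iterating $j+1$ times gives $|U(v)|\le D(1+D/\theta)^{j+1}s$.

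\emph{Initial hidden part.} Applying cut-density~\ref{it:threshold-cut-dense} with $A=\{v\}$ gives $d_S(v)\ge\delta(s-1)$. Since $U_0$ is the closed neighborhood, it follows that $|R_0|\le(1-\delta)(s-1)\le(1-\delta)s$.

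\emph{Contraction step.} The invariant I aim for is $|R_i|\le(1-\delta)\kappa^i s$. Let $R'=R_{i+1}\subseteq R_i$. Every $u\in R'$ was rejected by $\Phi_{s,\theta}$, so it has fewer than $\theta s$ neighbors in $U_i\cap P$. Its neighbors in $S\setminus R_i$ therefore number fewer than $\theta s+|S\setminus P|\le(\theta+\eta)s$, using~\ref{it:threshold-exceptions}. Apply cut-density to the cut $(R',S\setminus R')$. This bounds $\delta|R'|(s-|R'|)$ above by $|R'|(\theta+\eta)s+e_H(R',R_i\setminus R')$. I would then bound $e_H(R',R_i\setminus R')$ trivially by $|R'|(|R_i|-|R'|)$, or better by its own cut-density count. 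Combining this with $|R_i|\le(1-\delta)s$ should yield $|R'|\le\kappa|R_i|$. The $\delta^2$ in $\kappa$ suggests the intended estimate applies cut-density twice: once to the cut between $R_i$ and the exposed part, giving $e(R_i,S\setminus R_i)\ge\delta|R_i|\cdot\delta s$ since $s-|R_i|\ge\delta s$, and once to average the deficit over $R_i$. Under that averaging, the rejected vertices each contribute less than $(\theta+\eta)s$, while the accepted ones contribute at most $s$. Solving gives $|R'|(1-\theta)\le|R_i|(1-\delta^2+\eta)$, which is exactly $\kappa$. I expect getting these constants to line up exactly to be the main obstacle, so I would first try this averaging form.

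\emph{Final round.} By the choice of $j$ in~\eqref{eq:threshold-round-count}, $|R_j|\le(\delta-\eta-\theta)s$. Take any $u\in R_j$. By cut-density with $A=\{u\}$, $d_S(u)\ge\delta(s-1)$. Removing neighbors in $R_j$ and in $S\setminus P$ still leaves at least $\delta(s-1)-(\delta-\eta-\theta)s-\eta s\approx\theta s$ neighbors in $U_j\cap P$, up to an $O(\delta)$ rounding slack that has to be absorbed, for instance by using $s-|R_j|$ in place of $s-1$. So $u$ crosses the threshold, $R_{j+1}=\emptyset$, and $S\subseteq U(v)$. The final ``consequently'' clause is then immediate, because $v$ ranges over all of $P$ and $S\cap P\ne\emptyset$, since $\eta<1$.
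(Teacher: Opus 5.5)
Your proposal is correct and follows the paper's proof essentially step for step: the same double-counting bound on $|U_{i+1}|$, the same singleton cut-density start, and the same contraction argument (cut density across $S\cap U_i$ versus $S\setminus U_i$ using $|S\cap U_i|\ge\delta s$, then averaging with rejected vertices contributing $<(\theta+\eta)s$ and accepted ones at most $s$), together with the same final-round count. Your averaging literally gives $|R_{i+1}|(1-\theta-\eta)\le|R_i|(1-\delta^2)$, which implies the $\kappa$-bound because $|R_{i+1}|\le|R_i|$. The slack you flag in the last round is absorbed exactly as in the paper: $u\in R_j$ is not its own neighbor, so at most $|R_j|-1$ of its neighbors lie in $R_j$, which gives $d_{U_j\cap P}(u)\ge\theta s+1-\delta\ge\theta s$.
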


This can be viewed as a set-cover statement with a hidden ground set: the covering sets are the neighborhoods $\Gamma_H(v)$.  Ordinary high-frequency coverage would not suffice because the already exposed centers could lie in a part of $S$ having few edges to the rest.  Cut density rules out precisely this obstruction.  The operation $\Phi_{s,\theta}$ ensures that every vertex covered by sufficiently many neighborhoods of exposed eligible vertices is exposed in the next round.
In our main application, $S=S^*$, $s=|S^*|$ will be guessed, $\delta=1/2$, $P=L_s$, $D=101/\beta$, $\eta=1/100$ and $\theta=1/20$.

\begin{proof}
Fix $v\in S\cap P$, which exists by~\ref{it:threshold-exceptions}.  For brevity write $U_i=U_i(v)$, and put
\[
  W_i=U_i\cap P,\qquad Y_i=S\cap U_i,\qquad B_i=S\setminus U_i,\qquad A_i=S\cap W_i,
\]
and let $R=S\setminus P$ be the exceptional part of the hidden set.

\paragraph{The initial neighborhood covers a $\delta$ fraction of $S$.}
Applying cut density to the partition $S=\{v\}\mathbin{\dot\cup}(S\setminus\{v\})$ gives $d_S^H(v)\ge\delta(s-1)$.  Since the neighborhood includes $v$,
\begin{equation}\label{eq:threshold-initial-coverage}
  |Y_0|=|S\cap\Gamma_H(v)|\ge\delta(s-1)+1\ge\delta s,
  \qquad
  |B_0|\le(1-\delta)s.
\end{equation}

\paragraph{Every nonfinal closure contracts the hidden remainder.}
The sets $U_i$ only grow, so $|Y_i|\ge\delta s$.  Cut density across $Y_i\mathbin{\dot\cup}B_i$ and the fact that at most $|R||B_i|$ of these edges can have their endpoint in $Y_i\setminus P$ give
\begin{equation}\label{eq:threshold-contraction-cut}
  e_H(A_i,B_i)
  \ge\delta|Y_i||B_i|-|R||B_i|
  \ge(\delta^2-\eta)s|B_i|.
\end{equation}
Let $g_i=|B_i\cap U_{i+1}|$.  A vertex of $B_i\setminus U_{i+1}$ has fewer than $\theta s$ neighbors in $W_i$, and hence in $A_i\subseteq W_i$, while a vertex of $B_i\cap U_{i+1}$ has at most $s$ neighbors in $A_i$.  Therefore
\[
  (\delta^2-\eta)s|B_i|
  \le g_i s+(|B_i|-g_i)\theta s,
\]
and rearranging gives
\begin{equation}\label{eq:threshold-contraction}
  |B_{i+1}|\le\kappa|B_i|.
\end{equation}
After $j$ such contractions,
      $
  |B_j|\le(1-\delta)\kappa^j s\le(\delta-\eta-\theta)s.
  $

\paragraph{The final closure captures everything else.}
Fix $u\in B_j$.  The singleton consequence of cut density gives $d_S^H(u)\ge\delta(s-1)$.  All its neighbors in $S$ that need not belong to $A_j$ lie either in $B_j\setminus\{u\}$ or in $R$.  Hence
\[
\begin{aligned}
  d^H_{W_j}(u)
  \ge d^H_{A_j}(u)
  &\ge d^H_S(u)-(|B_j|-1)-|R|\\
  &\ge\delta(s-1)-(\delta-\eta-\theta)s+1-\eta s\\
  &=\theta s+1-\delta
  \ge\theta s.
\end{aligned}
\]
Thus $u$ is added to $U_{j+1}$.  This holds for every $u\in B_j$, proving $S\subseteq U_{j+1}$.

\paragraph{The visible universe remains small.}
For $0\le i\le j$, let
\[
  C_i=\{u\in V:d^H_{W_i}(u)\ge\theta s\}.
\]
Double-counting edge incidences with $W_i$, followed by~\ref{it:threshold-degree-cap}, gives
\[
  \theta s|C_i|
  \le\sum_{u\in V}d^H_{W_i}(u)
  =\sum_{w\in W_i}d^H(w)
  \le Ds|W_i|.
\]
Therefore $|C_i|\le(D/\theta)|W_i|\le(D/\theta)|U_i|$, and hence
\[
  |U_{i+1}|\le\left(1+\frac D\theta\right)|U_i|.
\]
Finally, $|U_0|=|\Gamma_H(v)|\le Ds$.  Applying the preceding growth bound 
proves~\eqref{eq:threshold-localization-conclusion}.
\end{proof}

For our application, $\kappa=4/5$ and~\eqref{eq:threshold-round-count} gives $j=1$, so the construction performs exactly two threshold closures.  Abbreviate
\begin{equation}\label{eq:threshold-closure}
  \Phi_s(U):=\Phi_{s,1/20}(U)
  =U\cup\left\{u\in V:d^H_{U\cap P}(u)\ge\frac{s}{20}\right\},
\end{equation}
and, for every $v\in P$, form
\begin{equation}\label{eq:threshold-universe}
  U(v)=\Phi_s\bigl(\Phi_s(\Gamma_H(v))\bigr).
\end{equation}
Apply Lemma~\ref{lem:threshold-localization} to the positive-edge graph $H=(V,E^+)$ with $P=L_s$ and $D=101/\beta$.  For the correct guess $s=|S^*|$,
      Lemma~\ref{lem:structure}\ref{it:half-density} and the bounds preceding the localization lemma verify its hypotheses.  Since $\beta\le 1$, every $v\in S^*\cap L_s$ produces a universe satisfying
\begin{equation}\label{eq:deterministic-localization-size}
  S^*\subseteq U(v),
  \qquad
  |U(v)|\le \frac{101}{\beta}\left(1+\frac{2020}{\beta}\right)^2s
  \le\left(\frac{2021}{\beta}\right)^3s.
\end{equation}

\subsection{Recovery: extracting a negative witness}\label{sec:lagrangian}
It remains to minimize the Lagrangian $L$ on each candidate universe $U$.  For $T\subseteq U$, let $x=\one[T]$, and let $A$ be the symmetric zero-diagonal matrix with $A_{uv}=1-2\cdot\one[uv\in E^+]$ for $u\ne v$.  Expanding the definition of $L$ gives
\begin{equation}\label{eq:F-quadratic}
  L(x)=\frac12x^{\mathsf T}Ax+b^{\mathsf T}x,
  \qquad
  b_u=\frac12d(u)-q_u.
\end{equation}
If $b_u<0$ for some $u\in U$, then $L(\{u\})=b_u<0$, so this singleton is already a violated constraint.  We may therefore assume that $b\ge0$ on $U$.

The following lemma, proved in the next section, performs the required minimization.
\begin{lemma}\label{lem:dense-quadratic}
Let $A$ be a symmetric $n\times n$ matrix with zero diagonal and entries in
$[-1,1]$,
    and let
$b\in\mathbb Q_{\ge0}^n$.  For
\[
  \Phi(x)=\frac12x^{\mathsf T}Ax+b^{\mathsf T}x,
  \qquad x\in\{0,1\}^n,
\]
and every $\xi>0$, one can find $x\in\{0,1\}^n$ such that
\[
  \Phi(x)\le\min_{y\in\{0,1\}^n}\Phi(y)+\xi n^2
\]
either
\begin{enumerate}[label=(\roman*)]
  \item with constant success probability, in time $2^{\poly(1/\xi)}+\poly(1/\xi)n$; or
  \item deterministically, in time $2^{\poly(1/\xi)}+\poly(1/\xi)n^2$.
\end{enumerate}
Both bounds are in the RAM model with unit-cost arithmetic and query access to the entries of~$A$.
\end{lemma}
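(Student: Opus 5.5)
We follow the Frieze--Kannan weak regularity route: replace $A$ by a low-complexity cut-decomposition, and then optimize the resulting structured objective by brute force over a constant-size summary.

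\textbf{Step 1: cut decomposition.} Compute a weakly regular approximation
\[
  D=\sum_{k=1}^{m} c_k\,\one_{R_k}\one_{C_k}^{\mathsf T},\qquad m=O(1/\xi^2),\quad |c_k|\le O(1),
\]
with cut-norm error $\|A-D\|_\square\le\xi n^2/2$. That is, for all $x,y\in\{0,1\}^n$ we have $|x^{\mathsf T}(A-D)y|\le\xi n^2/2$. Symmetrize $D$ if necessary. Then $\frac12x^{\mathsf T}Ax$ and $\frac12x^{\mathsf T}Dx$ differ by at most $\xi n^2/4$ uniformly over $x$, so the diagonal of $D$ causes no trouble. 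Deterministically, the Frieze--Kannan energy-increment algorithm with an approximate cut-norm oracle achieves this; the oracle can be taken to be Alon--Naor's SDP-based approximation, or an exhaustive-sampling argument derandomized by enumeration. The randomized version uses sampling, giving time $\poly(1/\xi)n$ plus $2^{\poly(1/\xi)}$ for the cut search on samples. Deterministically one pays $\poly(1/\xi)n^2$ to read $A$.

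\textbf{Step 2: reduce to a constant-dimensional problem.} The sets $R_k,C_k$ generate a common refinement of $V$ into at most $2^{2m}$ atoms $V_1,\dots,V_p$. Within each atom, the quadratic part of $\Phi_D(x)=\frac12x^{\mathsf T}Dx+b^{\mathsf T}x$ depends only on the counts $n_\ell=|x\cap V_\ell|$. The linear part $b^{\mathsf T}x$ is minimized, for fixed $n_\ell$, by taking the $n_\ell$ vertices of $V_\ell$ with smallest $b_u$. We may assume the vertices are pre-sorted within each atom. Hence
\[
  \min_x\Phi_D(x)=\min_{(n_\ell)}\Bigl[\tfrac12\textstyle\sum_{\ell,\ell'}D_{\ell\ell'}n_\ell n_{\ell'}+\sum_\ell F_\ell(n_\ell)\Bigr],
\]
where $F_\ell$ is the sum of the $n_\ell$ smallest $b$-values in $V_\ell$. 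Each $F_\ell$ is convex and piecewise linear because $b\ge0$ values are sorted. (Nonnegativity is not actually essential here.)

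\textbf{Step 3: discretize.} Round each $n_\ell$ to a multiple of $\xi n/(p\cdot O(1))$. Changing counts by $\Delta$ in total changes the quadratic term by $O(\Delta n)$. The linear term is handled by an exchange argument: since the quadratic part is Lipschitz with constant $O(n)$ per unit change of a count, minimizing exactly over the linear part for a rounded grid point loses $O(\xi n^2)$. Alternatively, enumerate only the quadratic coordinates on a grid and, for each grid cell, solve the separable convex minimization of $\sum_\ell F_\ell$ inside the box, linearizing the quadratic term at the grid point. The number of grid points is $(p/\xi)^{p}=2^{2^{\poly(1/\xi)}}$.

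\textbf{Main obstacle: controlling the atom count.} The naive atom count $p=2^{O(1/\xi^2)}$ gives a doubly exponential bound, while the stated time is $2^{\poly(1/\xi)}$. The fix is to enumerate, instead of atom counts, the $2m$ numbers $r_k=|x\cap R_k|$ and $s_k=|x\cap C_k|$, each discretized to a multiple of $\xi n/m$. The quadratic form $\sum_k c_k r_k s_k$ depends only on these, so there are only $(m/\xi)^{2m}=2^{\poly(1/\xi)}$ guesses. For each guess we must find $x$ minimizing $b^{\mathsf T}x$ subject to $|x\cap R_k|\approx r_k$ and $|x\cap C_k|\approx s_k$. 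This is an integer program with $O(m)$ constraints. Its LP relaxation has a basic optimal solution with at most $O(m)$ fractional coordinates; rounding these down changes each count by $O(m)$, which is negligible when $n\gg m^2/\xi$. For small $n$, use brute force in time $2^{n}=2^{\poly(1/\xi)}$. The LP has size $\poly(n)$, so the claimed near-linear randomized bound requires instead solving it on the atom-aggregated form. I expect this bookkeeping to be the delicate part; it suffices for the deterministic $\poly$ bound used in Theorem~\ref{thm:main}.
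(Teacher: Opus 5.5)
Your approximation argument is essentially the paper's. You use a cut decomposition and enumerate a grid of boxes for the cut counts $(|x\cap R_k|,|x\cap C_k|)_k$, of which there are only $2^{\poly(1/\xi)}$. In each box you solve an LP minimizing the linear term and round down a basic solution with $O(m)$ fractional entries; this is harmless precisely because $b\ge0$. Small $n$ is handled by brute force.

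\textbf{The gap is the running time.} The lemma claims additive bounds, and your construction gives a multiplicative one. Your box LPs have one variable per vertex, so each of the $2^{\poly(1/\xi)}$ boxes costs $\poly(n)$, for a total of $2^{\poly(1/\xi)}\poly(n)$. Choosing among the candidates by evaluating $\Phi$ would likewise cost $n^2$ per candidate. Aggregating by atoms does not fix this by itself. Inside an atom the $b_u$ are arbitrary, so the per-atom objective $F_\ell$ has up to $|V_\ell|$ breakpoints and the LP size still depends on $n$.

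\textbf{Missing idea: make $b$ take only $O(1/\xi)$ values.}
\begin{itemize}
\item First discard every $u$ with $b_u\ge n$. Because $A$ has zero diagonal and entries $\ge-1$, removing such a $u$ from a set containing it changes $\Phi$ by $-b_u-\sum_{v\ne u}A_{uv}x_v\le -n+(n-1)<0$. So no optimum contains $u$.
\item Round the remaining $b_u\in[0,n)$ up to multiples of $h=\xi n/8$. This loses at most $hn\le\xi n^2/8$ and leaves $O(1/\xi)$ distinct values.
\item Group vertices by (cut-membership signature, rounded coefficient). This gives $2^{\poly(1/\xi)}$ classes.
\item Each box LP then lives in class counts $0\le y_C\le|C|$. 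Its size is $2^{\poly(1/\xi)}$, independent of $n$. It is solved by Tardos' algorithm, which is strongly polynomial here because the constraint matrix is $0/\pm1$; a general-purpose LP solver would not be strongly polynomial in the unit-cost model with arbitrary rational $b$.
\item Compare candidates through the surrogate $P(z(Y))+\overline b^{\mathsf T}\one_Y$, which depends only on class counts. Only the winner is materialized.
\end{itemize}
After the decomposition, $n$ then enters only through a single $\poly(1/\xi)\,n$ scan.

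\textbf{Deterministic part (ii).} You also need a deterministic decomposition running in $\poly(1/\xi)n^2$ time. The Alon--Naor SDP is polynomial in $n$ but not of this form. Derandomizing the sampling by enumerating all samples costs $n^{\poly(1/\xi)}$. The paper instead uses the Fox--Lov\'asz--Zhao algorithm. Its guarantee $\lVert A-\widetilde A\rVert_2\le\delta n$ implies the required cut bound via $|\one_R^{\mathsf T}(A-\widetilde A)\one_S|\le\lVert A-\widetilde A\rVert_2\sqrt{|R||S|}$.

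As you note, your weaker $2^{\poly(1/\xi)}\poly(n)$ bound would still suffice for Theorem~\ref{thm:main}. It does not, however, prove the lemma as stated.
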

To apply it,  set
\begin{equation}\label{eq:localized-accuracy}
  \xi=\frac{\varepsilon\tau}{2}\left(\frac{\beta}{2021}\right)^6.
\end{equation}
For the universe in~\eqref{eq:deterministic-localization-size}, the intermediate-regime lower bound $\cost(S^*)>\tau s^2$ gives
\[
  \xi|U|^2\le\frac{\varepsilon\tau s^2}{2}<\frac{\varepsilon\cost(S^*)}{2}.
\]
Because $S^*\subseteq U$ and the gap case gives $L(S^*)<-\varepsilon\cost(S^*)$, the set $T\subseteq U$ represented by $x$ in Lemma~\ref{lem:dense-quadratic} provides the desired
witness:
\[
 L(T)\le\min_{X\subseteq U}L(X)+\xi|U|^2
  \le L(S^*)+\xi|U|^2<-\frac{\varepsilon\cost(S^*)}{2}<0.
\]
\mycomment{
Set
\begin{equation}\label{eq:localized-accuracy}
  \xi=\frac{\varepsilon\tau}{2}\left(\frac{\beta}{2021}\right)^6.
\end{equation}
For the universe in~\eqref{eq:deterministic-localization-size}, the intermediate-regime lower bound $\cost(S^*)>\tau s^2$ gives
\[
  \xi|U|^2\le\frac{\varepsilon\tau s^2}{2}<\frac{\varepsilon\cost(S^*)}{2}.
\]
Because $S^*\subseteq U$ and the gap case gives $L(S^*)<-\varepsilon\cost(S^*)$, any set $T\subseteq U$ satisfying
\begin{equation}\label{eq:localized-recovery-guarantee}
  L(T)\le\min_{X\subseteq U}L(X)+\xi|U|^2
\end{equation}
has
\[
  L(T)\le L(S^*)+\xi|U|^2<-\frac{\varepsilon\cost(S^*)}{2}<0.
\]
}

%% file: dense_quadratic_linear.tex
\subsection{A dense quadratic subroutine}\label{sec:dense}
We follow the weak-regularity approach used by Bonchi, Garc\'ia-Soriano, and Kutzkov~\cite[Section 5]{local_corr} for correlation clustering.  We use either the fast randomized cut decomposition of Frieze and Kannan~\cite{frieze_kannan_quick} or the deterministic version of Fox, Lov\'asz, and Zhao~\cite{FLZ19}; everything after the construction of the decomposition is the same.
Intuitively, two vertices in the same class of a weakly-regular partition have roughly the same number of connections with other classes and can thus be treated as equivalent. A
nearly-optimal solution will be the union of weakly-regular classes.
The partition has $2^{\poly(1/\xi)}$ classes, described by $\poly(1/\xi)$ weighted cuts.  Our algorithm works directly with the cut decomposition.
We discretize the intersection sizes of a solution with each cut set and
            group together vertices that are indistinguishable to
both the decomposition and a coarsened linear objective.  Every resulting LP
then has size depending only on $\xi$.

For a generic $N\times N$ matrix $A'$ with entries in
$[-1,1]$ and $R,S\subseteq[N]$, write
\[
  A'(R,S)=\sum_{i\in R}\sum_{j\in S}A'_{ij}=
\one_{R}A'\one_{S}^{\mathsf T}.
\]
Given an accuracy $\delta>0$, both constructions produce a decomposition
\begin{equation}
  \widetilde A
  =\sum_{r=1}^w d_r\one_{R_r}\one_{S_r}^{\mathsf T},
  \qquad w=\poly(1/\delta),
\end{equation}
such that, for every $R,S\subseteq[N]$,
\begin{equation}\label{eq:relative-cut-error}
  \left|A'(R,S)-\sum_{r=1}^w
  d_r|R\cap R_r|\,|S\cap S_r|\right|
  \le\delta N\sqrt{|R||S|}
  .
\end{equation}

For the randomized implementation, Theorem~1 and Section~4.2 of~\citet{frieze_kannan_quick} give~\eqref{eq:relative-cut-error}, with $w=O(\delta^{-4})$, in $\widetilde O(\delta^{-12})$ time and matrix queries, with constant success probability.  The cut sets are represented implicitly, and all their memberships can be evaluated in $N\poly(1/\delta)$ additional time.  Moreover, equation~(42) in that paper, together with $|R_r|,|S_r|\ge N/3$ and $\lVert A'\rVert_F^2\le N^2$, gives $\sum_r d_r^2\le27$.

For the deterministic implementation, Theorem~2.1 of~\citet{FLZ19} runs in $\poly(1/\delta)N^2$ time, outputs the cut sets explicitly, and has $w=O(\delta^{-16})$ and $|d_r|\le\delta^8/300$.  It guarantees $\lVert A'-\widetilde A\rVert_2\le\delta N$, from which~\eqref{eq:relative-cut-error} follows because
\[
  \left|\one_R^{\mathsf T}(A'-\widetilde A)\one_S\right|
  \le\lVert A'-\widetilde A\rVert_2\sqrt{|R||S|}.
\]
Thus, in the randomized and deterministic implementations respectively,
\begin{equation}\label{eq:coefficient-length}
  L:=\sum_{r=1}^w|d_r|\le\sqrt{27w}
  \qquad\text{and}\qquad
  L\le\frac{w\delta^8}{300}.
\end{equation}
In particular, for $0<\delta\le1$, both implementations have $w,L=\poly(1/\delta)$ and $L\le6w$.

The assertion of Lemma~\ref{lem:dense-quadratic} is immediate for $\xi\ge1$, so assume $0<\xi<1$.
By decreasing $\xi$ by at most a factor of two, we may further assume that it is dyadic and has encoding length $O(\log(1/\xi))$; we rename the resulting accuracy parameter $\xi$.
\paragraph{Removing large linear coefficients.}
If we fix all $x_u$ but one, $\Phi$ becomes linear in the remaining coordinate because $A$ is zero on the diagonal.  It follows that
no optimum contains a coordinate $u$ with $b_u\ge n$.  Indeed, if $x_u=1$
and $x'=x-e_u$, then
$
  \Phi(x')-\Phi(x)
  =-b_u-\sum_{v\ne u}A_{uv}x_v
    <0.
    $
Delete these coordinates and let $V_0$ be the remaining set, of size
$n_0=|V_0|$.
The optimum is unchanged; 
if $n_0=0$, then
return $x=0$.

\paragraph{The cut surrogate.}
Apply the decomposition above to $A[V_0,V_0]$ with
    $
  \delta=\frac{\xi}{4}.
  $
Let $\mathcal H$ consist of all row and column sets $R_r,S_r$, let
$D=|\mathcal H|\le2w$, and put
$
  L=\sum_{r=1}^w|d_r|\le6w.
  $
For $X\subseteq V_0$, define
\[
  z_H(X)=|X\cap H|\qquad(H\in\mathcal H),
  \qquad
  P(z(X))=\frac12\sum_{r=1}^w
  d_rz_{R_r}(X)z_{S_r}(X).
\]
Taking $R=S=X$ in~\eqref{eq:relative-cut-error} gives the uniform estimate
\begin{equation}\label{eq:quadratic-cut-error}
  \left|\frac12\one_X^{\mathsf T}A\one_X-P(z(X))\right|
  \le\frac{\delta n_0^2}{2}.
\end{equation}
Moreover, for all $z,z'\in[0,n_0]^{\mathcal H}$,
\begin{equation}\label{eq:P-Lipschitz}
  |P(z)-P(z')|\le Ln_0\|z-z'\|_\infty,
\end{equation}
because $|ab-a'b'|\le n_0|a-a'|+n_0|b-b'|$ for
$a,b,a',b'\in[0,n_0]$.

Thus $P(z(X))$ approximates the quadratic term $\frac12\one_X^{\mathsf T}A\one_X$ in $\Phi(\one_X)$.  To minimize the full objective, we enumerate small boxes for the counting vector $z(X)$.  Since $P$ depends only on $z(X)$,~\eqref{eq:P-Lipschitz} makes it nearly constant within a box.
The linear term behaves differently:
two sets with the same cut counts may select vertices with very different coefficients $b_u$.  We therefore minimize the linear term subject to the box constraints; the box
containing an optimum of $\Phi$ then yields a near-optimal candidate.  To make the size of these LPs independent of $n_0$, we first coarsen $b$: vertices with the same cut-set
memberships and the same rounded coefficient become interchangeable.

\paragraph{Coarsening and aggregating the linear term.}
Set
\[
  h=\frac{\xi n}{8},
  \qquad
  \overline b_u=h\left\lceil\frac{b_u}{h}\right\rceil
  \quad(u\in V_0).
\]
Thus
\begin{equation}\label{eq:linear-rounding-error}
  0\le\overline b^{\mathsf T}\one_X-b^{\mathsf T}\one_X<h|X|
  \le hn_0
  \qquad(X\subseteq V_0).
\end{equation}
As $b_u<n$, the integer $\overline b_u/h$ has
$O(1/\xi)$ possible values.

Place two vertices in the same class if they have the same membership vector
in the sets of $\mathcal H$ and the same value of $\overline b_u/h$.  Let
$\mathcal C$ be the collection of nonempty classes, and write
$\overline b_C=h j_C$ for the common rounded coefficient of class $C$.  Then
\begin{equation}
  |\mathcal C|
  \le2^D\left(2+\left\lceil\frac8\xi\right\rceil\right)
  =2^{\poly(1/\xi)}.
\end{equation}
The classes and their vertex lists are obtained in one scan of $V_0$.

\paragraph{Parameter-sized box LPs.}
Put
$
  \eta=\frac{\xi}{4(1+6w)}.
$
Since $L\le6w\le1+6w$, we have $L\eta\le\xi/4$.
If $n_0<4D/\eta$, exhaustive search takes $2^{\poly(1/\xi)}$ time, so assume
$n_0\ge4D/\eta$. Enumerate the axis-parallel grid
boxes of side length $\Delta=\eta n_0/2$ in $[0,n_0]^D$.  There are
$
  (O(1/\eta))^D=2^{\poly(1/\xi)}
  $
such boxes.  For each box $\prod_{H\in\mathcal H}[t_H,t_H+\Delta]$, where $t=(t_H)_{H\in\mathcal H}$ is its lower-corner vector, solve 
{\small
\begin{equation}\label{eq:aggregated-statistic-LP}
\begin{aligned}
  \text{minimize}\quad
    &\sum_{C\in\mathcal C}j_Cy_C,\\
  \text{subject to}\quad
    &0\le y_C\le |C| &&(C\in\mathcal C),\\
    &t_H\le\sum_{\substack{C\in\mathcal C\\C\subseteq H}}y_C
       \le t_H+\Delta &&(H\in\mathcal H).
\end{aligned}
\end{equation}
}

In this aggregated LP, $y_C$ represents how many vertices of class $C$ are selected.  By definition, for every $C\in\mathcal C$ and $H\in\mathcal H$ either $C\subseteq H$ or $C\cap H=\varnothing$.  Hence $\sum_{C\subseteq H}y_C$ is the number of selected vertices in $H$, i.e., the $H$-coordinate of the counting vector; the last constraints require this vector to lie in the current box.  The LP has $2^{\poly(1/\xi)}$ variables and constraints, independently of $n$.

If the LP is feasible, take an optimal basic solution and put $\widehat y_C=\lfloor y_C\rfloor$.  At most $2D$ variables can be strictly between their bounds; otherwise the active aggregate constraints cannot determine a basic solution.
Rounding down cannot increase the LP objective and decreases each $H$-coordinate by at most
    $
  0\le\sum_{\substack{C\in\mathcal C\\C\subseteq H}}(y_C-\widehat y_C)<2D\le\Delta,$
because $n_0\ge4D/\eta$ and $\Delta=\eta n_0/2$.
The resulting integral class counts define a set $Y\subseteq V_0$ by taking
the prescribed number of arbitrary vertices from each class.  At this stage we
store only the count vector, rather than materializing $Y$.

For every candidate $Y$, compute only the surrogate
\begin{equation}
  \Psi(Y)=P(z(Y))+\overline b^{\mathsf T}\one_Y,
\end{equation}
which is determined by its class counts.  Return the candidate minimizing
$\Psi$; in particular, do not query all entries of $A[Y,Y]$.  Only after the
best count vector has been found do we materialize $Y$.

\begin{proof}[Proof of Lemma~\ref{lem:dense-quadratic}]
We are ready to analyze the error and runtime bounds of the scheme above:
\paragraph{Approximation guarantee.}
Let $X^\star$ be an optimum solution, which by the first paragraph is
contained in $V_0$, and consider the box containing $z(X^\star)$.  Its
aggregate count vector is feasible for~\eqref{eq:aggregated-statistic-LP}.
If $Y_0$ is the rounded candidate obtained from an optimal basic solution in
this box, then
\[
  \overline b^{\mathsf T}\one_{Y_0}
  \le\overline b^{\mathsf T}\one_{X^\star},
  \qquad
  \|z(Y_0)-z(X^\star)\|_\infty\le2\Delta=\eta n_0.
\]
Let $Y$ be the candidate returned.  Using
\eqref{eq:quadratic-cut-error}, \eqref{eq:P-Lipschitz},
\eqref{eq:linear-rounding-error}, and $\Psi(Y)\le\Psi(Y_0)$, we obtain
\[
\begin{aligned}
  \Phi(\one_Y)
  &\le\Psi(Y)+\frac{\delta n_0^2}{2}
  \le\Psi(Y_0)+\frac{\delta n_0^2}{2}\\
  &\le P(z(X^\star))+\overline b^{\mathsf T}\one_{X^\star}
       +L\eta n_0^2+\frac{\delta n_0^2}{2} &\\
  &\le\Phi(\one_{X^\star})
       +(\delta+L\eta)n_0^2+hn_0
  \le\Phi(\one_{X^\star})+\xi n^2.
\end{aligned}
\]
The last inequality uses
$\delta=\xi/4$, $L\eta\le\xi/4$, $h=\xi n/8$, and $n_0\le n$.

\paragraph{Running time.}
After the cut decomposition has been found, evaluating all cut signatures, bucketing the linear coefficients, constructing the
nonempty classes, and materializing the answer costs
$n\poly(1/\xi)$ arithmetic operations.  After coarsening, $b$ enters the aggregated LPs only through the integers $j_C\in\{0,\ldots,\lceil8/\xi\rceil\}$.
There are $2^{\poly(1/\xi)}$ boxes, and each aggregated LP has $2^{\poly(1/\xi)}$ variables and constraints, with a constraint matrix whose entries are in $\{0,\pm1\}$ (after introducing slack variables).
Tardos' algorithm~\cite{tardos_lp} solves a linear program with a number of arithmetic operations polynomial in its dimensions and in the encoding length of its constraint matrix,
    independently of the objective and the right-hand side; when the feasible region is a polytope, as here, the solution it returns is a vertex. 
Hence all the LPs are solved with $2^{\poly(1/\xi)}$ arithmetic operations.  Using the randomized Frieze--Kannan construction gives total time $2^{\poly(1/\xi)}+n\poly(1/\xi)$ with constant success probability, whereas the deterministic Fox--Lov\'asz--Zhao construction gives $2^{\poly(1/\xi)}+n^2\poly(1/\xi)$ time.  This proves both parts of the lemma.
\end{proof}

\begin{remark}[Bit complexity]\label{rem:bit-complexity}
For the application above, $A$ has entries in $\{0,\pm1\}$.  The entries of $b$ are used in comparisons and in the rounding $\lceil b_u/h\rceil$, while the cut decompositions and the aggregated LPs use rational numbers of encoding length polynomial in $n+\langle b\rangle+\log(1/\xi)$.  Standard polynomial-time rational LP algorithms therefore implement the deterministic part of Lemma~\ref{lem:dense-quadratic} in
$
  2^{\poly(1/\xi)}(n+\langle b\rangle)^{O(1)}
  $
bit operations.
\end{remark}

%% file: rounding_simplified.tex
\section{Rounding the cluster LP}
\label{sec:rounding}
We now formally define and analyze the rounding scheme outlined in Section~\ref{sec:rounding-overview}. Let $(z_S)_{S\subseteq V}$ be any feasible solution of the cluster LP; we may assume $z_\emptyset=0$.


\begin{theorem}[Rounding the cluster LP]\label{thm:rounding}
There is a randomized polynomial-time algorithm which, given a
\mindisagree{} instance on a complete signed graph and a feasible
solution $z$ of~\eqref{eq:cluster-primal}, returns a
clustering $\mathcal A$ satisfying
\[
  \mathbb E[\cost(\mathcal A)]
  \le \alpha_0\sum_{S\subseteq V}\cost(S)z_S,\qquad \text{where } \qquad
  \alpha_0:=
  1.3865.
\]
Moreover, if $z$ is given explicitly and $z_S\ge\Delta$ for every $S$ in its support, the algorithm can be implemented in expected time $\widetilde O(n/\Delta^2)$ in the adjacency-list query model.
\end{theorem}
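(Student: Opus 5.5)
The algorithm runs two procedures on the same LP solution and outputs the better clustering. The first is the cluster-based rounding of Cao et al.~\cite{Cao+24arXiv}. The second is the conditional-affine pivot rounding $\mathcal P$ of Algorithm~\ref{alg:conditional-pivot}, with explicit rules $\mu_\sigma,\gamma_\sigma$: positive pairs always join if $v\in S$, outsiders are rescued with probability rising linearly from $0$ at $y=1/2$, and negative pairs are rounded independently. The cost of the better output is at most that of a convex combination $p\,\mathcal{C}+(1-p)\mathcal P$ for a fixed $p$. Hence it suffices to show that this mixed procedure has expected cost at most $\alpha_0$ times the LP cost. I would analyse it with the round-by-round budget framework of \cite[Section~2.2]{Cao+24arXiv}. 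Each pair $uv$ receives an LP budget: $1-y_{uv}$ for positive pairs and $y_{uv}$ for negative pairs, with $y$ the marginals in~\eqref{eq:marginals}. A pair's budget is released when one of its endpoints is removed. The target is that, in every round and for every alive set $W$, the expected released budget times $\alpha_0$ is at least the expected cost incurred. Cluster rounding is memoryless in the same sense, so its per-round contribution can be folded into the same accounting.

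Fix $W$ and a pivot $u\in W$. Both the expected cost and the expected released budget decompose into terms indexed by pairs $\{u,v\}$ and triangles $\{u,v,w\}\subseteq W$. The triangle terms depend on $y_{uv},y_{uw},y_{vw}$, on $y_{uvw}=\mathbb E[A_vA_w]$, and on the signs. Averaging over the uniform pivot, the surplus becomes a sum over unordered triangles and pairs. Each local surplus is an explicit polynomial in at most four LP quantities. These quantities satisfy the local feasibility constraints implied by $z$: $0\le y_{uvw}\le\min(y_{uv},y_{uw})$, $y_{uvw}\ge y_{uv}+y_{uw}-1$, and the analogous constraints with the roles of the vertices permuted.

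Some triangle types have negative local surplus, so termwise nonnegativity fails. The remedy is the variance certificate. I would choose closed-form weights $\Theta_\sigma(y)$ and set $Z_u=\sum_{v}\Theta_{\sigma(uv)}(y_{uv})(A_v-y_{uv})$. Then $\mathbb E[Z_u^2]\ge 0$ expands into diagonal pair terms $\Theta^2 y(1-y)$ and cross terms $2\Theta\Theta'(y_{uvw}-y_{uv}y_{uw})$. I would prove that each pair surplus is at least its diagonal term, and that each triangle surplus is at least the sum of its cross terms over the three choices of pivot. Summing then gives a total surplus of at least $\sum_u\mathbb E[Z_u^2]\ge0$. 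The covariance corrections are linear in $y_{uvw}$, so they can cancel the dependence on $y_{uvw}$ that made the bad triangles negative.

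The main obstacle is choosing $\mu,\gamma,\Theta$ and $p$ so that these finitely many polynomial inequalities hold on the whole feasible region with constant $1.3865$. I would first fit $\Theta$ numerically, by solving small LPs or SDPs on a grid of triangle types, and then guess a piecewise-rational closed form. Each inequality is then verified by exact rational interval subdivision with polynomial bounds on each cell; this is the computer-assisted part.

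For the running time, derandomization is deferred to Appendix~\ref{app:rounding-derandomization}. In the query model with $z_S\ge\Delta$, the support has size at most $n/\Delta$. One can sample $S$ and estimate $y_{uv}$ only for $v$ in the few support clusters containing $u$, since $y_{uv}=0$ for the remaining $v$. The neighbors of $u$ outside these clusters must still be handled. Here I would rely on the positive rule giving such pairs joining probability $0$ when $y<1/2$, so only adjacency lists and support clusters are touched. This yields $\widetilde O(n/\Delta^2)$ expected time.
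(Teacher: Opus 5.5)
Your architecture is the paper's: cluster rounding plus the conditional-affine pivot rule, the convex-combination bound, per-round budgets released when a pair is decided, and the statistic $Z_u$ whose expected square splits into pair and triangle terms. But the proposal stops where the theorem's content begins.

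\textbf{The explicit certificate is missing.} The constant $1.3865$ is established only by fixing explicit rules and weights and then checking the resulting local inequalities. You leave $\mu_-$, the rescue slope, the mixing weight $p$ and $\Theta_\pm$ to a future numerical fit, and give no argument that a fit succeeding at $1.3865$ exists. The paper commits to $p_0(y)=[\tfrac{13}{6}(y-\tfrac12)]_{[0,1]}$, $\mu_-(y)=\min\{\tfrac32y,2y-y^2\}$, $\omega=\alpha_0/2$, $\Theta_-\equiv0$, and $\Theta_+(y)=\sqrt{\rho}\,y(1-y)^2$ with $\rho=1012/25$. It then verifies every pair and triangle inequality in exact arithmetic.

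\textbf{The budget bookkeeping must be repaired first.} $\mathcal C$ and $\mathcal P$ are independent complete runs, so there are no common rounds into which cluster rounding can be ``folded''. Charging the pivot procedure against $\alpha_0\ell_e$ makes cluster rounding irrelevant, and would require pivot rounding alone to achieve $1.3865$. With the paper's $\mu_-(y)=\tfrac32y$ this already fails on a single negative pair: for $y\le\tfrac12$, $D(e)=2(\alpha_0-\tfrac32)y<0\le V(e)$. The fix has two parts. First, use the exact per-pair cost of cluster rounding, $\kappa^+(y)=\frac{2(1-y)}{2-y}$ and $\kappa^-(y)=\frac{y}{2-y}$, which follows because a pair ends together with probability $y/(2-y)$. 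Second, charge the pivot procedure against the residual budget $B^\sigma=(\alpha_0\ell^\sigma-\omega\kappa^\sigma)/(1-\omega)$. Near $y=0$ this gives a negative pair about $\frac{3\alpha_0}{2(2-\alpha_0)}y\approx3.39\,y$, and that is what absorbs the pivot's overpayment there.

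\textbf{The sublinear-time claim has two further gaps.} First, returning the cheaper clustering requires evaluating both costs. That can mean reading $\Theta(n^2)$ positive pairs inside clusters, which exceeds $\widetilde O(n/\Delta^2)$. The paper instead runs $\mathcal C$ with probability $\omega$ and $\mathcal P$ otherwise, which preserves the expectation bound because only the convex combination was used. Second, you never relate a round's work to its progress. Vertices outside $U_u=W\cap\bigcup_{S\ni u,\,z_S>0}S$ can be skipped for both signs, because $y_{uv}=0$ forces $A_v=0$ and joining probability $\mu_\sigma(0)=0$; you only argue this for positive pairs. Processing $U_u$ costs $O(|U_u|/\Delta)$, which can be $\Theta(n/\Delta)$ per round, so without a progress bound $n$ rounds give only $O(n^2/\Delta)$. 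The missing step is $\mu_\pm(y)\ge y\ge\Delta$ on $U_u$. A round then removes at least $\Delta|U_u|$ vertices in expectation, and the amortization of Cao et al.\ yields $\widetilde O(n/\Delta^2)$.
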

As in the rounding of Cao et al.~\cite[Section~2.2]{Cao+24arXiv}, the polynomial-time procedure can be derandomized by the method of pessimistic estimators of conditional
expectations~\cite{alon2016probabilistic}; see Appendix~\ref{app:rounding-derandomization}.  (This does not apply to the sublinear-query implementation in the second part.)

For distinct vertices put
\begin{equation}\label{eq:marginals}
  y_{uv}:=\sum_{S\supseteq\{u,v\}}z_S,
  \qquad
  y_{uvw}:=\sum_{S\supseteq\{u,v,w\}}z_S.
\end{equation}  
Fixing $u$, the weights $(z_S)_{S\ni u}$ form a probability distribution.
Thus, if $S$ is sampled from this distribution and
$A_v:=\mathbf 1[v\in S]$, then
$ \mathbb E[A_v]=\Pr[A_v=1] = y_{uv}$ and $\mathbb E[A_vA_w]=y_{uvw}.$

Using $1\ge \Pr[X \vee Y] = \Pr[X] + \Pr[Y] - \Pr[X \wedge Y]$,
we get from~\eqref{eq:cluster-primal}
      the local feasibility conditions
   that every three distinct vertices $u,v,w$ need to satisfy:
\begin{equation}\label{eq:local-feas}
\max\{0,y_{uv}+y_{uw}-1,y_{uv}+y_{vw}-1,y_{uw}+y_{vw}-1\}
\le y_{uvw}\le\min\{y_{uv},y_{uw},y_{vw}\}.
\end{equation}
The LP contribution of a pair $uv$ with LP value $y = y_{uv}$ is
$
  \ell^+(y)=1-y,
  \ell^-(y)=y.
  $
  Indeed,
\[
  \sum_S\cost(S)z_S
  =\sum_{uv\in E^-}\underbrace{\sum_{S\supseteq\{u,v\}}z_S}_{y_{uv}}
   +\frac12\sum_{uv\in E^+}\underbrace{\sum_{S:\,|S\cap\{u,v\}|=1}z_S}_{2(1-y_{uv})}
  =\sum_{uv\in\binom V2}\ell^{\sigma(uv)}(y_{uv}),
\]
where we used $\sum_{S\ni u}z_S=\sum_{S\ni v}z_S=1$.

\subsection{The two rounding procedures}
We run the two rounding procedures below independently
and return the cheaper clustering. 
\paragraph{Cluster-based rounding $\mathcal C$.}
Repeatedly perform the following: First, draw a set with probability proportional to $z_S$. Then, output
$S$ intersected with the currently unclustered
vertices as a new cluster, if this intersection is nonempty. Repeat until all vertices have been clustered.
For a pair $uv$ of LP value $y=y_{uv}$, the
probability of ending together is $y/(2-y)$~\cite[Lemma~6]{Cao+24arXiv}.
Hence the expected disagreement cost, according as the edge is positive or negative is, respectively,
\begin{equation}\label{eq:clock-cost}
  \kappa^+(y)=\frac{2(1-y)}{2-y} = \frac{2}{2-y} \ell^+(y),
  \qquad
  \kappa^-(y)=\frac{y}{2-y} = \frac{1}{2-y} \ell^-(y).
\end{equation}
The ratio between the expected disagreement cost and the LP cost is $\le 2$ (resp., $\le 1$) for positive (resp., negative) pairs.  Thus, cluster-based rounding performs particularly well on the latter.
Pivot rounding, described next, helps balance the tradeoff.

\paragraph{Conditional affine pivot rounding $\mathcal P$.}
In one round, choose a uniformly random alive (unclustered) pivot $u$, sample
$S\ni u$ according to $(z_S)_{S\ni u}$, and then independently add every other
alive vertex $v$ with probability
\begin{equation}\label{eq:conditional-affine}
  \pi(A_v,y_{uv}):=\mu_\sigma(y_{uv})+\gamma_\sigma(y_{uv})\,(A_v-y_{uv}),
  \qquad A_v=\mathbf 1[v\in S],
\end{equation}
where $\sigma=\sigma(uv)$ is the sign of the pair.  Here $A_v$ records whether
$v$ belongs to the LP cluster $S$ sampled from the pivot.  Since $\mathbb
E[A_v]=y_{uv}$, conditional on the pivot $u$ the probability that $v$ joins the
cluster of $u$ is $\mu_\sigma(y_{uv})$, while 
$\gamma_\sigma$ controls how strongly the decision follows membership in $S$. 

Equivalently, the rule is implemented with two coins: after sampling $S$, add
$v$ with probability $p_1(y):=\mu_\sigma(y)+\gamma_\sigma(y)(1-y)$ if $v\in S$,
and with probability $p_0(y):=\mu_\sigma(y)-\gamma_\sigma(y)\,y$ otherwise; the
functions below satisfy $0\le p_0\le p_1\le1$, so this is well defined.

      The hybrid pivot family of~\cite[Algorithm~2]{Cao+24arXiv} becomes a special case: its independent mode has $\gamma=0$, while its dependent mode has $(\mu,\gamma)=(y,1)$.  
Our framework is strictly more general: in the middle regime for positive edges, our rule specified below has $p_1(y)=1$ and $0<p_0(y)<1$, so it follows the sampled cluster while also
admitting outsiders.

\paragraph{Our rules.} We specify the closed-form rules that we use.  Write
\begin{equation}
  \boxed{
    \nu:=\frac{13}{6}\approx 2.17,
    \qquad
    q(y):=\Bigl[\nu\bigl(y-\tfrac12\bigr)\Bigr]_{[0,1]}
  }
\end{equation}
where $[\cdot]_{[0,1]}$ denotes truncation to $[0,1]$.  Then we set
\begin{equation}\label{eq:analytic-rules}
  \boxed{
    \begin{aligned}
      \mu_+(y)&=y+(1-y)q(y),
        &\qquad \gamma_+(y)&=1-q(y),\\
      \mu_-(y)&=\min\Bigl\{\tfrac32y,\;2y-y^2\Bigr\},
        &\qquad \gamma_-(y)&=0.
    \end{aligned}
  }
\end{equation}
For a positive pair this is exactly $p_1(y)=1$ and $p_0(y)=q(y)$:
\emph{every member of the sampled cluster joins, and an outsider is rescued
independently with probability $q(y)$}.  The rescue probability vanishes for
$y\le\frac12$, where the rule follows the sampled cluster exactly, and reaches
$1$ at
$y=\frac12+\frac1\nu=\frac{25}{26}\approx 0.962$,
    above which every vertex
joins.  For a negative pair the decision ignores $S$ and joins with probability
$\mu_-(y)$, which equals $\frac32y$ for $y\le\frac12$ and $2y-y^2$ for
$y\ge\frac12$.  The first expression lies in $[0,\frac34]$ and the second in $[\frac34,1]$.  Thus both conditional joining probabilities lie in $[0,1]$.

\subsection{Budget left for the pivot procedure}\label{sec:budgets}
We follow the residual-budget construction accounting of Cao et al.~\cite[Section~2.2]{Cao+24arXiv}.
What differs is the pivot rule and the certificate proving that these budgets cover the cost.

Since the algorithm returns the cheaper clustering from the two rounding procedures, we have
\[
  \mathbb E\bigl[\min\{\cost(\mathcal C),\cost(\mathcal P)\}\bigr]
  \le\omega\,\mathbb E[\cost(\mathcal C)]+(1-\omega)\,\mathbb E[\cost(\mathcal P)]
\]
for every
$\omega\in[0,1]$.
We take
    $
  \omega:=\frac{\alpha_0}{2} 
  $
  as
an analysis-only parameter.

We aim for every pair to contribute at most $\alpha_0$ times its LP cost.
By~\eqref{eq:clock-cost} a pair of sign $\sigma$ and LP value $y$ contributes
exactly $\kappa^\sigma(y)$ to $\mathbb E[\cost(\mathcal C)]$, so the amount
$\omega\kappa^\sigma(y)$ of its target $\alpha_0\ell^\sigma(y)$ is already
spent.  Dividing what remains by the weight $1-\omega$ of the pivot procedure
leaves it the \emph{budget}
\begin{equation}\label{eq:budget}
  B^\sigma(y):=\frac{\alpha_0\ell^\sigma(y)-\omega\kappa^\sigma(y)}{1-\omega},
  \qquad\text{so that}\qquad
  \omega\kappa^\sigma(y)+(1-\omega)B^\sigma(y)=\alpha_0\ell^\sigma(y).
\end{equation}
Since $\omega=\alpha_0/2$, these are explicitly, with
$\lambda:=\frac{\alpha_0}{2-\alpha_0}$,
\begin{equation}\label{eq:budget-explicit}
  B^+(y)=2\lambda\,\frac{(1-y)^2}{2-y},
  \qquad
  B^-(y)=\lambda\,\frac{y(3-2y)}{2-y} = \lambda - B^+(y).
\end{equation}

For a pair $e$ we abbreviate $B_e:=B^{\sigma(e)}(y_e)$,
$\kappa_e:=\kappa^{\sigma(e)}(y_e)$ and $\ell_e:=\ell^{\sigma(e)}(y_e)$.  The
budget is a bookkeeping quantity, invisible to the algorithm: in the analysis it
is a credit attached to the pair $e$ and released in the one round in which the
fate of $e$ is decided.  The argument consists in showing that, round by
round, the credit released covers the cost incurred.

\subsection{Analysis}\label{sec:rounding_analysis}

Throughout this subsection we fix one round of the pivot procedure and
condition on the whole history of the algorithm before that round; let $W$ be
the alive set at its start.  Every expectation below is conditional on that
history, and $\mathbb E_u$ denotes the expectation conditioned in addition on
the pivot of the round being $u$.  The pivot is uniform on $W$: each
$u\in W$ occurs with probability $1/|W|$.

For $v\in W$ let
\[
  \mathsf X_v:=\mathbf 1[v\text{ joins the cluster built in this round}],
\]
so that $\mathsf X_u=1$ for the pivot $u$ itself.  Let $S$ be the LP cluster sampled from $u$ and, as before, put $A_v=\mathbf 1[v\in S]$.  For $v\in W\setminus\{u\}$, the
conditional-affine rule~\eqref{eq:conditional-affine}
gives, using $\mathbb E_u[A_v]=y_{uv}$,
\[
  \mathbb E_u[\mathsf X_v\mid S]
  =\mu_{\sigma(uv)}(y_{uv}) +\gamma_{\sigma(uv)}(y_{uv})(A_v-y_{uv}),
  \qquad A_v=\mathbf 1[v\in S],
\]
\begin{equation}\label{eq:pivot-marginal}
  M_{uv}:=\mathbb E_u[\mathsf X_v]
  =\mu_{\sigma(uv)}(y_{uv}).
\end{equation}

For distinct $v,w\in W\setminus\{u\}$, 
$\mathsf X_v \text{ and } \mathsf X_w$
are independent conditional on $S$.
Recall that
$\mathbb E_u[A_v-y_{uv}] =
\mathbb E_u[A_w-y_{uw}] = 0$,
so
$\operatorname{Cov}_u[A_v, A_w] = \mathbb E_u[A_vA_w] - y_{uv} y_{uw} =  y_{uvw} - y_{uv}y_{uw}$.
Thus
\begin{align}
  J_u(v,w)
  :=\mathbb E_u[\mathsf X_v\mathsf X_w]
  &=\mathbb E_u[\mathbb E_u[\mathsf X_v\mathsf X_w\mid S]]
  =\mathbb E_u\!\left[
      \mathbb E_u[\mathsf X_v\mid S]\,
      \mathbb E_u[\mathsf X_w\mid S]
    \right] \notag\\
  &=
    M_{uv}M_{uw}
    +\gamma_{\sigma(uv)}(y_{uv})\,\gamma_{\sigma(uw)}(y_{uw}) \operatorname{Cov}_u[A_v, A_w] \notag\\
  &=M_{uv}M_{uw}
    +\gamma_{\sigma(uv)}(y_{uv})\,\gamma_{\sigma(uw)}(y_{uw})
      \bigl(y_{uvw}-y_{uv}y_{uw}\bigr).
      \label{eq:joint-unified}
\end{align}

\paragraph{Decided edges, payments, and released budgets.}
Call a pair $e\in\binom W2$ \emph{decided} in this round if at least one of its
endpoints joins the cluster.
Such a pair leaves the alive set and is settled for good.
An undecided pair keeps both
endpoints alive. Every pair is decided in
exactly one round, and it contributes a disagreement to the final clustering
exactly when the random variable
\[
  \mathsf{Pay}(e):=
  \begin{cases}
    \mathbf 1[\mathsf X_v\neq \mathsf X_w], & e=vw\in E^+,\\
            \mathsf X_v\mathsf X_w,
    & e=vw\in E^-,
  \end{cases}
\]
equals $1$ in that round.  We attach to $e$ the credit $B_e$ of Section~\ref{sec:budgets}
and release it in the round where $e$ is decided; that is, we set
\[
  \mathsf{Rel}(e):=B_e\cdot\mathbf 1[\mathsf X_v=1 \vee \mathsf X_w=1],
  \qquad e=vw .
\]
Over the whole execution the cost of the clustering produced by $\mathcal P$ is
$\sum_{\text{rounds}}\sum_e\mathsf{Pay}(e)$, and the credit released is exactly
$\sum_{e\in\binom V2}B_e$.

These definitions also cover the pairs containing the pivot ($\mathsf X_u=1$): such a pair is decided with certainty.
Taking conditional expectations
using~\eqref{eq:pivot-marginal} and~\eqref{eq:joint-unified}, for distinct $v,w\in W\setminus\{u\}$,
\begin{align}
  \operatorname{pay}_u(vw)&:=\mathbb E_u[\mathsf{Pay}(vw)]
  =\begin{cases}
     M_{uv}+M_{uw}-2J_u(v,w), & vw\in E^+,\\
     J_u(v,w), & vw\in E^-,
   \end{cases}\label{eq:triangle-pay}\\
  \operatorname{rel}_u(vw)&:=\mathbb E_u[\mathsf{Rel}(vw)]
  =\bigl(M_{uv}+M_{uw}-J_u(v,w)\bigr)B_{vw},
\end{align}
whereas for a pair containing the pivot,
\begin{equation}\label{eq:degenerate-pay}
  \operatorname{pay}_u(uv)=
  \begin{cases}
    1-M_{uv}, & uv\in E^+,\\
    M_{uv}, & uv\in E^-,
  \end{cases}
  \qquad
  \operatorname{rel}_u(uv)=B_{uv}.
\end{equation}

\paragraph{One round in aggregate.}
For a triangle $T\in\binom W3$ and a pair $e\in\binom W2$ put
\[
  D(T):=\sum_{u\in T}
    \bigl(\operatorname{rel}_u(T\setminus\{u\})-\operatorname{pay}_u(T\setminus\{u\})\bigr),
  \qquad
  D(e):=\sum_{u\in e}
    \bigl(\operatorname{rel}_u(e)-\operatorname{pay}_u(e)\bigr).
\]
For readers familiar with the notation of~\cite{Cao+24arXiv}, their $\Delta(T)$ and $\operatorname{cost}(T)$ are respectively our total released budget and payment over the three choices of pivot, so $D(T)=\Delta(T)-\operatorname{cost}(T)$; the same correspondence holds for pairs, viewed there as degenerate triangles.

By~\eqref{eq:triangle-pay}--\eqref{eq:degenerate-pay}, each of these numbers is determined by the edge signs and LP values of $T$ and $e$.
In particular the
same triangle contributes the same $D(T)$ in every round in which its three
vertices are still alive; only the index sets in the sums below depend on
$W$.  
Averaging over the
pivot,
\begin{equation}\label{eq:one-round-accounting}
  \mathbb E\Bigl[\sum_{e\in\binom W2}\bigl(\mathsf{Rel}(e)-\mathsf{Pay}(e)\bigr)\Bigr]
  =\frac1{|W|}\sum_{u\in W}\sum_{e\in\binom W2}
     \bigl(\operatorname{rel}_u(e)-\operatorname{pay}_u(e)\bigr)
  =\frac1{|W|}\Bigl(\sum_{T\in\binom W3}D(T)+\sum_{e\in\binom W2}D(e)\Bigr).
\end{equation}
For the second equality, 
notice that each triangle collects one contribution to $D(T)$ for each of its three vertices in the role of a pivot, whereas each endpoint of an edge $e$ contributes once to $D(e)$.

We show that the quantity in~\eqref{eq:one-round-accounting} is nonnegative, i.e., the expected credit released dominates the expected cost incurred in every round. This would easily follow if
every $D(T)$ and every $D(e)$ were nonnegative, but unfortunately that is not the case (as already noted by~\cite{Cao+24}).
For instance, consider the all-positive triangle with
$y_{uv}=y_{uw}=y_{vw}=\frac12$ and $y_{uvw}=0$, which
satisfies~\eqref{eq:local-feas}.  Here $q(\frac12)=0$, so $\mu_+(\frac12)=\frac12$
and $\gamma_+(\frac12)=1$; hence $J_u(v,w)=\frac14+(0-\frac14)=0$ at each pivot,
the pair opposite the pivot is separated with probability $1$, and
$B^+(\frac12)=\lambda/3$.  Therefore
$  D(T)=3\Bigl(\frac{\lambda}{3}-1\Bigr)<0.
  $
The deficit of such a triangle must be compensated by other triangles or edges.

\paragraph{Variance certificates.}
We seek auxiliary quantities $V(T)$ and $V(e)$ with two properties: they are pointwise lower bounds on $D(T)$ and $D(e)$, and their total over every alive set is nonnegative.  The
second property can be forced by bounding the total by an expected sum of squares.  Indeed,~\eqref{eq:joint-unified} shows that the dependence of $D(T)$ on the triple value $y_{uvw}$
occurs through the covariances $y_{uvw}-y_{uv}y_{uw}$.  This suggests using a weighted quadratic form in the centered membership indicators.  Choose a weight function $\Theta_\sigma(y)$ for each sign (used only in the analysis).
For $S\sim (z_S)_{S\ni u}$, put
\[ Z_u:=\sum_{v\in W\setminus\{u\}} \Theta_{\sigma(uv)}(y_{uv})\bigl(\mathbf 1[v\in S]-y_{uv}\bigr).  \]
The cross terms in the expansion of $\mathbb E_u[Z_u^2]$ are weighted covariances and are naturally indexed by triangles; the diagonal terms are weighted variances and are naturally indexed by pairs.  Although individual cross terms may be negative, their sum with the diagonal terms is nonnegative.  Accordingly, define for
each $T=\{u,v,w\}$,
\begin{align}
  V(T)&:=2\sum_{p\in T}\Theta_{\sigma(pq)}(y_{pq})\,\Theta_{\sigma(pr)}(y_{pr})
        \bigl(y_{uvw}-y_{pq}y_{pr}\bigr),
        \qquad \{q,r\}=T\setminus\{p\},
        \label{eq:triangle-variance}\\
  V(e)&:=2\,\Theta_{\sigma(e)}(y_e)^2\, y_e (1-y_e).
\end{align}

\begin{lemma}\label{lem:variance-certificate}
Suppose that, for every locally feasible signed triangle $T$ and every signed
pair $e$,
\begin{equation}\label{eq:local-certificate}
  D(T)\ge V(T),
  \qquad
  D(e)\ge V(e).
\end{equation}
Then
$
  \mathbb E[\cost(\mathcal P)]\le\sum_eB_e.
$
Consequently, the cheaper of $\mathcal C$ and $\mathcal P$ has expected cost at
most $\alpha_0\sum_e\ell_e$.
\end{lemma}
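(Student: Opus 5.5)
We would argue one round at a time and then sum over rounds. Fix a round, condition on the history, and let $W$ be the alive set. By \eqref{eq:one-round-accounting}, the expected surplus $\mathbb E[\sum_{e\in\binom W2}(\mathsf{Rel}(e)-\mathsf{Pay}(e))]$ equals $\frac1{|W|}(\sum_{T\in\binom W3}D(T)+\sum_{e\in\binom W2}D(e))$. By the hypothesis \eqref{eq:local-certificate}, this is at least $\frac1{|W|}(\sum_T V(T)+\sum_e V(e))$. The key step is then to identify this last sum with $\frac1{|W|}\sum_{u\in W}\mathbb E_u[Z_u^2]$, which is nonnegative.

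To carry out the identification, expand $\mathbb E_u[Z_u^2]$ for a fixed $u$. Its diagonal terms are $\Theta_{\sigma(uv)}(y_{uv})^2\operatorname{Var}_u[A_v]=\Theta_{\sigma(uv)}(y_{uv})^2 y_{uv}(1-y_{uv})$. Its off-diagonal terms, over ordered pairs $v\ne w$, are $\Theta_{\sigma(uv)}\Theta_{\sigma(uw)}(y_{uvw}-y_{uv}y_{uw})$; each unordered pair appears twice, giving the factor $2$.

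Now sum over $u\in W$. A pair $e=vw$ receives a diagonal contribution from $u=v$ and from $u=w$, with the same value $\Theta_{\sigma(e)}(y_e)^2y_e(1-y_e)$ because $y$ is symmetric; the total is exactly $V(e)$. A triangle $T$ receives one cross term for each choice of apex $p\in T$, namely $2\Theta\Theta(y_T-y_{pq}y_{pr})$ with $y_{pqr}=y_T$ symmetric; the total is exactly $V(T)$ from \eqref{eq:triangle-variance}. Hence $\sum_T V(T)+\sum_e V(e)=\sum_{u\in W}\mathbb E_u[Z_u^2]\ge0$, and this holds for every alive set $W$, as required.

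Next, sum the nonnegative conditional surplus over all rounds. Since the process terminates with each pair decided in exactly one round, linearity of expectation (or the optional-stopping form over the finite horizon of at most $n$ rounds) gives $\mathbb E[\sum_{\text{rounds}}\sum_e\mathsf{Pay}(e)]\le\mathbb E[\sum_{\text{rounds}}\sum_e\mathsf{Rel}(e)]=\sum_{e\in\binom V2}B_e$. The left side is $\mathbb E[\cost(\mathcal P)]$. For the consequence, use $\mathbb E[\min]\le\omega\mathbb E[\cost(\mathcal C)]+(1-\omega)\mathbb E[\cost(\mathcal P)]\le\sum_e(\omega\kappa_e+(1-\omega)B_e)$, which equals $\alpha_0\sum_e\ell_e$ by \eqref{eq:budget}.

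We expect the main obstacle to be bookkeeping rather than inequality work. One has to check that the hypotheses are applied only to locally feasible triangles; this is automatic, since any actual $z$ satisfies \eqref{eq:local-feas}. One also has to check that the pair terms in $D(e)$ and the diagonal terms match in multiplicity (two pivots each). The genuinely hard work, verifying \eqref{eq:local-certificate} for the specific $\Theta$, is assumed by the lemma.
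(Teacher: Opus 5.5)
Your proposal is correct and follows the paper's proof essentially step for step: fix a round, use \eqref{eq:one-round-accounting} together with the local hypothesis and the identity $\sum_{u\in W}\mathbb E_u[Z_u^2]=\sum_T V(T)+\sum_e V(e)\ge0$, then sum over the at most $n$ rounds and finish with the $\omega$-convex combination and \eqref{eq:budget}. The only difference is that you explicitly verify the diagonal/cross-term bookkeeping behind \eqref{eq:variance-decomposition}, which the paper simply asserts.
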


\begin{proof}
Fix a round, condition on the history before it, and let $W$ be its alive set.
Expanding the square and using $\mathbb E[\mathbf 1[v\in S]]=y_{uv}$ and
$\mathbb E[\mathbf 1[v\in S]\mathbf 1[w\in S]]=y_{uvw}$, the cross terms
($v\neq w$) of $\sum_{u\in W}\mathbb E_u[Z_u^2]$ group into triangles and the
diagonal terms ($v=w$) into pairs:
\begin{equation}\label{eq:variance-decomposition}
  \sum_{u\in W}\mathbb E_u[Z_u^2]
  =\sum_{T\in\binom W3}V(T)+\sum_{e\in\binom W2}V(e)\ \ge\ 0 .
\end{equation}
Summing~\eqref{eq:local-certificate} over the triangles and pairs inside $W$
and combining with~\eqref{eq:variance-decomposition} gives
$\sum_TD(T)+\sum_eD(e)\ge0$.  By~\eqref{eq:one-round-accounting}, the
conditional expected credit released in the round is therefore at least the
conditional expected cost incurred in it.
This holds for each of at most $n$ rounds and every history, so taking expectations and
summing,
\[
  \mathbb E[\cost(\mathcal P)]
  =\mathbb E\Bigl[\sum_{\text{rounds}}\sum_e\mathsf{Pay}(e)\Bigr]
  \le\mathbb E\Bigl[\sum_{\text{rounds}}\sum_e\mathsf{Rel}(e)\Bigr]
  =\sum_{e\in\binom V2}B_e,
\]
the two equalities holding because each pair is decided in exactly one round,
in which it incurs its disagreement, if any, and releases its credit $B_e$.
Using
\eqref{eq:clock-cost} and~\eqref{eq:budget}, we finally get
\[
  \mathbb E\bigl[\min\{\cost(\mathcal C),\cost(\mathcal P)\}\bigr]
  \le\omega\sum_e\kappa_e+(1-\omega)\sum_eB_e
  \le\alpha_0\sum_e\ell_e. \qedhere
\]
\end{proof}

\subsection{The explicit certificate and its verification}\label{sec:rounding-check}

It remains to exhibit weights $\Theta_\pm$ for which~\eqref{eq:local-certificate}
holds. 
Put
\begin{equation}\label{eq:theta-closed-form}
\boxed{
  \Theta_+(y):=\sqrt{\rho}\;y(1-y)^2,
  \qquad
  \Theta_-(y):=0,
  \qquad\text{with}\qquad
  \rho:=\frac{1012}{25} = 40.48.
}
\end{equation}
Thus, for positive edges, $\Theta_+(uv)$ is proportional to $\ell^+(uv)\, \operatorname{Var}_u[A_v]$.
For negative edges,
$\Theta_-\equiv0$ 
means that the
triple value $y_{uvw}$ influences $D(T)$ only through pivots both of whose
incident edges are positive.  
The statistic of Section~\ref{sec:rounding_analysis} becomes then
\[ Z_u=\sqrt{\rho}\sum_{\substack{v\in W\setminus\{u\}\\ uv\in E^+}} y_{uv}(1-y_{uv})^2\bigl(\mathbf 1[v\in S]-y_{uv}\bigr),  \]
and the inequalities in~\eqref{eq:local-certificate} to be checked become
\begin{align*}
  D(e)&\ge
  \begin{cases}
    2\rho\,y_e^3(1-y_e)^5,&e\in E^+,\\
    0,&e\in E^-,
  \end{cases},\\
  D(T)&\ge
  2\rho
  \sum_{\substack{p\in T:\,pq,pr\in E^+\\
  \{q,r\}=T\setminus\{p\}}}
  y_{pq}y_{pr}(1-y_{pq})^2(1-y_{pr})^2
  \bigl(y_{uvw}-y_{pq}y_{pr}\bigr).
\end{align*}

\begin{lemma}\label{lem:verification}
For $\alpha_0=\frac{2773}{2000}$, the rules~\eqref{eq:analytic-rules}, the
exact budgets~\eqref{eq:budget-explicit} and the
weights~\eqref{eq:theta-closed-form}, all inequalities
of~\eqref{eq:local-certificate} hold.
\end{lemma}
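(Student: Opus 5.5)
The plan is a finite, computer-assisted verification of explicit piecewise-polynomial inequalities over a compact semialgebraic domain. The pair inequalities are one-variable and are handled first; the triangle inequalities are handled by reducing to an extreme value of $y_{uvw}$ and then certifying a three-variable inequality.

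\textbf{Pair inequalities.} For a pair $e=uv$ with value $y$, both endpoints play the pivot role once, so by~\eqref{eq:degenerate-pay}
\[
D(e)=2\bigl(B^\sigma(y)-\operatorname{pay}\bigr),
\]
where $\operatorname{pay}=1-\mu_+(y)$ for a positive pair and $\operatorname{pay}=\mu_-(y)$ for a negative one.

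For a negative pair we need $\lambda y(3-2y)/(2-y)\ge\min\{\tfrac32y,2y-y^2\}$ on $[0,1]$.
\begin{itemize}
\item On $[0,\tfrac12]$, dividing by $y$ leaves a linear inequality after clearing $2-y$.
\item On $[\tfrac12,1]$ it is a cubic inequality.
\end{itemize}
Both are checked by Sturm sequences, or by exact evaluation on the breakpoints together with sign arguments, using $\lambda=\alpha_0/(2-\alpha_0)\approx 2.26$.

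For a positive pair, $1-\mu_+=(1-y)(1-q(y))$. The inequality
\[
2\lambda\frac{(1-y)^2}{2-y}-(1-y)(1-q(y))\ge\rho\,y^3(1-y)^5
\]
is polynomial on each of the three pieces $y\le\tfrac12$, $\tfrac12\le y\le\tfrac{25}{26}$ and $y\ge\tfrac{25}{26}$. After dividing by $(1-y)$, each piece is certified on its subinterval.

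\textbf{Triangle inequalities.} There are four sign patterns: $(+++)$, $(++-)$, $(+--)$ and $(---)$. For each, $D(T)-V(T)$ is a function of the three pair values $(a,b,c)$ and the triple value $t=y_{uvw}$. By~\eqref{eq:joint-unified}, $J_p$ depends on $t$ affinely, since $\gamma_-=0$ kills the covariance whenever a negative edge meets the pivot. $V(T)$ is also affine in $t$.

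Hence $D(T)-V(T)$ is affine in $t$, and it suffices to verify it at the two endpoints of the feasible interval~\eqref{eq:local-feas}:
\[
t=\max\{0,a+b-1,a+c-1,b+c-1\},\qquad t=\min\{a,b,c\}.
\]
This reduces each sign pattern to finitely many three-variable piecewise-rational inequalities on a box $[0,1]^3$. The pieces come from the breakpoints $\tfrac12$ and $\tfrac{25}{26}$ of $q$ and $\mu_-$, and from the choice of which term attains the max or min. For $(---)$, every $\Theta$ vanishes and each pivot contributes independently, so the triangle reduces to sums of one-variable terms plus the product $J$. For $(+--)$, only the pivot between the two negative edges contributes covariance, but its weight is $\Theta_-=0$, so here too $V=0$ and $t$ plays no role.

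The main work is the $(+++)$ and $(++-)$ cases. I would first use symmetry to order the values, for instance $a\le b\le c$ in the all-positive case. I would then subdivide $[0,1]^3$ into rational boxes on which the piece selection is constant. On each box, a lower bound on the polynomial $D-V$ is certified by interval arithmetic with exact rationals, for example via Bernstein expansion, refining adaptively until every box has a positive certified lower bound.

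\textbf{Main obstacle.} The hard part is the tight regions where $D(T)-V(T)$ touches $0$: near the all-$\tfrac12$ positive triangle with $t=0$, and near $y\to 1$. At these points interval bounds cannot certify strict positivity. In those neighborhoods I would factor out the vanishing terms analytically, such as powers of $(1-y)$, or use a local Taylor or SOS argument in place of interval arithmetic. Wherever that fails, I would exploit the slack from choosing $\alpha_0=2773/2000$ slightly above the critical ratio, which keeps the minimum bounded away from zero on the remaining compact region.
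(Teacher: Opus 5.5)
Your reduction is the one the paper uses. The pair inequalities are checked separately on the pieces $[0,\frac12],[\frac12,\frac{25}{26}],[\frac{25}{26},1]$ (positive) and $[0,\frac12],[\frac12,1]$ (negative). Since $D(T)-V(T)$ is affine in $t=y_{uvw}$, only the two endpoints of~\eqref{eq:local-feas} matter. The remaining three-variable inequalities are then certified in exact arithmetic by Bernstein expansion with subdivision. However, the certification step as you set it up would fail in two concrete places.

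\emph{The domain and the cells.} You pose the inequalities on $[0,1]^3$ and subdivide $[0,1]^3$. The endpoint check only makes sense on the polytope $\{a+b-c\le1,\ a-b+c\le1,\ -a+b+c\le1\}\subset[0,1]^3$, where the interval~\eqref{eq:local-feas} is nonempty. Off it the inequality is simply false: for the $(++-)$ triangle with positive values $0.9,0.9$ and negative value $0$, both endpoint formulas give $D(T)-V(T)<-2.8$. Moreover, the cells on which the endpoint formula is fixed are bounded by non-axis-parallel planes such as $a+b=1$, $b+c=1$ and $a=b$, and so is your ordered region $a\le b\le c$. Hence no finite family of boxes has constant piece selection. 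The paper handles both issues at once. It intersects its $35$ pair-piece boxes with five (resp.\ three) tetrahedra covering the feasible region, on each of which the lower (resp.\ upper) endpoint is a single affine function. It then triangulates into rational simplices and expands $\prod_e(2-y_e)\,(D(T)-V(T))$ in the simplex Bernstein basis.

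\emph{The stopping criterion.} ``Refine until every cell has a positive certified lower bound'' cannot terminate, because $D-V$ vanishes on whole segments, not only at the points you list. For the $(---)$ triangle with $y_{uv}=y_{uw}=0$ and arbitrary $y_{vw}$, every pay and release term is $0$ and $V(T)=0$. The same holds for $(+--)$ when the two negative pairs at $u$ have value $0$. The paper's stopping test is nonnegativity of all Bernstein coefficients, which is compatible with exact zeros on faces of the cells and needs no Taylor or SOS patching.

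Your account of where the margin is tight is also inaccurate, and this determines where the subdivision must be fine. At the all-$\frac12$ triangle with $t=0$ one has $D-V=\lambda-3+\frac{3\rho}{128}\approx0.21$, which is ample slack. Now take the equilateral all-positive family at the lower endpoint $t=2y-1$. There $J_p=2\mu_+(y)-1$ at every pivot, and
$D-V=3B^+(y)-6(1-y)(1-q(y))+6\rho y^2(1-y)^6$.
At $y=\frac23$ this equals $\frac{\lambda}{2}-\frac{23}{18}+\frac{24288}{164025}\approx2.9\times10^{-4}$. This is where the slack provided by $\alpha_0=\frac{2773}{2000}$ is nearly exhausted. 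So your appeal to a minimum ``bounded away from zero'' has to be made quantitative there, not near $y=\frac12$.
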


\begin{proof}[Computer-assisted proof]
We first describe how the verifier covers all pair and triangle configurations.  The rounding rule has the five closed one-dimensional pieces
{\small
\[
\begin{aligned}
  P_0&=\left[0,\frac12\right],& P_1&=\left[\frac12,\frac{25}{26}\right],& P_2&=\left[\frac{25}{26},1\right] &&\text{for positive pairs},\\
  N_0&=\left[0,\frac12\right],& N_1&=\left[\frac12,1\right] &&&&\text{for negative pairs}.
\end{aligned}
\]
}

These intervals cover every possible pair value $y\in[0,1]$, and the formulas on adjacent pieces agree at their common endpoint.  On each of the five pieces the verifier checks the pair inequality $D(e)-V(e)\ge0$, after clearing its positive denominator.  These are the five one-dimensional checks needed for the certificate.  As a redundant sanity check, the implementation also verifies the four elementary probability inequalities $p_0\ge0$, $1-p_0\ge0$, $p_1\ge0$, and $1-p_1\ge0$ on each piece.
These additional twenty checks are why its log reports $25$ one-dimensional polynomial checks.

We next consider triangles.  Up to relabeling their vertices, the four possible sign types are $+++$, $++-$, $+--$, and $---$.  Assigning each positive edge to one of $P_0,P_1,P_2$ and each negative edge to one of $N_0,N_1$, again up to relabeling edges of the same sign, gives
{\small
\[
  \binom{3+3-1}{3}=10,\qquad
  \binom{3+2-1}{2}\cdot2=12,\qquad
  3\binom{2+2-1}{2}=9,\qquad
  \binom{2+3-1}{3}=4
\]
}closed state boxes of the four respective sign types, for a total of $35$.  Consequently, every signed triangle and every choice of its three pair values in $[0,1]$ belongs to at least one of these boxes; the harmless overlaps at the rounding breakpoints ensure that no boundary value is omitted.

Consider any locally feasible triple value $t=y_{uvw}$ and pair values $y_1,y_2,y_3$.  By~\eqref{eq:joint-unified} and~\eqref{eq:triangle-variance}, $D(T)-V(T)$ is affine in $t$, so it is enough to prove nonnegativity at both endpoints of the interval~\eqref{eq:local-feas}.
The lower endpoint is the maximum of four affine functions and the upper endpoint is the minimum of three.
The feasible $y$-domain is covered by five tetrahedra on which the lower endpoint has one fixed affine formula (the region where it is zero is split into two tetrahedra), and by three tetrahedra on which the upper endpoint has one fixed affine formula.

For each 
    of the 35 boxes,
    the verifier intersects 
    it
    with each applicable endpoint tetrahedron and substitutes the corresponding endpoint for $t$.  Multiplying
$D(T)-V(T)$ by the positive quantity $\prod_e(2-y_e)$ clears all denominators and leaves a polynomial of total degree at most~$11$.  Each nonempty intersection is triangulated into
rational simplices, and the polynomial is expressed on each simplex in the Bernstein basis~\cite{bernstein_basis}.  Since the Bernstein basis functions are nonnegative and sum to one, nonnegativity of every Bernstein coefficient certifies nonnegativity throughout the simplex.  Whenever this test is inconclusive, the verifier subdivides the simplex and repeats the test.

All checks pass with exact rational arithmetic.  Thus the pair inequalities hold for every signed pair value, and the triangle inequalities hold for all four sign types, all pair values, and every locally feasible triple value.  Only three of the $35$ triangle state boxes require any subdivision, to maximum depth $11$.
The verifier is
in  the file \path{verify_rounding.py} accompanying this submission;
it is a single
self-contained script that takes no arguments and uses only the Python
standard library.
\end{proof}

\begin{proof}[Proof of Theorem~\ref{thm:rounding}]
Lemma~\ref{lem:verification} supplies the local inequalities required by
Lemma~\ref{lem:variance-certificate}, giving
\[
  \mathbb E[\cost(\mathcal A)]
  \le\alpha_0\sum_e\ell_e
  =\alpha_0\sum_S\cost(S)z_S.
\]
Both procedures run in time polynomial in $n$ and the support size of $z$.

Finally, when the rounding is combined with the adjacency-list algorithm of~\citet{Cao+25}, we need a sharper sublinear-time implementation.
To avoid computing the clustering costs, run cluster-based rounding with probability $\omega=\alpha_0/2$ and pivot rounding with probability $1-\omega$.
Cluster-based rounding is identical to that of~\citet{Cao+25} and can be implemented in time $\widetilde O(n/\Delta)$. As for our pivot-based rounding variant, with pivot $u$ and alive set
$W$, put
$
  U_u=W\cap\bigcup_{S\ni u:\,z_S>0}S.
  $
If $v\notin U_u$, then $y_{uv}=0$ and the rule never adds $v$ to the cluster of pivot $u$. If $v\in U_u$, then $y_{uv}\ge\Delta$, and~\eqref{eq:analytic-rules} gives
$
  \mu_+(y)\ge y$ and $  \mu_-(y)=\min\{\tfrac32y,2y-y^2\}\ge y.$
Thus a round with pivot $u$ removes at least $\Delta|U_u|$ vertices in expectation. Since the covering constraints imply that every vertex belongs to at most $1/\Delta$ sets in the support of $z$, the sets containing $u$, the sampled set, all values $y_{uv}$, and all joining decisions can be processed in time $O(|U_u|/\Delta)$. The amortization of~\citet[Section~9.2]{Cao+25} therefore gives expected total time $\widetilde O(n/\Delta^2)$ for pivot rounding, and hence for the randomized choice between the two procedures.
\end{proof}

%% file: hardness.tex
\section{Hardness of exact separation and exact optimization}\label{app:hardness}

We prove both hardness results from \textsc{Vertex Cover}.
We first allow an \emph{incomplete} signed instance $H=(W,F^+,F^-)$, in which pairs outside $F^+\cup F^-$ have weight zero.  Such pairs are ignored in the definitions of $\cost_H$ and of the cluster LP.

Let $J=(U,E)$ be an arbitrary graph, write $n=|U|$ and $m=|E|$, let $\tau(J)$ denote its minimum vertex-cover size, and add one new vertex $r$.  Define an incomplete unit-weight 
signed instance $H(J)$ on $U\mathbin{\dot\cup}\{r\}$ by
\begin{equation}
  F^+(H(J))=\{ru:u\in U\},
  \qquad
  F^-(H(J))=E.
\end{equation}
All other pairs are absent.  Thus the cluster containing $r$ is rewarded for including many vertices, but pays for every edge of $J$ whose two endpoints it includes.

Consider first an ordinary clustering.  Splitting every cluster not containing $r$ into singletons does not cut any additional positive pair and can only remove negative pairs from
within clusters.  We may therefore assume that the clustering consists of $\{r\}\cup I$ and singleton clusters for the vertices in $U\setminus I$.  Its number of disagreements is
 $n-|I|+e_J(I)$.
 Notice that
\begin{equation}\label{eq:cover-cleaning}
  \min_{I\subseteq U}\bigl(n-|I|+e_J(I)\bigr)=\tau(J).
\end{equation}
Indeed, if $C$ is a vertex cover and $I=U\setminus C$, then $I$ is independent and the expression equals $|C|$.  Conversely, for arbitrary $I$, start with $U\setminus I$ and add one endpoint of every edge of $J[I]$.  The resulting set is a vertex cover of size at most $n-|I|+e_J(I)$, proving~\eqref{eq:cover-cleaning}.

\begin{lemma}\label{lem:hardness-star}
The cluster LP of $H(J)$ satisfies
\begin{equation}\label{eq:star-lp-value}
  \OPT_{\mathrm{LP}}(H(J))=\tau(J).
\end{equation}
Moreover, for an integer $k$, define $q^{(k)}\in\mathbb Q^{U\cup\{r\}}$ by
\begin{equation}
  q^{(k)}_u:=\frac12\quad(u\in U),
  \qquad
  q^{(k)}_r:=k-\frac{n-1}{2}.
\end{equation}
Then
\begin{equation}\label{eq:star-reduced-cost}
  \max_{S\subseteq U\cup\{r\}}
  \bigl(q^{(k)}(S)-\cost_{H(J)}(S)\bigr)
  =\max\left\{0,k-\tau(J)+\frac12\right\}.
\end{equation}
\end{lemma}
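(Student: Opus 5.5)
The plan is to compute $\cost_{H(J)}(S)$ explicitly for the two kinds of sets, those avoiding $r$ and those containing it. Then \eqref{eq:star-reduced-cost} becomes a direct calculation using \eqref{eq:cover-cleaning}. For \eqref{eq:star-lp-value} I will exhibit a matching primal and dual pair: the optimal integral clustering for the upper bound, and a shifted version of $q^{(k)}$ for the lower bound.

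\emph{Costs.} In $H(J)$ the only positive pairs are the spokes $ru$ and the only negative pairs are the edges of $J$. For $S\subseteq U$, every $u\in S$ has its spoke $ru$ leaving $S$, so
\[
\cost(S)=e_J(S)+\tfrac12|S|.
\]
For $S=\{r\}\cup I$ with $I\subseteq U$, the crossing positive pairs are $ru$ for $u\in U\setminus I$, so
\[
\cost(S)=e_J(I)+\tfrac12(n-|I|).
\]

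\emph{Reduced costs.} Take a general dual vector with $q_u=\frac12$ for all $u\in U$ and an arbitrary value $q_r$. For $S\subseteq U$ we get $q(S)-\cost(S)=-e_J(S)\le 0$, and equality holds at $S=\emptyset$. For $S=\{r\}\cup I$ we get
\[
q(S)-\cost(S)=q_r+\tfrac{n}{2}-\bigl(n-|I|+e_J(I)\bigr).
\]
By \eqref{eq:cover-cleaning}, the maximum of this quantity over $I\subseteq U$ equals $q_r+\frac n2-\tau(J)$. Substituting $q_r=k-\frac{n-1}2$ turns this into $k+\frac12-\tau(J)$. Combining with the value $0$ attained by sets avoiding $r$ proves \eqref{eq:star-reduced-cost}.

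\emph{LP value.} For the upper bound, the integral clustering consisting of $\{r\}\cup I$ together with singletons has cost $\tau(J)$ by \eqref{eq:cover-cleaning}, and it is a feasible $0/1$ primal solution. For the lower bound, apply the reduced-cost formula with $q_r=\tau(J)-\frac n2$. The maximum reduced cost is then $\max\{0,0\}=0$, so this $q$ is feasible for the dual of the LP on $H(J)$. The dual of $H(J)$ has the same form as \eqref{eq:cluster-dual}, with $\cost$ replaced by $\cost_{H(J)}$. The dual objective value is $\frac n2+\tau(J)-\frac n2=\tau(J)$, so weak duality gives $\OPT_{\mathrm{LP}}\ge\tau(J)$.

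The only real content is the identity \eqref{eq:cover-cleaning}, which has already been proved. The rest is bookkeeping. The one point needing care is the half-weight convention for crossing positive pairs, which must be applied correctly when $r\notin S$: each spoke contributes $\frac12$ on the $U$ side.
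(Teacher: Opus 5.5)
Your proof is correct and follows the paper's argument essentially line for line. You use the same two cost formulas for sets avoiding and containing $r$, the same dual vector with $q_r=\tau(J)-\frac n2$ certified feasible via \eqref{eq:cover-cleaning} and weak duality, and the same reduced-cost computation for $q^{(k)}$; the only cosmetic difference is that you treat a general $q_r$ once and specialize it twice, whereas the paper does the computation separately for $\bar q$ and for $q^{(k)}$.
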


\begin{proof}
We have shown $\OPT_{\mathrm{LP}}(H(J))\le\tau(J)$. For the reverse inequality, define a 
dual vector $\bar q$ by
\[
  \bar q_u:=\frac12\quad(u\in U),
  \qquad
  \bar q_r:=\tau(J)-\frac n2.
\]
Every subset of $U\cup\{r\}$ is either $I$ or $\{r\}\cup I$ for some $I\subseteq U$.  In the first case,
\begin{align*}      
  \cost_{H(J)}(I)=e_J(I)+\frac{|I|}{2},\qquad \bar q(I)-\cost_{H(J)}(I)=-e_J(I)\le0.
\end{align*}  
In the second case, by~\eqref{eq:cover-cleaning},  
\begin{align*}
  \cost_{H(J)}(\{r\}\cup I)&=e_J(I)+\frac{n-|I|}{2},\\
  \bar q(\{r\}\cup I)-\cost_{H(J)}(\{r\}\cup I)
  &=\tau(J)-\bigl(n-|I|+e_J(I)\bigr)\le0.
\end{align*}
Hence $\bar q$ is dual feasible with objective value $\sum_{v\in U\cup\{r\}}\bar q_v=\tau(J)$, and weak duality proves~\eqref{eq:star-lp-value}.

For the second assertion, 
    for a subset $I \subseteq U$ we have
    $
  q^{(k)}(I)-\cost_{H(J)}(I)=-e_J(I)\le0
  $
in the same manner,
whereas for a subset of the form $\{r\}\cup I$ we have
\begin{equation}\label{eq:root-cluster-reduced-cost}
  q^{(k)}(\{r\}\cup I)-\cost_{H(J)}(\{r\}\cup I)
  =k+\frac12-\bigl(n-|I|+e_J(I)\bigr).
\end{equation}
Taking the maximum in~\eqref{eq:root-cluster-reduced-cost} and using~\eqref{eq:cover-cleaning}
proves~\eqref{eq:star-reduced-cost}.
\end{proof}

\subsection{Balanced completion}
We next show how to remove the absent pairs while preserving both conclusions of the lemma.  At a high level, we replace each vertex by a positive clique of $L$ copies.  A
specified pair becomes a uniformly signed complete bipartite graph between the corresponding blocks.  An absent pair is replaced by a balanced bipartite pattern with as many
positive as negative pairs in every row and column.  If blocks are selected all-or-nothing, this pattern contributes $L^2/4$ to the cluster cost for each selected endpoint block.
The shift $a_u/4$ in the lifted dual weights cancels exactly these linear contributions, so absent pairs disappear from the reduced-cost function.

The main issue is that a candidate set could select a block partially.   The positive clique inside each block imposes a
quadratic penalty for doing so.  We choose the balanced patterns to have small spectral norm and take $L$  large enough that this penalty dominates every possible
discrepancy gain.  An optimal reduced-cost set can therefore be taken to be a union of entire blocks.  

\paragraph{Construction.}
Let $H=(W,F^+,F^-)$ be an incomplete unit-weight signed instance, let
\[
  Z:=\binom W2\setminus(F^+\cup F^-),
\]
and let $a_u$ be the number of pairs of $Z$ incident to $u$.  

Put $N=|W|$. Let $R$ be the smallest power of two satisfying $R\ge\max\{2,(N-1)^2\}$, set $L=2R\ge 2(N-1)^2$, and let $K\in\{-1,1\}^{R\times R}$ be a Hadamard matrix, so $KK^{\mathsf T}=RI$.  Define
\[
  M:=
  \begin{pmatrix}
    K&-K\\
    -K&K
  \end{pmatrix} 
  = \begin{pmatrix}1&-1\\-1&1\end{pmatrix}\otimes K
\]
Every row and column of $M$ sums to zero. 
  An order-$R$ Hadamard matrix has spectral norm $\sqrt R$
  \cite[Section~14.3]{Jukna11}; since spectral norms multiply under Kronecker products,
\begin{equation}\label{eq:balanced-matrix-norm}
  \lVert M\rVert_2=2\sqrt R=\sqrt{2L}.
\end{equation}
Both $K$ and $M$ can be generated recursively in time polynomial in $N$.

Replace every $u\in W$ by a block $B_u$ of $L$ copies.  Make each block a positive clique.  Between $B_u$ and $B_v$, use all positive pairs if $uv\in F^+$ and all negative pairs if
$uv\in F^-$.  If $uv\in Z$, fix an ordering of its endpoints and declare the pair between the $i$th and $j$th copies positive when $M_{ij}=1$ and negative when $M_{ij}=-1$.  This defines the complete signed graph $\widehat H$.

\begin{lemma}\label{lem:balanced-completion}
The method above constructs, in polynomial time,
a complete unit-weight signed instance $\widehat H$ and an integer $L=O(|W|^2)$ with the following properties.

For every $q\in\mathbb R^W$, assign every copy $u_i$ of $u$ the value
\begin{equation}\label{eq:lifted-dual-vector}
  \widehat q_{u_i}:=L\left(q_u+\frac{a_u}{4}\right).
\end{equation}
Then
\begin{equation}\label{eq:completion-reduced-cost}
  \max_{X\subseteq V(\widehat H)}
  \bigl(\widehat q(X)-\cost_{\widehat H}(X)\bigr)
  =L^2\max_{T\subseteq W}\bigl(q(T)-\cost_H(T)\bigr).
\end{equation}
Moreover,
\begin{equation}\label{eq:completion-lp-value}
  \OPT_{\mathrm{LP}}(\widehat H)
  =L^2\left(\OPT_{\mathrm{LP}}(H)+\frac{|Z|}{2}\right).
\end{equation}
\end{lemma}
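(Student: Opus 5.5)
The plan is to reduce everything to one identity for the reduced cost $f(X):=\widehat q(X)-\cost_{\widehat H}(X)$ in terms of the block counts $x_u:=|X\cap B_u|$. For $u\in W$ let $\chi_u\in\{0,1\}^L$ be the indicator of $X\cap B_u$. I would compute $\cost_{\widehat H}(X)$ pair type by pair type, using $\cost(X)=e^-(X)+\tfrac12e^+(X,V\setminus X)$ in $\widehat H$:
\begin{enumerate}[label=(\roman*)]
\item the positive clique on $B_u$ contributes $\tfrac12x_u(L-x_u)$;
\item a pair $uv\in F^+$ contributes $\tfrac12\bigl(x_u(L-x_v)+x_v(L-x_u)\bigr)$;
\item a pair $uv\in F^-$ contributes $x_ux_v$;
\item a pair $uv\in Z$ contributes $\tfrac L4(x_u+x_v)-m_{uv}$, where $m_{uv}:=\chi_u^{\mathsf T}M\chi_v$.
\end{enumerate}
Item (iv) comes from two counts. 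The internal negative pairs between the blocks number $(x_ux_v-m_{uv})/2$. Every row and column of $M$ has exactly $L/2$ positive entries, so the positive pairs crossing the cut number $\tfrac L2(x_u+x_v)-x_ux_v-m_{uv}$, and they are counted with weight $\tfrac12$. The key point is that summing (iv) over $Z$ gives the linear term $\sum_u a_u x_u L/4$. This is cancelled exactly by the shift $La_u/4$ in \eqref{eq:lifted-dual-vector}. Writing $y_u=x_u/L$ and letting $h$ be the multilinear extension of $\cost_H$, I therefore expect
\[
  f(X)=L^2\Bigl(q^{\mathsf T}y-h(y)\Bigr)-\frac12\sum_{u\in W}x_u(L-x_u)+\sum_{uv\in Z}m_{uv}.
\]
On unions of entire blocks, $y\in\{0,1\}^W$, and each $m_{uv}$ vanishes because $\one^{\mathsf T}M\one=0$ and $M\one=0$. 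So these sets realize exactly $L^2\bigl(q(T)-\cost_H(T)\bigr)$, which gives ``$\ge$'' in \eqref{eq:completion-reduced-cost}.

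For ``$\le$'', I note that $q^{\mathsf T}y-h(y)$ is affine in each coordinate separately. Hence its maximum over $[0,1]^W$ is attained at a vertex, i.e.\ it is at most $\max_T(q(T)-\cost_H(T))$. It then suffices to show that the discrepancy terms are dominated by the clique penalty:
\[
  \sum_{uv\in Z}|m_{uv}|\le\frac12\sum_u x_u(L-x_u).
\]
This is the main obstacle. My first attempt uses the zero row and column sums: for any constants $c,c'$, $m_{uv}=(\chi_u-c\one)^{\mathsf T}M(\chi_v-c'\one)$. Take $c\in\{0,1\}$ according as $x_u\le L/2$ or not. Then $\lVert\chi_u-c\one\rVert^2=\min(x_u,L-x_u)\le 2p_u/L$, where $p_u:=x_u(L-x_u)$. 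Together with \eqref{eq:balanced-matrix-norm} this gives
\[
  |m_{uv}|\le\sqrt{2L}\cdot\frac{2}{L}\sqrt{p_up_v}\le\sqrt{\tfrac2L}\,(p_u+p_v).
\]
Summing over $Z$, each $u$ lies in at most $N-1$ pairs, so the total is at most $(N-1)\sqrt{2/L}\sum_u p_u$. The condition $L\ge2(N-1)^2$ makes this at most $\sum_u p_u$. That is a factor $2$ short of the $\tfrac12\sum_up_u$ needed, so I would have to tighten the estimate. The options I see are to use the exact penalty constant from (i), to bound $\min(x,L-x)$ by $p/L$ more carefully near the extremes, or to pair the cut edges of the block with the discrepancy more cleverly. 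Whether the stated $L$ suffices is the calculation I would check first.

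Finally, \eqref{eq:completion-lp-value} follows from \eqref{eq:completion-reduced-cost} and LP duality. If $q$ is dual feasible for $H$, then the right side of \eqref{eq:completion-reduced-cost} is $\le0$. Hence $\widehat q$ is feasible for $\widehat H$, with value
\[
  L\sum_u L\bigl(q_u+\tfrac{a_u}{4}\bigr)=L^2\bigl(\textstyle\sum_u q_u+|Z|/2\bigr),
\]
which gives ``$\ge$''. For ``$\le$'', lift any primal solution $z$ of $H$ by putting weight $z_T$ on $\bigcup_{u\in T}B_u$; this is feasible for $\widehat H$. The block-union computation above gives
\[
  \cost_{\widehat H}\Bigl(\bigcup_{u\in T}B_u\Bigr)=L^2\Bigl(\cost_H(T)+\sum_{u\in T}\tfrac{a_u}{4}\Bigr).
\]
Summing with weights $z_T$ and using $\sum_{T\ni u}z_T=1$ then yields primal value $L^2(\OPT_{\mathrm{LP}}(H)+|Z|/2)$. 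Polynomial-time constructibility follows from the recursive Sylvester construction of $K$ and $L=O(N^2)$.
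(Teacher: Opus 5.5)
The proposal has a genuine gap, and it is exactly the one you flagged. Everything else is correct: your pair-type cost formulas, the cancellation of the linear $Z$-terms by the shift $a_u/4$, the ``$\ge$'' direction of~\eqref{eq:completion-reduced-cost}, and the multilinearity step. The LP-value argument is also correct. For the upper bound on $\OPT_{\mathrm{LP}}(\widehat H)$ you lift a primal solution blockwise, whereas the paper projects a feasible dual of $\widehat H$ down to $H$; both routes are valid.

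What is missing is a proof of $\sum_{uv\in Z}m_{uv}\le\frac12\sum_u x_u(L-x_u)$, so the ``$\le$'' direction of~\eqref{eq:completion-reduced-cost} remains unproven. The shortfall is real for the construction as stated. Since $L=2R$ equals $2(N-1)^2$ whenever $N-1$ is a power of two, your estimate then only yields $\sum_u p_u$. Of the repairs you list, bounding $\min(x_u,L-x_u)$ by $p_u/L$ cannot work. Indeed $\min(x,L-x)\ge x(L-x)/L$ for every $0\le x\le L$, with strict inequality when $0<x<L$.

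The missing idea is to use $M\one=0$ and $\one^{\mathsf T}M=0$ with a \emph{real} centering constant rather than one in $\{0,1\}$. Subtracting the mean gives $m_{uv}=(\chi_u-\tfrac{x_u}{L}\one)^{\mathsf T}M(\chi_v-\tfrac{x_v}{L}\one)$. The centered vector has $\lVert\chi_u-\tfrac{x_u}{L}\one\rVert^2=x_u(1-\tfrac{x_u}{L})^2+(L-x_u)\tfrac{x_u^2}{L^2}=\tfrac{p_u}{L}$. This is half your bound $2p_u/L$, and it is the minimum over all constants, being the projection onto $\one^\perp$.

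With this, $|m_{uv}|\le\sqrt{2L}\cdot\tfrac1L\sqrt{p_up_v}\le\tfrac{p_u+p_v}{\sqrt{2L}}$. Summing over $Z$ gives $\sum_{uv\in Z}|m_{uv}|\le\tfrac{N-1}{\sqrt{2L}}\sum_up_u$. This is at most $\tfrac12\sum_up_u$ precisely when $L\ge2(N-1)^2$.

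This is the paper's argument, written there in normalized coordinates with $c_u=x_u(1-x_u)$ and $\lVert\one_{X_u}-x_u\one\rVert^2=Lc_u$. Your $\{0,1\}$-centering would also close the argument if one took $R\ge4(N-1)^2$, which still gives $L=O(N^2)$. That, however, changes the construction the lemma refers to.
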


\begin{proof}
We first prove~\eqref{eq:completion-reduced-cost}.  For $X\subseteq V(\widehat H)$ write $X_u=X\cap B_u$, $x_u=|X_u|/L$, and
\[
  \Psi_q(x):=\sum_{u\in W}q_ux_u
  -\sum_{uv\in F^-}x_ux_v
  -\frac12\sum_{uv\in F^+}(x_u+x_v-2x_ux_v).
\]
This is the multilinear extension of $q(T)-\cost_H(T)$.  In particular,
\begin{equation}\label{eq:multiaffine-maximum}
  \max_{x\in[0,1]^W}\Psi_q(x)=\max_{T\subseteq W}\bigl(q(T)-\cost_H(T)\bigr),
\end{equation}
because a multi-affine function attains its maximum over a cube at a vertex.

For an absent pair $uv\in Z$, put
\[
  h_{uv}(X_u,X_v):=\sum_{i\in X_u}\sum_{j\in X_v}M_{ij}.
\]
The row and column balance of $M$ shows by direct counting that the contribution of the pairs between $B_u$ and $B_v$ to $\cost_{\widehat H}(X)$ is
\begin{equation}\label{eq:absent-pair-cost}
  \frac{L^2}{4}(x_u+x_v)-h_{uv}(X_u,X_v).
\end{equation}
The positive clique inside $B_u$ contributes $\frac{L^2}{2}x_u(1-x_u)$.  The uniform block pairs corresponding to $F^+$ and $F^-$ contribute $L^2$ times the corresponding terms in the multilinear extension of $\cost_H$.  Consequently, the shifts $a_u/4$ in~\eqref{eq:lifted-dual-vector} cancel the first term of~\eqref{eq:absent-pair-cost}, and
\begin{equation}\label{eq:completion-master-identity}
  \widehat q(X)-\cost_{\widehat H}(X)
  =L^2\Psi_q(x)
   +\sum_{uv\in Z}h_{uv}(X_u,X_v)
   -\frac{L^2}{2}\sum_{u\in W}x_u(1-x_u).
\end{equation}

Put $c_u=x_u(1-x_u)$.  Since $M\one=0$ and $\one^{\mathsf T}M=0$,
    $
  h_{uv}(X_u,X_v)
  =(\one_{X_u}-x_u\one)^{\mathsf T}M(\one_{X_v}-x_v\one).
  $
Moreover, $\lVert\one_{X_u}-x_u\one\rVert_2^2=Lc_u$.  The spectral bound~\eqref{eq:balanced-matrix-norm}, followed by the AM-GM inequality, gives 
\begin{align}
  \sum_{uv\in Z}h_{uv}(X_u,X_v)
  &\le\sqrt2L^{3/2}\sum_{uv\in Z}\sqrt{c_uc_v}\notag\\
  &\le\frac{\sqrt2L^{3/2}}2\sum_{u\in W}d_Z(u)c_u
  \le\frac{\sqrt2L^{3/2}(N-1)}2\sum_{u\in W}c_u
  \le\frac{L^2}{2}\sum_{u\in W}c_u.
\end{align}
Together with~\eqref{eq:completion-master-identity} and~\eqref{eq:multiaffine-maximum}, this proves the upper bound in~\eqref{eq:completion-reduced-cost}.  Conversely, if $X=\bigcup_{u\in T}B_u$ is a union of whole blocks, then every $c_u$ and every $h_{uv}(X_u,X_v)$ is zero, so equality is attained for a maximizing $T$.  This proves~\eqref{eq:completion-reduced-cost}.

We finish by proving~\eqref{eq:completion-lp-value}.  If $q$ is feasible for the cluster-LP dual of $H$, then~\eqref{eq:completion-reduced-cost} shows that the lifted vector~\eqref{eq:lifted-dual-vector} is feasible for the dual of $\widehat H$.  Its objective value is
\[
  \sum_{u\in W}\sum_{i\in B_u}\widehat q_{u_i}
  =L^2\left(\sum_{u\in W}q_u+\frac14\sum_{u\in W}a_u\right)
  =L^2\left(\sum_{u\in W}q_u+\frac{|Z|}{2}\right).
\]
Taking an optimal $q$ proves the lower bound in~\eqref{eq:completion-lp-value}.

For the reverse bound, let $\widehat q$ be any feasible dual solution for $\widehat H$, put
\[
  Q_u:=\sum_{i\in B_u}\widehat q_{u_i},
  \qquad
  q_u:=\frac{Q_u}{L^2}-\frac{a_u}{4},
\]
and, for $T\subseteq W$, write $B_T=\bigcup_{u\in T}B_u$.  By construction,
\begin{equation}\label{eq:full-block-cost}
  \cost_{\widehat H}(B_T)
  =L^2\left(\cost_H(T)+\frac14\sum_{u\in T}a_u\right).
\end{equation}
Indeed, every specified pair contributes $L^2$ times its old cost, while an absent pair contributes $L^2/2$ if both blocks are selected and $L^2/4$ if exactly one is selected.  Applying feasibility of $\widehat q$ to $B_T$ and using~\eqref{eq:full-block-cost} gives $q(T)\le\cost_H(T)$.  Thus $q$ is dual feasible for $H$, and
\[
  \sum_{u,i}\widehat q_{u_i}
  =L^2\left(\sum_uq_u+\frac{|Z|}{2}\right)
  \le L^2\left(\OPT_{\mathrm{LP}}(H)+\frac{|Z|}{2}\right).
\]
Maximizing over $\widehat q$ proves the reverse bound and completes the proof.
\end{proof}

\subsection{Hardness consequences}

\begin{proof}[Proof of Theorem~\ref{thm:hardness-intro}]
We first consider exact dual separation.  Given an instance $(J,k)$ of \textsc{Vertex Cover}, construct $H(J)$ and $q^{(k)}$ in polynomial time as in Lemma~\ref{lem:hardness-star}, and then apply Lemma~\ref{lem:balanced-completion} to obtain $\widehat H(J)$ and the lifted vector $\widehat q^{(k)}$.  Equations~\eqref{eq:star-reduced-cost} and~\eqref{eq:completion-reduced-cost} imply
\[
  \exists X\subseteq V(\widehat H(J)):\
  \widehat q^{(k)}(X)>\cost_{\widehat H(J)}(X)
  \quad\Longleftrightarrow\quad
  \tau(J)\le k,
\]
proving NP-hardness of the violated-constraint decision problem under polynomial-time many-one reductions.  It belongs to NP because a violating set gives a polynomially-checkable certificate.  Exact dual feasibility is its complementary decision problem and is therefore coNP-complete. 
\mycomment{
Finally, since $\cost(X)\ge0$, every violated constraint has $\widehat q^{(k)}(X)>0$, and hence
\[
  \exists X:\ \widehat q^{(k)}(X)>\cost(X)
  \quad\Longleftrightarrow\quad
  \min_{X:\widehat q^{(k)}(X)>0}\frac{\cost(X)}{\widehat q^{(k)}(X)}<1.
\]
The positive coordinates noted above ensure that the minimization domain is nonempty.  
Thus, deciding whether the optimum of \CCMinRatio{} is less than $1$ is also NP-complete.
}

Now we show NP-completeness of the fixed cluster-LP objective. The problem belongs to NP:  if a cluster-LP solution of value at most a given rational threshold exists, then there
is such a basic feasible solution supported on at most $|V(\widehat H)|$ clusters.  These can be listed explicitly, and their rational coefficients have polynomial encoding length
by Cramer's rule. 
This sparse solution is a polynomial-size certificate whose feasibility and value can be checked exactly.

As for hardness,
write $n=|U|$ and $m=|E|$.  The absent pairs of $H(J)$ are precisely the nonedges of $J$ inside $U$, so
$
  |Z|=\binom n2-m.
  $
Equations~\eqref{eq:star-lp-value} and~\eqref{eq:completion-lp-value} give
\[
  \OPT_{\mathrm{LP}}(\widehat H(J))
  =L^2\left(\tau(J)+\frac12\left(\binom n2-m\right)\right).
\]
Therefore,
\[
  \tau(J)\le k
  \quad\Longleftrightarrow\quad
  \OPT_{\mathrm{LP}}(\widehat H(J))
  \le L^2\left(k+\frac12\left(\binom n2-m\right)\right).
\]
This proves NP-hardness of solving the cluster LP on complete unweighted instances.

\end{proof}

%% file: query_lb.tex
\subsection{A query lower bound for weak separation}\label{sec:query_lb}
We observe that the quadratic dependence on the number of vertices is unavoidable for a general weak separator in the adjacency-list query model, even though it can be avoided when one only seeks
an approximately optimal integral primal solution:

\begin{proposition}\label{prop:separation-query-lower-bound}
Fix $\varepsilon>0$.  Any randomized algorithm that, given an explicit vector $q\in\mathbb Q^V$ and adjacency-list access to the positive graph of a complete unweighted signed
instance, with probability at least $2/3$ either finds $S\subseteq V$ with $q(S)>\cost(S)$ or correctly reports that
\[
  \frac{q(S)}{1+\varepsilon}\le\cost(S)\qquad\text{for every }S\subseteq V
\]
requires $\Omega(n^2)$ queries in the worst case. The lower bound holds even when $q\ge0$.
\end{proposition}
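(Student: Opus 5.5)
We will argue by a two-instance indistinguishability argument of Yao type, with $q\ge0$ fixed and known. Take $n$ even and let $G_0$ be a positive $d$-regular graph on $V$ for a suitable $d=\Theta(n)$; a natural choice is the complete bipartite graph $K_{n/2,n/2}$, or alternatively the disjoint union of two cliques of size $n/2$ with the cross pairs negative. The planted instance $G_1$ is obtained from $G_0$ by a local rewiring hidden among $\Theta(n^2)$ possible locations. Concretely, pick random vertices $a,b,c,d$ with $ab,cd\in E^+$ and $ac,bd\notin E^+$, and swap these to $ac,bd\in E^+$ and $ab,cd\notin E^+$. The swap preserves all degrees, so degree queries reveal nothing. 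In the adjacency-list model (degree, $i$-th neighbor, pair queries) the modified instance differs from $G_0$ only at $O(1)$ list positions and pairs. By the standard argument, an algorithm making $o(n^2)$ queries detects the swap with probability $o(1)$, since each query touches a uniformly hidden location with probability $O(1/n^2)$.

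Next we choose $q$ so that the two instances sit on opposite sides of the weak-separation gap. The aim is that $q$ is strictly feasible with slack $(1+\varepsilon)$ in $G_0$, i.e.\ $\max_S q(S)/\cost_{G_0}(S)\le 1/(1+\varepsilon)$, while $G_1$ contains a set $S$ with $q(S)>\cost_{G_1}(S)$. For $G_1$ to have a violated constraint, however, the rewiring must change some $\cost(S)$ by a constant factor. A single swap changes costs by only $O(1)$, so the intended violating set must have cost $O(1)$ and a weight $q(S)$ of comparable size.

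This suggests working at a sparse scale inside a dense instance. Take $G_0$ to be a perfect matching of positive edges, all other pairs negative, padded with a dense gadget so that total size is $\Theta(n^2)$, and set $q_u=\tfrac12\cdot\frac1{1+\varepsilon'}$. For every $S$ we then have $\cost(S)\ge |S|/2\cdot$ (appropriate factor), which gives feasibility with slack. For the plant, turn two matching edges $ab,cd$ together with the pairs $ac,ad,bc,bd$ into a positive $K_4$ on $\{a,b,c,d\}$. That set then has cost $0+$ boundary $0$, while $q$ is positive on it, so it is violated.

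The matching version, however, is not hidden in adjacency lists: it changes degrees from $1$ to $3$, and since the positive graph is sparse, neighbor queries find it in $O(n)$ queries. The real obstacle is therefore to produce a degree-preserving, $\Theta(n^2)$-hidden plant that nonetheless flips dual feasibility by a constant factor. The plan is to use a union of $n/k$ disjoint $K_k$ cliques with $k=\Theta(n)$, placing $q$ so that the cliques are tight up to $1+\varepsilon$. The plant will be a degree-preserving rewiring of $\Theta(\varepsilon n^2)$… but that is detectable with few queries. So instead we use a single swap and fine-tune $q$ with a negative-free exact threshold at the scale of one edge. The ratio gap is then $O(1/n^2)$, which is below $\varepsilon$, and this is the hard part.

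Our first attempt will be the small-clique version with $k=O(1/\varepsilon)$ and $\Theta(n)$ cliques; here degree queries do not see the swap. A swap between two cliques merges structure in a way that gives a set of cost $O(k)$ whose weight-to-cost ratio changes by $1+\Theta(1/k)\ge1+\varepsilon$. We will then verify, using Lemma~\ref{lem:structure}\ref{it:half-density}-type reasoning, that in $G_0$ no set beats the clique ratio.
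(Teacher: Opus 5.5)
There is a genuine gap. After setting up the right indistinguishability framework, you never exhibit two instances and a vector $q$ that lie on opposite sides of the weak-separation gap. The plan you finally settle on also cannot give $\Omega(n^2)$. With $\Theta(n)$ cliques of size $k=O(1/\varepsilon)$, the positive graph has total adjacency-list length $n(k-1)=O(n/\varepsilon)$, so an algorithm can read every list with $O(n/\varepsilon)$ queries and locate any rewiring. This is exactly the objection you raised against the matching construction, and it applies here too: the swap is hidden among only $\Theta(nk)$ list positions, not $\Theta(n^2)$.

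The deeper problem is that you look for a set whose cost the plant \emph{decreases} by a factor $1+\varepsilon$, and that is what pushes you toward sparse graphs. In the dense two-clique base graph, every nonempty set other than the two cliques $K,R$ has cost $\Omega(n)$. A single swap changes any cost by at most a constant, and it only increases $\cost(K)$ and $\cost(R)$, so no violation can be created that way.

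The missing idea is to let the swap \emph{destroy} a set of cost exactly $0$, which makes the multiplicative gap infinite, and to concentrate $q$ on a single vertex. You already wrote down the right graphs. Let $G_0$ be two disjoint positive cliques $K,R$ of size $n/2$. Let $G_1$ be the result of your swap with $a,b\in K$ and $c,d\in R$, so $ab,cd$ become negative and $ac,bd$ become positive. Set $q_{u_0}=\frac12$ for one $u_0\in K$ and $q_u=0$ otherwise, so $q\ge0$.

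\begin{itemize}
\item In $G_0$, $\cost(K)=0<q(K)/(1+\varepsilon)$ for every $\varepsilon$. Reporting approximate feasibility is therefore incorrect, and the algorithm must return a violated set.
\item In $G_1$, all degrees are unchanged but the positive graph is connected. Every nonempty proper $S$ has a crossing positive pair, so $\cost(S)\ge\frac12$, and $V$ contains negative pairs, so $\cost(V)\ge1$. Since $q(S)\le\frac12$ for all $S$, $q$ is exactly feasible. No violated set exists, and the algorithm must report feasibility.
\end{itemize}

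No fine-tuning of $q$ at scale $1/n^2$ is needed. From here your coupling argument finishes the proof, and it is the paper's own: choose the swap at random, randomly order the adjacency lists so that only four entries differ among $\Theta(n^2)$, and apply Yao's principle.
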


\begin{proof}
Let $n=2k$, partition $V=K\mathbin{\dot\cup}R$ with $|K|=|R|=k$, fix $u_0\in K$, and set $q_{u_0}=1/2$ and $q_u=0$ for $u\ne u_0$.  We may even reveal the partition
$K\mathbin{\dot\cup}R$ to the algorithm.  In the first instance $G_0$, the positive graph is the disjoint union of the two cliques on $K$ and $R$; then $\cost_{G_0}(K)=0<q(K)$, so the separator must report a violation.
For the second instance $G_1$, choose distinct $a,b\in K\setminus\{u_0\}$ and $c,d\in R$, and switch the signs of four pairs:
\[
  ab,cd:+\longrightarrow- ,
  \qquad
  ac,bd:-\longrightarrow+ .
\]
Every positive degree remains $k-1$, while the positive graph becomes connected.  Hence every nonempty proper $S\subset V$ has a positive pair crossing its boundary, and
$\cost_{G_1}(S)\ge1/2$.  The set $V$ contains negative pairs and also has cost at least $1/2$.  Since $q(S)\le1/2$ for every $S$, the vector $q$ itself is dual feasible for $G_1$,
    so the separator must report approximate feasibility.

Now choose the four switched pairs uniformly subject to the conditions above, randomly order the positive adjacency lists of $G_0$, and couple the lists of $G_1$ by replacing, in the same positions, the deleted positive neighbors by the added ones. 
All degrees are identical, and all but four adjacency-list queries give the same answer, hidden among $\Theta(n^2)$ possible queries.  Until one of these entries is queried, the two transcripts are identical.  Thus $o(n^2)$ queries distinguish the two distributions with probability only $o(1)$.  Yao's minimax principle gives the claimed $\Omega(n^2)$ randomized lower bound.
\end{proof}

%% file: seeded_cc.tex
\section{Application to seeded correlation clustering}\label{app:seeded-cc}
As defined in the introduction, the input consists of a signed graph and a set $R\subseteq V$ of seeds, and every cluster must contain at most one seed.  Equivalently, this is the special case of constrained correlation clustering~\cite{vanZuylen+09} with hostile set $\binom R2$ and no friendly pairs.  


Let
$
  \mathcal F_R=\{S\subseteq V:|S\cap R|\le1\}.
  $
The \emph{seeded cluster LP} and its dual are the specialization of the constrained cluster LP of~\cite{towards2} for $F = \emptyset$, $H = \binom{R}{2}$: 
\begin{align}
  \min\quad &\sum_{S\in\mathcal F_R}\cost(S)z_S
  &\text{s.t.}\quad&
  \sum_{\substack{S\in\mathcal F_R\\u\in S}}z_S=1\quad(u\in V),
  &z_S&\ge0\quad(S\in\mathcal F_R),                             \label{eq:seeded-cluster-primal}\\
  \max\quad &\sum_{u\in V}q_u
  &\text{s.t.}\quad&
  q(S)\le\cost(S)\quad(S\in\mathcal F_R),
  &q_u&\in\mathbb R\quad(u\in V).                               \label{eq:seeded-cluster-dual}
\end{align}

\begin{theorem}[Seeded cluster LP]\label{thm:seeded-cluster-lp}
For every \(R\subseteq V\) and every \(\varepsilon>0\), there is a deterministic \(2^{\poly(1/\varepsilon)}n^{O(1)}\)-time algorithm 
computing $(1+\varepsilon)$-approximate solutions $q, z$ to 
~\eqref{eq:seeded-cluster-dual} and ~\eqref{eq:seeded-cluster-primal}
\end{theorem}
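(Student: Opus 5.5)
We will prove Theorem~\ref{thm:seeded-cluster-lp} by reproducing the proof of Theorem~\ref{thm:main} for the restricted dual~\eqref{eq:seeded-cluster-dual}, and then invoking the ellipsoid argument of Corollary~\ref{cor:lp-solutions} without change. The goal is a seeded weak separator. Given $q$, it either returns a legal $T\in\mathcal F_R$ with $q(T)>\cost(T)$, or certifies that $q/(1+\varepsilon)$ satisfies every constraint indexed by $\mathcal F_R$. It should run in time $2^{\poly(1/\varepsilon)}(n+\langle q\rangle)^{O(1)}$. We then run the ellipsoid method with this oracle, and collect all returned legal constraints together with all singletons, which are legal. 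The primal restricted to these sets is a relaxation of~\eqref{eq:seeded-cluster-primal} of polynomial support, and solving it gives the primal solution $z$. This step is identical to the argument for Corollary~\ref{cor:lp-solutions}.

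For the separator, let $S^*$ minimize $\rho$ over legal sets with $q(S^*)>0$, and define $\rho^*$ accordingly. The proof of Lemma~\ref{lem:structure} uses only that the sets compared with $S^*$ are admissible. Since $\mathcal F_R$ is downward closed, every part of a partition $S^*=A\mathbin{\dot\cup}B$ and every singleton is legal. Hence parts~\ref{it:global-support}, \ref{it:half-density} and~\ref{it:positive-mass} hold for all legal $T$. Parts~(b), \ref{it:deletion} and~\ref{it:pivots} are identities valid for any set. The three-regime analysis then goes through as follows. In the large-cost regime, Corollary~\ref{cor:singleton} produces singletons, which are always legal. The localization lemma, Lemma~\ref{lem:threshold-localization}, needs only that $S^*$ is $1/2$-cut-dense and that the degree bounds hold, so it still yields some $U(v)\supseteq S^*$ with the size bound~\eqref{eq:deterministic-localization-size}. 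The universe $U(v)$ itself may contain several seeds. We only need the output of recovery to be legal.

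The recovery steps must be modified so that their minimization runs over legal subsets only, while the reference point $S^*$ remains legal.
\begin{itemize}
\item \emph{Small-cost regime.} For the universe $U$ and size $t$, we produce two candidates. The first is the best $t$-subset of $U\setminus R$ under the linear scores $\sigma_u$. The second is obtained, for each seed $r\in U\cap R$, by taking $\{r\}$ together with the best $t-1$ vertices of $U\setminus R$. The proof of Lemma~\ref{lem:near} then compares $T$ with $S^*$ inside the correct branch, the one determined by $|S^*\cap R|\in\{0,1\}$. The quadratic correction is still bounded by $|U\setminus S^*|^2$.
\item \emph{Intermediate regime.} In Lemma~\ref{lem:dense-quadratic}, we add $R\cap V_0$ to the family $\mathcal H$ of sets that refine the classes. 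This keeps $|\mathcal C|=2^{\poly(1/\xi)}$. Each box LP~\eqref{eq:aggregated-statistic-LP} also receives the constraint $\sum_{C\subseteq R}y_C\le1$.
\end{itemize}

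The main obstacle is the rounding step in this modified box LP. The new constraint must survive rounding, and the approximation argument must remain valid when the minimum is taken only over legal sets. Rounding down $y_C\mapsto\lfloor y_C\rfloor$ can only decrease the seed count, so the output stays legal. The basic-solution argument still leaves at most $2D+1$ fractional variables, which is absorbed by requiring $n_0\ge 4(D+1)/\eta$. For the comparison, the box containing $z(X^\star)$ is taken with $X^\star$ the legal optimum, so the LP is feasible and the error chain of the lemma applies unchanged. We also need the deletion of coordinates with $b_u\ge n$ to preserve legality. It does, because removing vertices keeps a set in $\mathcal F_R$. With these changes, the case analysis of Section~\ref{sec:together} yields a legal negative-$L$ candidate whenever $\rho^*<1/(1+\varepsilon)$, and the running-time bound is unchanged.
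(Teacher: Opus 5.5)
Your proposal is correct and follows essentially the same route as the paper. Downward closedness of $\mathcal F_R$ preserves half density; the large-cost and localization steps are unchanged; small-cost recovery splits into a seedless branch and a one-seed branch; and intermediate recovery refines the classes by seed membership, adds the packing constraint $\sum_{C\subseteq R}y_C\le1$, and enlarges the exhaustive-search cutoff to absorb the extra fractional variable.

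The one point you leave implicit is the degenerate zero-cost case. The initial clique-component test of Section~\ref{sec:together} must only consider components containing at most one seed; otherwise it can return an illegal $K$, which is not a valid separating constraint for the seeded dual. Likewise, the zero-optimum check in the ellipsoid wrapper must recognize that a clique component containing two seeds forces a positive optimum. The paper handles both explicitly.
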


\begin{proof}
We indicate the few changes to the proof of Theorem~\ref{thm:main}.  For dual separation, let \(S^*\) minimize \(\cost(S)/q(S)\) over \(S\in\mathcal F_R\) with \(q(S)>0\).  
The family \(\mathcal F_R\) is downwards closed:
\[
  S\in\mathcal F_R,\quad A\subseteq S
  \quad\Longrightarrow\quad A\in\mathcal F_R .
\]
Therefore both sides of every partition \(S^*=A\mathbin{\dot\cup}B\) are legal, so the proof of Lemma~\ref{lem:structure}\ref{it:half-density} still gives
\[
  e^+(A,B)\ge\frac{|A||B|}{2}.
\]
The remaining identities used by localization hold for arbitrary sets, while the positive-mass argument uses only singletons, which are legal.  Hence the large-cost argument and
the threshold-localization construction are unchanged; the localized universe itself may contain several seeds.

In the small-cost regime, modify the fixed-cardinality optimization following~\eqref{eq:fixed-cardinality} by maximizing the retained score over legal sets.  Namely, take the best
option between (a) the
\(s\) highest-scoring vertices of \(U\setminus R\), and (b) the best-scoring vertex of \(U\cap R\) together with the \(s-1\) highest-scoring vertices of \(U\setminus R\).  Since \(S^*\subseteq U\) is feasible, the proof of Lemma~\ref{lem:near} is unchanged.

For intermediate-cost recovery, the preliminary test for a negative coefficient \(b_u\) is unchanged because every singleton is legal.  We now solve  the constrained version of Lemma~\ref{lem:dense-quadratic} with \(b\ge0\):
\[
  \min\left\{\frac12x^{\mathsf T}Ax+b^{\mathsf T}x:
  x\in\{0,1\}^U,\ \sum_{r\in R\cap U}x_r\le1\right\}.
\]
Refine the signature of every vertex \(u\) by the bit \(\one[u\in R]\), so that every resulting aggregate class \(C\in\mathcal C\) is contained in \(R\) or disjoint from it, and add to~\eqref{eq:aggregated-statistic-LP} the single constraint
\[
  \sum_{\substack{C\in\mathcal C\\C\subseteq R}}y_C\le1.
\]
This only doubles the number of classes.  An optimal basic solution now has at most \(2D+1\) variables strictly between their bounds.  Increasing the small-\(n_0\)
    exhaustive-search cutoff so that \(\Delta\ge2D+1\), and again rounding every class count down, preserves the seed constraint, cannot increase the linear objective, and changes every cut count by at most \(\Delta\).  The approximation proof and both running-time bounds of Lemma~\ref{lem:dense-quadratic} therefore remain unchanged asymptotically.

The zero-cost test scans the positive-edge components $K$ that are cliques and contain at most one seed, and returns any such $K$ with $q(K) > 0$.
With these modifications, every candidate returned 
is legal and the proof of Theorem~\ref{thm:main} applies verbatim.

Finally, as in Corollary~\ref{cor:lp-solutions}, we apply the ellipsoid algorithm to the dual LP with constraints in \(\mathcal F_R\).  Singleton columns still give feasibility and the same polynomial
bounds.  The zero optimum is recognizable: if a positive-edge component \(C\) is not a clique, put \(q_u=1/(2n)\) on \(C\) and zero elsewhere; if all components are cliques but some
\(C\) contains two seeds, put \(q_r=1/2\) on one such seed and zero elsewhere.  In either case every legal set of positive \(q\)-weight has cost at least \(q(S)\).  Thus the same weak optimization argument yields the stated primal and dual solutions.
\end{proof}

\begin{corollary}\label{cor:seeded-cc}
For every \(\varepsilon>0\), seeded correlation clustering admits a randomized \((1.92+\varepsilon)\)-approximation that succeeds with high probability and runs in
\(2^{\poly(1/\varepsilon)}n^{O(1)}\) time.
\end{corollary}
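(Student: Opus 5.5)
The plan is to combine Theorem~\ref{thm:seeded-cluster-lp} with the rounding theorem of~\citet{towards2} for constrained correlation clustering.

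First, we fix the target accuracy. Choose an internal accuracy $\varepsilon'=c\varepsilon$ for a small absolute constant $c$, so that the final factor $1.92(1+\varepsilon')$ is at most $1.92+\varepsilon$. Run the algorithm of Theorem~\ref{thm:seeded-cluster-lp} with $\varepsilon'$. In deterministic time $2^{\poly(1/\varepsilon)}n^{O(1)}$ it returns a feasible solution $z$ of~\eqref{eq:seeded-cluster-primal} with polynomial support and value at most $(1+\varepsilon')\OPT_{\mathrm{LP}}^{R}$. Every seeded clustering is an integral feasible point of~\eqref{eq:seeded-cluster-primal}, so $\OPT_{\mathrm{LP}}^{R}$ is at most the optimum seeded clustering cost. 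The support must be explicit and polynomial, with rational entries of polynomial encoding length. The support bound comes from the ellipsoid-restricted primal argument of Corollary~\ref{cor:lp-solutions}. The encoding bound follows by taking a basic solution of the restricted primal.

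Second, we invoke the rounding. The seeded LP is exactly the constrained cluster LP of~\cite{towards2} with friendly set $F=\emptyset$ and hostile set $H=\binom R2$. Their rounding theorem takes any feasible polynomial-support solution of that LP and outputs, in randomized polynomial time, a clustering that respects all hostile constraints. Its expected cost is at most $1.92$ times the LP value. Applying it to $z$ gives expected cost at most $1.92(1+\varepsilon')\OPT\le(1.92+\varepsilon/2)\OPT$. Every output cluster contains at most one seed, so the clustering is legal.

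Third, we boost the expectation guarantee to high probability. By Markov's inequality, one run has cost at most $(1+\varepsilon/8)$ times its expectation with probability $\Omega(\varepsilon)$. Repeat the rounding $O(\varepsilon^{-1}\log n)$ times independently and output the cheapest clustering, whose cost we can evaluate exactly. The result is a $(1.92+\varepsilon)$-approximation with probability $1-1/\poly(n)$. If $\OPT=0$, the zero-optimum test in the proof of Theorem~\ref{thm:seeded-cluster-lp} detects it and outputs the clique components directly.

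The main obstacle is matching interfaces. We must check that the $1.92$ guarantee of~\cite{towards2} is stated relative to the value of an arbitrary feasible LP solution, not the LP optimum. Our $z$ is only $(1+\varepsilon')$-optimal, and it is feasible exactly rather than approximately, because we restrict to the constraints in $\mathcal F_R$. If their theorem instead assumes an exactly optimal solution, I would first try to confirm that their analysis is local and per-pair, that is, based on the co-membership marginals $y$. In that case it applies verbatim to any feasible solution.
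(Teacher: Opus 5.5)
Your proposal is correct and follows the paper's proof. Like the paper, you take hostile set $\binom R2$ and an empty friendly set, use Theorem~\ref{thm:seeded-cluster-lp} to obtain a polynomial-support, nearly optimal seeded cluster-LP solution supported on legal clusters, and feed it into the rounding theorem of \citet{towards2} (their Theorem~4.18) with suitably small accuracy. Your interface concern does not arise, since the paper invokes that theorem exactly under the hypothesis of a polynomial-support nearly optimal solution; your Markov amplification and zero-optimum remark are harmless bookkeeping that the paper leaves implicit.
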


\begin{proof}
Take the hostile set to be \(\binom R2\) and the friendly set to be empty.  Theorem~\ref{thm:seeded-cluster-lp} supplies
the polynomial-support nearly-optimal constrained
cluster-LP solution assumed by Theorem~4.18 of~\citet{towards2} for constrained correlation clustering; every cluster in its support is legal.  Applying their (randomized) rounding theorem 
with suitably small accuracy
gives the claim.
\end{proof}


%% file: cluster_insertion.tex
\section{Application to cluster insertion}\label{app:cluster-insertion}


Let \(G=(V,E^+\mathbin{\dot\cup}E^-)\) be a complete signed instance, let \(w_{uv}>0\) be pair weights, and let \(\mathcal P\) be the current clustering.  Its weighted disagreement cost is
\[
  \cost_w(\mathcal P)=
  \sum_{\substack{uv\in E^+\\u\not\sim_{\mathcal P}v}}w_{uv}
  +\sum_{\substack{uv\in E^-\\u\sim_{\mathcal P}v}}w_{uv}.
\]
Write \(\cost_w^+(\mathcal P)\) and \(\cost_w^-(\mathcal P)\) for the two respective contributions.  For \(\emptyset\neq S\subseteq V\), write
\[
  \mathcal P+S=\{S\}\cup\{C\setminus S:C\in\mathcal P,\ C\setminus S\ne\varnothing\}
\]
for the clustering obtained by inserting \(S\).  Write \(u\sim_{\mathcal P}v\) when \(u\) and \(v\) currently belong to the same cluster and define the auxiliary pair weights
\begin{equation}\label{eq:insertion-auxiliary-weights}
  \widehat w_{uv}=w_{uv}\bigl(1+\one[u\sim_{\mathcal P}v]\bigr),
\end{equation}
preserving the sign of every pair. The associated cluster-column cost is
\begin{equation}\label{eq:insertion-auxiliary-cost}
  \widehat\cost(S)=
  \sum_{\substack{uv\in E^-\\u,v\in S}}\widehat w_{uv}
  +\frac12\sum_{\substack{uv\in E^+\\|\{u,v\}\cap S|=1}}\widehat w_{uv}.
\end{equation}
Finally, assign each vertex the nonnegative weight
\begin{equation}\label{eq:insertion-vertex-weight}
  q_u=
  \frac12\sum_{\substack{v:uv\in E^+\\u\not\sim_{\mathcal P}v}}w_{uv}
  +\sum_{\substack{v:uv\in E^-\\u\sim_{\mathcal P}v}}w_{uv}.
\end{equation}

\begin{lemma}[Cluster insertion is dual separation]\label{lem:insertion-identity}
For every \(S\subseteq V\),
\begin{equation}\label{eq:insertion-identity}
  \cost_w(\mathcal P+S)-\cost_w(\mathcal P)=\widehat\cost(S)-q(S).
\end{equation}
\end{lemma}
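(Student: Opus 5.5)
The identity is a pair-by-pair bookkeeping argument. Both sides decompose as sums over unordered pairs $uv$, so we classify each pair by its sign, by whether $u\sim_{\mathcal P}v$, and by $|\{u,v\}\cap S|\in\{0,1,2\}$, and compare contributions case by case. Pairs with neither endpoint in $S$ have the same co-membership status in $\mathcal P$ and $\mathcal P+S$: they are together in $\mathcal P+S$ iff they were together in $\mathcal P$, since removing $S$ from a cluster does not split the remainder. Such pairs contribute nothing to the left side. They also contribute nothing to the right side, because $\widehat\cost(S)$ involves only pairs meeting $S$, and $q(S)$ only sums terms at endpoints in $S$.

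For the pairs meeting $S$, we first rewrite $q(S)$ pairwise. A positive pair $uv$ that is separated in $\mathcal P$ contributes $\tfrac12 w_{uv}$ to $q$ at each endpoint. Such a pair therefore contributes $\tfrac12 w_{uv}\cdot|\{u,v\}\cap S|$ to $q(S)$. A negative pair $uv$ that is together in $\mathcal P$ contributes $w_{uv}$ at each endpoint, hence $w_{uv}|\{u,v\}\cap S|$ to $q(S)$. In $\mathcal P+S$, a pair with both endpoints in $S$ is together, and a pair with exactly one endpoint in $S$ is separated.

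We then check the following cases.

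\emph{Positive, together in $\mathcal P$, one endpoint in $S$.} The left side is $w_{uv}$, since the pair becomes cut. On the right, $\widehat\cost$ gives $\tfrac12\widehat w_{uv}=w_{uv}$ and $q$ gives $0$.

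\emph{Positive, together, both endpoints in $S$.} Both sides are $0$.

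\emph{Positive, separated, one endpoint in $S$.} The left side is $0$. The right side is $\tfrac12 w_{uv}-\tfrac12 w_{uv}=0$.

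\emph{Positive, separated, both endpoints in $S$.} The left side is $-w_{uv}$. The right side is $0-w_{uv}$.

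\emph{Negative, together, both endpoints in $S$.} The left side is $0$. The right side is $\widehat w_{uv}-2w_{uv}=0$.

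\emph{Negative, together, one endpoint in $S$.} The left side is $-w_{uv}$. The right side is $0-w_{uv}$.

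\emph{Negative, separated, both endpoints in $S$.} The left side is $w_{uv}$. The right side is $\widehat w_{uv}=w_{uv}$.

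\emph{Negative, separated, one endpoint in $S$.} Both sides are $0$.

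Summing over all pairs gives \eqref{eq:insertion-identity}. There is no real obstacle here. The only point requiring care is the observation that pairs outside $S$ keep their status, together with the correct factor $1+\one[u\sim_{\mathcal P}v]$ in $\widehat w$ for the two ``together'' cases.
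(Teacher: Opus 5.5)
Your proof is correct and follows the paper's own argument: a pair-by-pair check by sign, current co-membership in $\mathcal P$, and number of endpoints in $S$, and all eight of your cases match. You also make explicit that pairs disjoint from $S$ keep their status in $\mathcal P+S$ and contribute to neither side, which the paper leaves implicit.
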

\begin{proof}
It suffices to check one pair \(uv\).  If \(uv\) is positive and currently internal, selecting exactly one endpoint creates cost \(w_{uv}\); its auxiliary weight is \(2w_{uv}\), so~\eqref{eq:insertion-auxiliary-cost} charges exactly \(w_{uv}\).  If it is positive and currently external, selecting both endpoints saves \(w_{uv}\), while selecting only one changes nothing; the auxiliary boundary charge is \(w_{uv}/2\) and each selected endpoint contributes \(w_{uv}/2\) to \(q\).  If \(uv\) is negative and currently internal, selecting exactly one endpoint saves \(w_{uv}\); its auxiliary weight is \(2w_{uv}\) and each endpoint contributes \(w_{uv}\) to \(q\).  Finally, if it is negative and currently external, only selecting both endpoints changes the cost, increasing it by \(w_{uv}\), exactly its auxiliary internal cost.  Summing over all pairs proves the identity.
\end{proof}

Thus an improving insertion is precisely a violated cluster-dual constraint for the auxiliary weighted instance.
In the iterated-flipping algorithm, negative pairs have weight \(1\) and positive pairs have weight in \([1,2]\).  Therefore each weight in~\eqref{eq:insertion-auxiliary-weights} lies in \([1,4]\).


We record here the extension of Theorem~\ref{thm:main} needed above.

\begin{theorem}[Bounded-weight weak separation]\label{thm:bounded-weight-separation}
Fix \(B\ge1\).  Let every pair of a complete signed instance have a rational weight \(a_{uv}\in[1,B]\), and define
\[
  \cost_a(S)=
  \sum_{\substack{uv\in E^-\\u,v\in S}}a_{uv}
  +\frac12\sum_{\substack{uv\in E^+\\|\{u,v\}\cap S|=1}}a_{uv}.
\]
Given \(q\in\mathbb Q^V\) and \(\eta>0\), a deterministic algorithm either finds \(S\subseteq V\) with \(q(S)>\cost_a(S)\), or certifies that
\[
  \frac{q(S)}{1+\eta}\le\cost_a(S)\qquad\text{for every }S\subseteq V.
\]
Its running time is \(2^{\poly_B(1/\eta)}(n+\langle a\rangle+\langle q\rangle)^{O(1)}\).
\end{theorem}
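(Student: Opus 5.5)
We will rerun the proof of Theorem~\ref{thm:main} with weighted costs, tracking how each numerical constant depends on $B$. Define the ratio problem with $\cost_a$ in place of $\cost$, let $S^*$ be an optimum, and keep $s=|S^*|$, $c=\cost_a(S^*)$. As before, acceptance is correct when $\rho^*\ge 1/(1+\eta)$, so we only have to handle the gap case $L_a(S^*)=\cost_a(S^*)-q(S^*)<-\eta c$. The first step is to redo Lemma~\ref{lem:structure} with weights. Identity~\eqref{eq:partition-identity} becomes $\cost_a(A)+\cost_a(B)=\cost_a(A\cup B)+a^+(A,B)-a^-(A,B)$, where $a^\pm(A,B)$ is the weight of crossing pairs of each sign. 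Optimality of $S^*$ then gives weighted half density: $a^+(A,B)\ge a^-(A,B)$. Because weights lie in $[1,B]$, this yields $e^+(A,B)\ge |A||B|/(B+1)$, so $S^*$ is $\delta$-cut-dense with $\delta=1/(B+1)$. The pivot identity~\eqref{eq:pivot-sum} and the positive-mass bound~\eqref{eq:positive-mass} continue to hold up to factors of $B$: for example, $\sum_{v\in S^*}|\Gamma^+(v)\triangle S^*|\le 2c$ holds because every weight is at least $1$, and $\rho^*q^+(T)\le B\binom{|T|}{2}+\cost_a(T)$.

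The three regimes are then treated with parameters that depend on $B$. \emph{Large cost}, $c\ge B s^2/\beta$: the weighted version of Corollary~\ref{cor:singleton} produces a singleton whose ratio is at most $(1+\beta)\rho^*$. \emph{Small cost}, $c\le\tau_B s^2$: the fixed-cardinality identity~\eqref{eq:fixed-cardinality} remains valid. Its linear score is now $\sigma_u=$ (weighted terms) $+q_u$, and the quadratic correction is a weighted sum over pairs inside $D$, which is at most $B|D|^2$. We keep the same neighborhood construction but use thresholds scaled for $\delta=1/(B+1)$, namely $d_A(u)\ge \delta s/2$. With these, the argument of Lemma~\ref{lem:near} gives $|U\setminus S^*|=O(Bc/(\delta s))$. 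The resulting error is $O(B^3c^2/s^2)$, which is below $\eta c$ once we take $\tau_B=\eta/\poly(B)$. \emph{Intermediate regime}: we apply Lemma~\ref{lem:threshold-localization} to the positive graph with $\delta=1/(B+1)$, $\eta_0<\delta^2/2$, and $\theta=\delta^2/4$. The low-degree set $L_s$ is defined with cap $D=O(B/\beta)$. The number of closure rounds $j$ is then a function of $B$ only, so $|U(v)|\le (B/\eta)^{O_B(1)}s$.

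For recovery, $L_a$ restricted to $U$ is again $\tfrac12x^{\mathsf T}Ax+b^{\mathsf T}x$. Now $A_{uv}\in[-B,B]$; specifically, $A_{uv}=a_{uv}$ for negative pairs and $-a_{uv}$ for positive pairs, since the positive pair internal to the set loses its boundary charges. We divide by $B$ and apply Lemma~\ref{lem:dense-quadratic} with accuracy $\xi=\eta\tau_B/(2B\cdot(|U|/s)^2)=\poly_B(\eta)$. This gives additive error below $\eta c/2$, and the final argument of Section~\ref{sec:lagrangian} goes through unchanged. Remark~\ref{rem:bit-complexity} extends with $\langle a\rangle$ added to the encoding length. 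The zero-optimum test also carries over: $\rho^*=0$ forces $S^*$ to be a positive clique with no positive edges leaving it, that is, a clique component.

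The main obstacle is the intermediate regime. The localization lemma needs $\eta_0<\delta^2$, where $\eta_0$ is the fraction of $S^*$ outside $L_s$, and with $\delta=1/(B+1)$ this is a much tighter requirement than in the unweighted case. So I would first check that the degree bound~\eqref{eq:dvs} survives weighting: it should give $|S^*\setminus L_s|\le 2c/(\text{cap}-s)$, and we need the cap large enough that this is at most $\delta^2 s/2$, while $D$ must stay $\poly(B)/\eta$. Since $c\le Bs^2/\beta$ in this regime, a cap of $\Theta(B^3 s/\beta)$ achieves both. All constants remain polynomial or exponential in $B$ only, which yields the stated $2^{\poly_B(1/\eta)}$ running time.
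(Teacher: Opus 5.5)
Your proposal is correct and follows essentially the same route as the paper. The weighted partition identity gives $\delta=1/(B+1)$ cut density, the weighted pivot and positive-mass bounds handle the singleton and near-clique regimes, and localization with $\theta=\delta^2/4$ followed by Lemma~\ref{lem:dense-quadratic} after division by $B$ handles the intermediate regime. One fix: the $L_s$ degree cap must be the $\Theta(B^3s/\beta)$ from your last paragraph, not the $D=O(B/\beta)$ you first wrote; this matches the paper's $Bs/(\beta\zeta)$ with $\zeta=\delta^2/4$. Your rescaled near-clique threshold $\delta s/2$ spells out what the paper leaves to ``$B$-dependent constants''.
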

\begin{proof}
We indicate the changes to the proof of Theorem~\ref{thm:main}.  We may assume $0<\eta\le1$ and set $\beta=\eta/10$.  If some set has positive $q$-weight, let $S^*$ minimize $\cost_a(S)/q(S)$ among such sets, and write $s=|S^*|$, $c=\cost_a(S^*)$, and $\rho^*=c/q(S^*)$; otherwise the conclusion is immediate.  For disjoint sets, let $a^\pm(A,B)$ denote the total $a$-weight of the positive or negative pairs across their cut.  The weighted partition identity is
\begin{equation}\label{eq:weighted-partition-identity}
  \cost_a(A)+\cost_a(B)
  =\cost_a(A\cup B)+a^+(A,B)-a^-(A,B).
\end{equation}
As in Lemma~\ref{lem:structure}\ref{it:half-density}, optimality of $S^*$ implies $a^+(A,B)\ge a^-(A,B)$ whenever $S^*=A\mathbin{\dot\cup}B$.  Since the pair weights lie in $[1,B]$,
\begin{equation}\label{eq:weighted-cut-density}
  e^+(A,B)\ge\frac{|A||B|}{B+1}.
\end{equation}
Thus $S^*$ is $\delta$-cut-dense in the positive-support graph for $\delta=1/(B+1)$.

The other structural estimates used in the main proof become
\begin{align}
  \sum_{v\in T}|\Gamma^+(v)\mathbin\triangle T|
  &\le2\cost_a(T),                                                   \label{eq:weighted-pivot-bound}\\
  \cost_a(X)&\le\cost_a(T)+B|X|\,|T\setminus X|
  &&(X\subseteq T),                                                  \label{eq:weighted-deletion-bound}\\
  \rho^*q^+(S^*)&\le B\binom{s}{2}+c.                                \label{eq:weighted-positive-mass}
\end{align}
The first holds because every mismatching pair has weight at least one, the second follows from~\eqref{eq:weighted-partition-identity}, and the third follows by applying optimality to the singletons of positive $q$-weight and summing their weighted positive degrees.

The only structural change in localization is the smaller cut density.  Put $\zeta=\vartheta=\delta^2/4$.  In the intermediate regime $c<B\binom{s}{2}/\beta$, define
\[
  P_s=\left\{u:d^+(u)\le s-1+\frac{Bs}{\beta\zeta}\right\},
  \qquad
  D_0=1+\frac{B}{\beta\zeta}.
\]
Equation~\eqref{eq:weighted-pivot-bound} gives $|S^*\setminus P_s|<\zeta s$, and every $u\in P_s$ satisfies $|\Gamma^+(u)|\le D_0s$.  Lemma~\ref{lem:threshold-localization}, applied with parameters $(\delta,\eta,\theta,D)=(\delta,\zeta,\vartheta,D_0)$, therefore produces a candidate universe containing $S^*$ of size
\[
  D_0\left(1+\frac{D_0}{\vartheta}\right)^{j+1}s
  =O_{B,\beta}(s),
\]
where $j$ depends only on $B$.

The three regime arguments now proceed as before with $B$-dependent constants.  In the large-cost regime,~\eqref{eq:weighted-positive-mass} gives the singleton argument with $B\binom{s}{2}$ in place of $\binom{s}{2}$.  In the small-cost regime,~\eqref{eq:weighted-pivot-bound} and~\eqref{eq:weighted-deletion-bound} give the same neighborhood reconstruction and fixed-cardinality linearization, with every error enlarged by at most a factor depending on $B$.  In the intermediate regime, the preceding localization bound keeps the weak-regularity error proportional to $s^2$.  The localized Lagrangian has off-diagonal coefficients in $[-B,B]$; after division by $B$, Lemma~\ref{lem:dense-quadratic} applies.  Choosing the regime thresholds and internal accuracies with the corresponding $B$-dependent factors proves the guarantee and the stated running time.
\end{proof}


\begin{theorem}\label{thm:approximate-insertion}
Let \(w_{uv}=1\) on negative pairs and \(w_{uv}\in[1,2]\) on positive pairs.  Given a clustering \(\mathcal P\) and \(\eta>0\), a deterministic algorithm either returns a genuinely improving insertion \(S\), or certifies that, for every \(S\subseteq V\),
\begin{equation}\label{eq:pointwise-insertion-certificate}
  \cost_w(\mathcal P)-\cost_w(\mathcal P+S)
  \le\theta q(S),
  \qquad
  \theta=\frac{\eta}{1+\eta},
\end{equation}
where \(q\) is defined by~\eqref{eq:insertion-vertex-weight}.  It runs in \(2^{\poly(1/\eta)}n^{O(1)}\) time when the weights have polynomial length.
\end{theorem}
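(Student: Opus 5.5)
The plan is to reduce Theorem~\ref{thm:approximate-insertion} directly to Theorem~\ref{thm:bounded-weight-separation}, using the identity of Lemma~\ref{lem:insertion-identity}.

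\emph{Step 1: build the auxiliary instance.} From $\mathcal P$ and $w$, compute the auxiliary weights $\widehat w$ of~\eqref{eq:insertion-auxiliary-weights} and the vertex weights $q$ of~\eqref{eq:insertion-vertex-weight}. Both take polynomial time, and their encoding lengths are polynomial in $n$ and in the length of $w$. By the remark preceding Theorem~\ref{thm:bounded-weight-separation}, every $\widehat w_{uv}$ lies in $[1,4]$. Hence Theorem~\ref{thm:bounded-weight-separation} applies with the fixed constant $B=4$, cost function $\cost_a=\widehat\cost$, vector $q$, and accuracy parameter $\eta$. Its running time is then $2^{\poly(1/\eta)}(n+\langle \widehat w\rangle+\langle q\rangle)^{O(1)}=2^{\poly(1/\eta)}n^{O(1)}$, since the $B$-dependence is absorbed into constants.

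\emph{Step 2: translate the two outcomes.} Suppose the separator returns $S$ with $q(S)>\widehat\cost(S)$. Then $S\neq\emptyset$ (for $S=\emptyset$ both sides vanish). By~\eqref{eq:insertion-identity}, $\cost_w(\mathcal P+S)-\cost_w(\mathcal P)=\widehat\cost(S)-q(S)<0$, so $S$ is a genuinely improving insertion, and we return it.

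Otherwise the separator certifies $q(S)\le(1+\eta)\widehat\cost(S)$ for all $S$. Rearranging,
\[
q(S)-\widehat\cost(S)\le q(S)-\frac{q(S)}{1+\eta}=\frac{\eta}{1+\eta}\,q(S)=\theta q(S).
\]
The left-hand side equals $\cost_w(\mathcal P)-\cost_w(\mathcal P+S)$ by Lemma~\ref{lem:insertion-identity}, which is exactly~\eqref{eq:pointwise-insertion-certificate}. This step needs no sign assumption on $q(S)$, although here $q\ge0$ anyway. The empty set is trivial, and if $\mathcal P+S$ is only meaningful for nonempty $S$, the quantifier can be restricted accordingly.

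\emph{Main obstacle.} The only substantive ingredient is Theorem~\ref{thm:bounded-weight-separation} itself. Given that result, the remaining points to check are that the weights really stay in a constant range $[1,4]$, so that $B$ is constant and the exponent is $\poly(1/\eta)$ with no dependence on $n$, and that the input-length terms are polynomial. Both follow from the weight hypotheses stated in the theorem.
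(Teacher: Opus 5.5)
Your proposal is correct and follows the paper's proof: apply Theorem~\ref{thm:bounded-weight-separation} with $B=4$ to the auxiliary instance~\eqref{eq:insertion-auxiliary-weights}, and use the identity~\eqref{eq:insertion-identity} to read a violated constraint as an improving insertion and the certificate $\widehat\cost(S)\ge q(S)/(1+\eta)$ as~\eqref{eq:pointwise-insertion-certificate}. Your added checks (weights in $[1,4]$, polynomial encoding lengths, the empty set) are routine details that the paper leaves implicit.
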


\begin{proof}
Apply Theorem~\ref{thm:bounded-weight-separation} with \(B=4\) to the auxiliary instance~\eqref{eq:insertion-auxiliary-weights}.  By~\eqref{eq:insertion-identity}, a violated
constraint is an improving insertion; otherwise \(\widehat\cost(S)\ge q(S)/(1+\eta)\) for every \(S\), giving~\eqref{eq:pointwise-insertion-certificate}.
\end{proof}

\mycomment{
The certificate is useful because it holds simultaneously for every comparison clustering.  If \(\mathcal O\) is any clustering, then
\begin{align}
  \sum_{C\in\mathcal O}
  \bigl(\cost_w(\mathcal P)-\cost_w(\mathcal P+C)\bigr)
  &\le\theta q(V)                                                     \notag\\
  &=\theta\bigl(\cost_w^+(\mathcal P)+2\cost_w^-(\mathcal P)\bigr)
   \le2\theta\cost_w(\mathcal P).                         \label{eq:aggregate-insertion-certificate}
\end{align}
Here each positive pair cut by \(\mathcal P\) contributes total weight \(w_{uv}\) to \(q(V)\), while each negative pair internal to \(\mathcal P\) contributes \(2w_{uv}\).

Starting from any clustering, repeatedly apply Theorem~\ref{thm:approximate-insertion} and perform every improving insertion that it returns.  In the iterated-flipping application all weights are half-integral, so every successful insertion decreases the objective by at least \(1/2\), while the objective is \(O(n^2)\).  Hence the procedure terminates after \(O(n^2)\) insertions with a clustering satisfying~\eqref{eq:aggregate-insertion-certificate}.
}

%% file: rounding_derandomization.tex
\section{Derandomizing the rounding algorithm}\label{app:rounding-derandomization}
For completeness, we record the standard conditional-expectation argument underlying the deterministic version of Theorem~\ref{thm:rounding}.  

\begin{proposition}\label{prop:rounding-derandomization}
Given an explicitly represented feasible cluster-LP solution $z$, one can deterministically construct a clustering $\mathcal A$ satisfying
$
  \cost(\mathcal A)\le\alpha_0\sum_S\cost(S)z_S
  $
  in polynomial time.
\end{proposition}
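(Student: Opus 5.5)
We derandomize each of the two procedures separately and output the cheaper clustering. Cluster-based rounding $\mathcal C$ gives expected cost $\sum_e\kappa_e$, and the analysis of Lemma~\ref{lem:variance-certificate} gives $\mathbb E[\cost(\mathcal P)]\le\sum_eB_e$. By~\eqref{eq:budget}, $\omega\sum_e\kappa_e+(1-\omega)\sum_eB_e=\alpha_0\sum_e\ell_e$. So if we deterministically obtain clusterings $\mathcal C^\ast$ and $\mathcal P^\ast$ with $\cost(\mathcal C^\ast)\le\sum_e\kappa_e$ and $\cost(\mathcal P^\ast)\le\sum_eB_e$, the cheaper of the two costs at most the convex combination, i.e.\ at most $\alpha_0\sum_S\cost(S)z_S$.

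\textbf{Pivot rounding.} We use a pessimistic estimator. At any state, given the alive set $W$ and the cost $\mathrm{paid}$ already incurred, define
\[
\Lambda:=\mathrm{paid}+\sum_{e\in\binom W2}B_e .
\]
Initially $\Lambda=\sum_eB_e$. At termination $\Lambda$ equals the cost. In one round, the accounting~\eqref{eq:one-round-accounting} combined with the certificate~\eqref{eq:variance-decomposition} shows that the expected value of $\mathrm{Pay}-\mathrm{Rel}$ over the random choices of the round is $\le0$. Hence $\mathbb E[\Lambda_{\text{next}}]\le\Lambda$.

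We fix the round's randomness sequentially, each time choosing an outcome that does not increase the conditional expectation of $\sum_e(\mathsf{Pay}(e)-\mathsf{Rel}(e))$:
\begin{enumerate}
\item First fix the pivot $u$. There are $|W|$ options, and the quantity $\sum_e(\operatorname{pay}_u(e)-\operatorname{rel}_u(e))$ is computable in closed form from $y$ and $y_{uvw}$ via~\eqref{eq:triangle-pay}.
\item Next fix the sampled set $S\ni u$, among the polynomially many support sets. Conditioned on $S$, the joins are independent with explicit probabilities $p_0,p_1$. The conditional expected pay and release of each pair is therefore a product formula, computable exactly.
\item Finally fix the join decisions $\mathsf X_v$ one vertex at a time. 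Conditioned on the decisions made so far, the expectation remains a sum over pairs of products of independent marginals, so each step again chooses the value not exceeding the current conditional expectation.
\end{enumerate}
Each choice preserves $\Lambda$'s non-increase, so after at most $n$ rounds the final cost is at most $\sum_eB_e$.

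\textbf{Cluster rounding.} This is derandomized analogously: at each step, choose the next set from the support of $z$ so as to minimize the conditional expected remaining cost. That quantity is explicit, because for any alive pair the probability of ending together under the remaining process has the closed form of~\cite[Lemma~6]{Cao+24arXiv} with the same $y$ values restricted to alive vertices. Equivalently, this is the standard method of conditional expectations applied to a random order on the support.

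\textbf{Main obstacle.} The delicate point is that the potential being tracked is not the true expected cost but the budget-based upper bound. One must check that after fixing partial randomness within a round, the conditional expectation of $\sum(\mathsf{Pay}-\mathsf{Rel})$ is still exactly computable, since the $\Theta$ certificate holds only in aggregate over pivots. This is fine because we only need the averaged inequality at the pivot-choice step; afterwards we simply track the exact conditional expectation rather than re-invoking the certificate.

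All quantities are rational, and the computations involve polynomially many terms in $n$ and the support size, so the procedure runs in polynomial time.
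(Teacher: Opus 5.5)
Your proposal is correct and follows essentially the same route as the paper: the same potentials ($\mathrm{paid}+\sum_{e\in\binom W2}B_e$ for pivot rounding, fixed in the order pivot, then $S$, then the joining coins, and the exact conditional expected cost $\mathrm{paid}+\sum_{e\in\binom W2}\kappa_e$ for cluster rounding), combined through the convex-combination identity \eqref{eq:budget}. One small point: in the cluster-rounding step, restrict the choice to support sets with $S\cap W\neq\emptyset$. Such a set that does not increase the potential always exists, because conditioning on the next set that meets $W$ preserves the expectation; without this restriction, a tie could repeatedly select a set disjoint from $W$ and the procedure would make no progress.
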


\begin{proof}
We derandomize the two procedures separately, then take the cheaper of the two.  For cluster-based rounding, if $W$ is the alive set and $P$ is the cost already incurred, use the potential
\[
  \Phi_{\mathcal C}:=P+\sum_{e\in\binom W2}\kappa_e.
\]
Under randomized cluster-based rounding, the probability that the final clustering causes a disagreement with edge $e$
    is $\kappa_e$ by~\eqref{eq:clock-cost}. 
Therefore $\Phi_{\mathcal C}$ is exactly the
conditional expected final cost, given the choices already made.
    Thus, there exists some set $S$ in the support with $S\cap W\ne\emptyset$ that does not increase it; choose such a set deterministically and output $S\cap W$.  
     Repetition gives a clustering $\mathcal C$ with
     $
  \cost(\mathcal C)\le\sum_e\kappa_e.
  $

  Likewise, for pivot rounding use the potential
\[
  \Phi_{\mathcal P}:=P+\sum_{e\in\binom W2}B_e.
\]
The one-round inequality proved in
Lemma~\ref{lem:variance-certificate} says exactly that the expected value
of $\Phi_{\mathcal P}$ after the next complete pivot round is at most its
current value.
Fix the random choices in that round successively: first the pivot $u$,
then the sampled LP cluster $S\ni u$, and finally the independent joining
coins.  At every step, select an outcome for which the conditional
expectation of the post-round potential is no larger than before that
choice.
These expectations are computable in polynomial time:
after $u$ and $S$ are fixed, the potential is a sum of functions of pairs of joining indicators, and the unfixed indicators remain independent Bernoulli variables.  Enumerating the
polynomial support of $z$ similarly evaluates the expectations needed to fix $u$ and $S$.  The resulting cluster contains $u$; repeating this process produces a clustering $\mathcal P$ with
$
  \cost(\mathcal P)\le\sum_eB_e.
  $
\end{proof}

%% file: covering_gap.tex
\section{The cluster LP differs from its covering relaxation}\label{app:covering-gap}

For comparison with the standard cluster LP~\eqref{eq:cluster-primal}, define its direct covering relaxation by replacing every constraint $\sum_{S\ni u}z_S=1$ with $\sum_{S\ni
    u}z_S\ge1$.  Write the optimum values of these two programs as $\mathrm{LP}_{=}$ and $\mathrm{LP}_{\ge}$, respectively.  Their duals have the same constraints $q(S)\le\cost(S)$
    for every $S\subseteq V$, but the former allows $q\in\mathbb R^V$, whereas the latter requires $q\ge0$.  The next example shows that this distinction is genuine:
there is a complete unweighted instance on eight vertices for which
    $
  \mathrm{LP}_{\ge}=\frac{35}{4}<9=\mathrm{LP}_{=}.
  $
Since a nonnegative optimum standard-dual solution of value $9$ would also be feasible for the covering dual, 
      this implies that every optimum solution of the standard cluster-LP dual has a negative coordinate (in this example).  

Let
\[
  V=A\mathbin{\dot\cup}B\mathbin{\dot\cup}D\mathbin{\dot\cup}\{h,r\},
  \qquad
  A=\{a_1,a_2\},\quad B=\{b_1,b_2\},\quad D=\{d_1,d_2\}.
\]
The positive pairs are
\[
  E^+=\{rv:v\ne r\}\ \cup\ \binom{A\cup\{h\}}2\ \cup\ \{uv:u\in A\cup B,\ v\in D\},
\]
and all remaining pairs are negative.  Thus $r$ is positive to every other vertex, $A\cup\{h\}$ is a positive triangle, and all pairs between $A\cup B$ and $D$ are positive.

For the primal covering LP, assign value $1/2$ to each of the five sets
\[
  \{b_1\},\quad \{b_2\},\quad A\cup\{h,d_1,r\},\quad A\cup\{h,d_2,r\},\quad B\cup D\cup\{r\}.
\]
This solution is feasible and has value $35/4$.  For the equality LP, the integral solution with two clusters $\{b_1\}$ and $V\setminus\{b_1\}$ 
has value $9$.

It remains to certify that these upper bounds are tight.  Consider the two dual vectors
\[
\begin{array}{c|ccccc}
 &a_i&b_i&d_i&h&r\\ \hline
 q^{\ge}&7/8&3/2&5/4&3/2&0\\
 q^{=}&1&3/2&3/2&3/2&-1/2.
\end{array}
\]
\mycomment{
They have values $35/4$ and $9$, and $q^{\ge}\ge0$.  We verify their feasibility simultaneously.  Define $g(S)=e^+(S)-e^-(S)$.  Since
\[
  \cost(S)=\frac12\sum_{u\in S}d(u)-g(S),
\]
a vector $q$ is dual feasible precisely when
\begin{equation}\label{eq:covering-gap-feasibility}
  g(S)\le\sum_{u\in S}\left(\frac{d(u)}2-q_u\right)
  \qquad(S\subseteq V).
\end{equation}
For a given $S$, put
\[
  a=|S\cap A|,\quad b=|S\cap B|,\quad \delta=|S\cap D|,\quad
  e=\one[h\in S],\quad t=\one[r\in S].
\]
Then
\[
  g(S)=2\left[\binom a2+ae+\delta(a+b)+t(a+b+\delta+e)\right]
       -\binom{a+b+\delta+e+t}{2}.
\]
For fixed $a,\delta,t$, maximizing this expression over $b\in\{0,1,2\}$ and $e\in\{0,1\}$ gives the following matrices; rows are indexed by $a=0,1,2$ and columns by $\delta=0,1,2$:
\[
  M_0=\begin{pmatrix}0&1&2\\1&1&2\\3&4&4\end{pmatrix},
  \qquad
  M_1=\begin{pmatrix}1&4&6\\3&4&7\\6&8&9\end{pmatrix}.
\]
The positive degrees are $d(a_i)=d(d_i)=5$, $d(b_i)=d(h)=3$, and $d(r)=7$.  Hence the right-hand side of~\eqref{eq:covering-gap-feasibility} equals
\[
  \frac{13}{8}a+\frac54\delta+\frac72t
  \quad\text{for }q^{\ge},
  \qquad
  \frac32a+\delta+4t
  \quad\text{for }q^{=}.
\]
Both expressions dominate the corresponding entry of $M_t$ for every $a,\delta\in\{0,1,2\}$ and $t\in\{0,1\}$.  }
By direct computation, both vectors are feasible for their respective duals, with values $35/4$ and $9$.